\algrenewcommand\algorithmicrequire{\textbf{Input:}}
\algrenewcommand\algorithmicensure{\textbf{Output:}}
\newcommand{\prob}{\mathbb{P}}
\newcommand{\mean}{\mathbb{E}}
\newcommand{\med}{\text{med}}
\newcommand{\Var}{\mathbb{V}}
\newcommand{\cov}{\text{cov}}
\newcommand{\mode}{\text{mode}}
\newcommand{\stdev}{\text{sd}}
\newcommand{\ud}{\mathop{}\!\mathrm{d}}
\newcommand{\cond}{\,|\,}
\newcommand{\e}{e} 
\newcommand{\pr}{^{\backprime}} 
\DeclareMathOperator{\Normal}{\mathcal{N}}
\DeclareMathOperator{\logN}{\text{log}\mathcal{N}}
\DeclareMathOperator{\Poi}{{Poi}}
\DeclareMathOperator{\GP}{\mathcal{GP}}
\DeclareMathOperator{\Unif}{\mathcal{U}}
\newcommand{\Bb}{\mathbf{b}}
\newcommand{\Bk}{\mathbf{k}}
\newcommand{\Br}{\mathbf{r}}
\newcommand{\Bx}{\mathbf{x}}
\newcommand{\By}{\mathbf{y}}
\newcommand{\BB}{\mathbf{B}}
\newcommand{\BH}{\mathbf{H}}
\newcommand{\BK}{\mathbf{K}}
\newcommand{\BL}{\mathbf{L}}
\newcommand{\BR}{\mathbf{R}}
\newcommand{\BS}{\mathbf{S}}
\newcommand{\BSigma}{\boldsymbol{\Sigma}}
\newcommand{\Bpsi}{\boldsymbol{\psi}}
\newcommand{\Bbeta}{\boldsymbol{\beta}}
\newcommand{\Bmu}{\boldsymbol{\mu}}
\newcommand{\Btheta}{\boldsymbol{\theta}}
\newcommand{\BLambda}{\boldsymbol{\Lambda}}
\newcommand{\Bphi}{\boldsymbol{\phi}}
\newcommand{\epsi}{\varepsilon}
\newcommand{\reals}{\mathbb{R}}
\newcommand{\realsp}{\mathbb{R}_{+}}
\newcommand{\Dset}{\mathscr{D}}
\newcommand*{\half}[1][2]{\frac{1}{#1}}
\newcommand{\onevector}{\mathds{1}}
\newcommand{\zeromatrix}{\mathbf{0}}
\newcommand{\Bzeros}{\mathbf{0}}
\newcommand{\Id}{\mathbf{I}}
\newcommand{\bvn}{\textnormal{BvN}} 
\newcommand{\myquad}{\quad\quad}
\newcommand{\eqdef}{\triangleq}
\newcommand{\T}{^{\top}}
\newcommand{\Tpr}{'{}^{\top}}
\newcommand{\Tinv}{^{-\top}}
\newcommand{\diag}{\operatorname{diag}}
\newcommand{\bigO}{\mathcal{O}} 
\newcommand{\indic}{\mathds{1}}
\newcommand{\simiid}{\sim_{\text{i.i.d.}}}
\newcommand{\bu}{\bullet}
\newcommand{\sigmaf}{\sigma_s}
\newcommand{\n}{b}
\newcommand{\xt}{t} 
\newcommand{\Bxobs}{\Bx_{\textnormal{o}}}
\newcommand{\blfi}{Bayesian LFI}
\newcommand{\BLFI}{BLFI}
\newcommand{\owen}{Owen's T function}
\newcommand{\TV}{\textnormal{TV}}
\newcommand{\hatgamma}{\hat{\gamma}}
\newcommand{\tildepiapprox}{\tilde{\pi}}
\newcommand{\tpi}{\tilde{p}}
\newcommand{\accratio}{\alpha}
\newcommand{\ddata}{\Dset}
\newcommand{\logu}{\tilde{u}}
\newcommand{\conderr}{\mathcal{E}} 
\newcommand{\unconderr}{E} 
\newcommand{\accrejindic}{\kappa} 
\newcommand{\BLL}{\BLambda} 
\newcommand{\AcqParams}{NextUnconfidentTransition}
\newcommand{\epoe}{EPoE}
\newcommand{\epoer}{EPoEr}
\newcommand{\naive}{naive}
\newcommand{\mh}{MH}
\newcommand{\mhblfi}{MH-BLFI}
\newcommand{\gpmh}{GP-MH}
\renewcommand\cite{\citep}
\newcommand{\appe}{\textnormal{Appendix}}
\def\app#1#2{%
  \mathrel{%
    \setbox0=\hbox{$#1\sim$}%
    \setbox2=\hbox{%
      \rlap{\hbox{$#1\propto$}}%
      \lower1.1\ht0\box0%
    }%
    \raise0.25\ht2\box2%
  }%
}
\let\originalleft\left
\let\originalright\right
\renewcommand{\left}{\mathopen{}\mathclose\bgroup\originalleft}
\renewcommand{\right}{\aftergroup\egroup\originalright}
\begin{document}

\title{Approximate Bayesian inference from noisy likelihoods with Gaussian process emulated MCMC}

\author{\name Marko J\"{a}rvenp\"{a}\"{a} \email m.j.jarvenpaa@medisin.uio.no \\
       \addr Department of Biostatistics, University of Oslo, Norway
       \AND
       \name Jukka Corander \email jukka.corander@medisin.uio.no \\
       \addr Department of Biostatistics, University of Oslo, Norway\\
       Department of Mathematics and Statistics, University of Helsinki, Finland\\
       Wellcome Trust Sanger Institute, United Kingdom}


\editor{}

\maketitle

\begin{abstract}
We present a framework for approximate Bayesian inference when only a limited number of noisy log-likelihood evaluations can be obtained due to computational constraints, which is becoming increasingly common for applications of complex models. We model the log-likelihood function using a Gaussian process (GP) and the main methodological innovation is to apply this model to emulate the progression that an exact Metropolis-Hastings (\mh{}) sampler would take if it was applicable. Informative log-likelihood evaluation locations are selected using a sequential experimental design strategy until the \mh{} accept/reject decision is done accurately enough according to the GP model. The resulting approximate sampler is conceptually simple and sample-efficient. It is also more robust to violations of GP modelling assumptions compared with earlier, related ``Bayesian optimisation-like'' methods tailored for Bayesian inference. We discuss some theoretical aspects and various interpretations of the resulting approximate \mh{} sampler, and demonstrate its benefits in the context of Bayesian and generalised Bayesian likelihood-free inference for simulator-based statistical models. 
\end{abstract}

\begin{keywords}
  approximate Bayesian inference, Markov chain Monte Carlo, Gaussian process, likelihood-free inference, sequential experimental design
\end{keywords}

\section{Introduction} \label{sec:intro}

Standard computational methods for Bayesian inference, such as those based on Markov chain Monte Carlo (MCMC), require a large number of likelihood evaluations and are hence poorly suited as such when the likelihood function is expensive to evaluate. Additional challenge is brought by noisy evaluations which occurs, for example, in the context of likelihood-free inference (LFI) where the intractable likelihood function is itself estimated from forward simulations of the statistical model, see \citet{Marin2012,Sisson2019,Cranmer2019}. In particular, in the synthetic likelihood (SL) method \cite{Wood2010,Price2018}, typically hundreds or thousands of repeated simulations are needed to approximate the likelihood function at each evaluation location which can be costly and produces noisy estimates. Other situations with expensive likelihood evaluations include parameter estimation in complex ordinary/partial differential equations with tractable noise models (e.g.~\citet{Cleary2021,Paun2022}) and Bayesian machine learning with large data sets (e.g.~\citet{Korattikara2014,Bardenet2017}). 

Approximate Bayesian inference methods based on GP surrogate models have been proposed to relieve the computational challenges. 
In particular, recently \citet{Jarvenpaa2019_sl} proposed such a framework when a limited number (e.g.~less than $10^3$) of noisy log-likelihood evaluations can only be obtained which we call ``\blfi{}'' (\BLFI{}) in this paper. 
\BLFI{} is related to so-called probabilistic numerics methods \citep{Hennig2015,Cockayne2017,hennig_osborne_kersting_2022}. The key idea is to frame the computation of the posterior density itself as a Bayesian inference task, in the same spirit as global, derivative-free optimisation is framed as an inference task in acclaimed Bayesian optimisation methodology. In \BLFI{} the log-likelihood function is modelled with GP which is further used to form an estimator for the posterior density. 
\emph{Active learning} is used to maximise the information brought by the limited budget of log-likelihood evaluations. 

\BLFI{} and other related approaches such as the operationally similar \emph{Bayesian optimisation for likelihood-free inference\footnote{BLFI and especially BOLFI should not be confused with ``standard'' Bayesian optimisation methods that are designed only for global, derivative-free optimisation. Although we focus on LFI as the naming convention implies, the target log-likelihood can in principle be replaced by any log-density of interest and the associated prior density neglected which allows more general use of B(O)LFI and the methods proposed in this paper.}} (BOLFI) framework by \citet{Gutmann2016} use a global GP surrogate model over the whole parameter space to achieve the sample-efficiency. This however involves some difficulties: 
First, such global modelling is sensible when the target density is expected to be highly multimodal but often the posterior is actually close to Gaussian, unimodal or at least its highest density region\footnote{We use \emph{the highest (posterior) density region} rather loosely to refer to the region where the (posterior) density is non-negligible and where good approximation accuracy is most importantly needed.} defines a relatively small region of the parameter space. Such global modelling may hence involve wasted effort and becomes inefficient, for instance, when the parameter space is high-dimensional. 
Importantly, the likelihood function can also behave irregularly, as is often the case with nonlinear dynamic models used in ecology and epidemiology \citep{Fasiolo2016}, or be arbitrarily small near the parameter boundaries which results practical difficulties with GP fitting. 


The current active learning strategies for B(O)LFI are also not fully satisfactory: First, they all require global optimisation of some so-called \emph{acquisition function} at each iteration of the algorithm. Especially the one-step ahead optimal Bayesian experimental design strategies by \citet{Jarvenpaa2019_sl} are rather costly. Although this is only a minor concern when the likelihood evaluations are truly expensive, it still complicates the inference pipeline. 
The more heuristic but cheaper acquisition functions based on {uncertainty sampling}, also by \citet{Jarvenpaa2019_sl}, did not work consistently. The ones borrowed from Bayesian optimisation literature (see \citet{Gutmann2016}) are not theoretically sound as they are not explicitly designed to estimate the posterior distribution \citep{Kandasamy2015,Jarvenpaa2018_acq,Jarvenpaa2019_sl} and tend to produce redundant evaluations near the boundaries as observed e.g.~in \citet{Jarvenpaa2018_acq,Picchini2020}. 

In this paper we develop and analyse a new inference framework \gpmh{} that directly combines \mh{} sampling with the benefits of the B(O)LFI approach. 
In addition to other advantages and new theoretical insights obtained as a by-product (some of which might be of independent interest), this new framework alleviates the aforementioned difficulties. In particular, the resulting \gpmh{} algorithm models and explores the parameter space in a more local fashion by emulating the progression of an exact but directly inapplicable \mh{} sampler. This allows to either avoid or robustly manage the problematic likelihood evaluations especially near the parameter boundaries which are often redundant anyway. 
While possible weak information about the shape or location of the highest density region based on e.g.~pilot runs, expert knowledge, or iterative model building, is difficult to code into the GP model, it can be more easily captured in an initialisation scheme, or the proposal density. 
Sequential Bayesian experimental design strategies are developed here to select informative evaluation locations and they provide fairly similar sample-efficiency as the B(O)LFI methods but are more computationally efficient and interpretable. 
The resulting \gpmh{} implementation, similarly to the current B(O)LFI methods, is mainly intended for models with low-dimensional parameters and require assumptions regarding the smoothness of the log-likelihood function (but not on its shape).



The rest of this paper is organised as follows. 
We first provide background on \mh{} sampling, \BLFI{} and previous methods in statistics and machine learning literature that use GPs for more efficient Bayesian inference in Section \ref{sec:background}.
We then develop our framework (Section \ref{sec:gpe}) and derive sequential experimental design strategies for it (Section \ref{sec:acq}). 
In Section \ref{sec:probinterp} we observe that some common \mh{} samplers are special cases of \gpmh{} and discuss how \gpmh{} can be interpreted as a special case of \BLFI{}, which motivates an alternative implementation called \mhblfi{}, or as a heuristic yet tractable estimate to a conceptual ``Bayesian'' \mh{} sampler. 
Section \ref{sec:theory} outlines some theoretical aspects of \gpmh{}. 
In section \ref{sec:exp} we investigate the proposed implementations numerically first with synthetically constructed target densities and then with realistic simulator-based models with SL and generalised Bayesian updating in the spirit of \citet{Bissiri2016,Pacchiardi2021}.
Summary and discussion about future research directions conclude the paper. 


\section{Background} \label{sec:background}

We denote the observed data as $\Bxobs\in\reals^d$, the unknown parameters of the statistical model as $\Btheta\in\Theta\subset\reals^p$ and the associated likelihood function as $\pi(\Bxobs\cond\Btheta)$. We focus on continuous parameter spaces but most of the analysis extends to the case where some components of $\Btheta$ are discrete. The prior density is denoted by $\pi(\Btheta)$ and is assumed tractable. 
Bayes' theorem gives the posterior distribution $\pi(\Btheta\cond\Bxobs) = \pi(\Btheta)\pi(\Bxobs\cond\Btheta)/\pi(\Bxobs)$, where $\pi(\Bxobs) = \int_{\Theta} \pi(\Btheta\pr)\pi(\Bxobs\cond\Btheta\pr) \ud \Btheta\pr$ is the marginal likelihood. 
Throughout this paper the main goal will be on sampling from $\pi(\Btheta\cond\Bxobs)$ or computing some posterior expectations of the form
\begin{equation}
\bar{h} \eqdef
\mean_{\Btheta\cond\Bxobs}h(\Btheta)
= \int_{\Theta} h(\Btheta) \pi(\Btheta\cond\Bxobs) \ud \Btheta
= \frac{\int_{\Theta}h(\Btheta)\pi(\Btheta)\pi(\Bxobs\cond\Btheta)\ud\Btheta}{\int_{\Theta}\pi(\Btheta\pr)\pi(\Bxobs\cond\Btheta\pr)\ud\Btheta\pr}, 
\label{eq:posterior_mean}
\end{equation}
where $h:\Theta\rightarrow\reals$ is some known, tractable function that is cheap to evaluate.



\subsection{Metropolis-Hastings algorithm} \label{subsec:mh}

Metropolis-Hastings sampler \citep{Hastings1970} is widely used for Monte Carlo integration. 
Algorithm \ref{alg:mh} shows in a compact form how \mh{} is used to draw samples from $\pi(\Btheta\cond\Bxobs)$. Under certain technical conditions, this \mh{} sampler produces a Markov chain whose stationary distribution is the posterior $\pi(\Btheta\cond\Bxobs)$. 
The algorithm starts from an initial point $\Btheta^{(0)}$. At each iteration $i$ a new parameter denoted 
by $\Btheta'^{(i)}$ is drawn from the proposal density $q(\Btheta'^{(i)}\cond\Btheta^{(i-1)})$ and is then accepted with probability $\accratio(\Btheta^{(i-1)},\Btheta'^{(i)})$, where
\begin{align}
    \accratio(\Btheta,\Btheta') \eqdef \min\{ 1, \gamma(\Btheta,\Btheta') \}, 
    \quad 
    \gamma(\Btheta,\Btheta') \eqdef \frac{\pi(\Btheta')\pi(\Bxobs\cond\Btheta')q(\Btheta\cond\Btheta')}{\pi(\Btheta)\pi(\Bxobs\cond\Btheta)q(\Btheta'\cond\Btheta)},
\label{eq:accratio}
\end{align}
and otherwise the current point $\Btheta^{(i-1)}$ is kept. 
The initial samples (e.g.~the first half) are often discarded as ``burn-in''. The remaining samples, here denoted as $\Btheta^{(0)},\ldots,\Btheta^{(n)}$, are approximately distributed as $\pi(\Btheta\cond\Bxobs)$ and can be used to estimate (\ref{eq:posterior_mean}) as 
\begin{equation}
\bar{h} \approx 
\hat{\bar{h}}_{n+1}
\eqdef \frac{1}{n+1}\sum_{i=0}^{n} h(\Btheta^{(i)}). \label{eq:posterior_mean_mcmc_approx}
\end{equation}
See e.g.~\citet{Robert2004} for a more detailed treatment of MCMC methods.

\begin{algorithm}[htb]
\caption{Metropolis-Hastings (\mh{}) sampler for $\pi(\Btheta\cond\Bxobs)$} \label{alg:mh}
 \begin{algorithmic}[1]
 \Require Prior $\pi(\Btheta)$, likelihood $\pi(\Bxobs\cond\Btheta)$, initial point $\Btheta^{(0)}$, proposal $q(\Btheta'\cond\Btheta)$, no.~samples $i_{\text{MH}}$
 \Ensure Samples $\Btheta^{(1)},\ldots,\Btheta^{(i_{\text{MH}})}$ 
 \For{$i=1:i_{\text{MH}}$}
 \State Draw $\Btheta'^{(i)} \sim q(\cdot\cond \Btheta^{(i-1)})$ and $u^{(i)} \sim \Unif([0,1])$ \label{line:q_u_gen}
 \State Set $\Btheta^{(i)} \leftarrow \Btheta'^{(i)}\indic_{\alpha(\Btheta^{(i-1)},\Btheta'^{(i)})\geq u^{(i)}}+\Btheta^{(i-1)}\indic_{\alpha(\Btheta^{(i-1)},\Btheta'^{(i)})< u^{(i)}}$ 
 \label{line:newtheta}
 \EndFor
 \end{algorithmic}
\end{algorithm}

The \mh{} sampler in Algorithm \ref{alg:mh} uses an exact target density evaluation (up to normalisation) in the \mh{} acceptance test (\ref{eq:accratio}). 
However, if a noisy but unbiased likelihood evaluation is used in place of the exact $\pi(\Bxobs\cond\Btheta')$, the old realisation of $\pi(\Bxobs\cond\Btheta)$ is carried on from the previous iteration instead of recomputing it and if certain technical conditions hold, the resulting modified sampler is a \emph{pseudo-marginal} \mh{} \citep{Beaumont2003,Andrieu2009} whose target density is still the exact posterior. Otherwise additional approximation error is introduced, see e.g.~\citet{Alquier2016} for analysis of the resulting \emph{``noisy''} \mh{} sampler. 
All these samplers require a new likelihood evaluation at each iteration and typically a large number of iterations to converge which makes them prohibitively expensive as such when the evaluations are costly.
The mixing of pseudo-marginal \mh{} is also slowed down by the ``sticky'' behaviour of the chain due to the noisy evaluations.

\subsection{Bayesian approach to Bayesian inference with noisy, expensive likelihoods} \label{subsec:blfi}

In the \BLFI{} framework the log-likelihood function, from now on denoted by $f:\Theta\rightarrow\reals$ so that $f(\Btheta)\eqdef\log\pi(\Bxobs\cond\Btheta)$  where the dependence on the fixed data $\Bxobs$ is suppressed for brevity, is itself treated as an unknown random function to be estimated using (another level of) Bayesian inference. 
A GP prior is placed on $f$ and a Gaussian noise model is assumed for the log-likelihood evaluations. Some $t$ parameter values are chosen to evaluate the log-likelihood $f$ which results the data $\ddata_t$. Then $f\cond\ddata_t$ follows a GP which further induces a log-GP posterior for the \emph{unnormalised} posterior 
\begin{equation}
\tildepiapprox_{f}(\Btheta) \eqdef \pi(\Btheta)\e^{f(\Btheta)}. \label{eq:unn_post}
\end{equation}
An estimate for (\ref{eq:unn_post}) is obtained analytically by using the properties of GP models. This GP-based estimate defines an approximation to the exact unnormalised posterior $\tildepiapprox(\Btheta\cond\Bxobs) \eqdef \pi(\Btheta)\pi(\Bxobs\cond\Btheta)$ and is plugged in to a standard MCMC algorithm. This last step does not require further log-likelihood evaluations as it is solely based on the GP posterior. Finally, the resulting samples can be used to approximate $\bar{h}$ as usual, via (\ref{eq:posterior_mean_mcmc_approx}). 

The new evaluation location is chosen as the global optimum of an acquisition function at each iteration of \BLFI{}. In the case of sequential Bayesian experimental design, the acquisition function is defined as an expected loss function where the loss quantifies the uncertainty associated with the unnormalised posterior (\ref{eq:unn_post}) and the expectation is taken with respect to a hypothetical future evaluation based on the current GP model. 
A general implementation of B(O)LFI in a concise form is outlined as Algorithm \ref{alg:blfi}. 

\begin{algorithm}[htb]
\caption{General form of B(O)LFI implementation} \label{alg:blfi}
 \begin{algorithmic}[1]
 \Require Prior $\pi(\Btheta)$, Bayesian model (e.g.~GP) for the target log-likelihood $f(\Btheta)$, no.~initial evaluations $t_{\text{init}}$, no.~total evaluations $t_{\text{max}}$, no.~MCMC~samples $s_{\text{MCMC}}$ 
 \Ensure Samples $\Btheta^{(1)},\ldots,\Btheta^{(s_{\text{MCMC}})}$ 
  \For{$j=1:t_{\text{init}}$} \Comment{Collect data for the initial model fitting.}
 \State Select initial evaluation location $\Btheta_{j}$ \Comment{E.g.~draw $\Btheta_{j}\simiid\Unif(\Theta)$ or $\Btheta_{j}\simiid\pi(\Btheta)$.}
 \State Set $y_{j}$ $\leftarrow$ log-likelihood evaluation at $\Btheta_j$ 
 \EndFor
 \State Set $t \leftarrow t_{\text{init}}$ and $\ddata_t \leftarrow \{(y_{j}, \Btheta_{j})\}_{j=1}^{t}$
 \State Fit the Bayesian model for $f$ using $\ddata_t$
 %
 \For{$t=t_{\text{init}}\!+\!1:t_{\text{max}}$} \Comment{Collect data for improving the model.}
   \State Select informative evaluation location $\Btheta^*$ \label{line:blfiacq} \Comment{By optimising an acquisition function.}
   \State Set $y^*$ $\leftarrow$ log-likelihood evaluation at $\Btheta^*$ 
   \State Set $\ddata_t \leftarrow \ddata_{t-1} \cup \{(y^{*}, \Btheta^{*})\}$ 
   \State Refit Bayesian model for $f$ using $\ddata_t$
 \EndFor
 \State Sample $\Btheta^{(1)},\ldots,\Btheta^{(s_{\text{MCMC}})}$ from a model-based estimate of (\ref{eq:unn_post}) with MCMC 
 \end{algorithmic}
\end{algorithm} 

The approximation error can also be assessed probabilistically, at least in principle. For example, the expectation $\bar{h}$ in (\ref{eq:posterior_mean}), which is here interpreted as a functional of $f$ and rewritten as   
\begin{equation*}
\bar{h}_f \eqdef \int_{\Theta}h(\Btheta)\pi_{f}(\Btheta)\ud\Btheta, \quad \pi_{f}(\Btheta)
\eqdef \frac{\pi(\Btheta)\e^{f(\Btheta)}}{\int_{\Theta}\pi(\Btheta\pr)\e^{f(\Btheta\pr)}\ud\Btheta\pr},
\end{equation*}
%
follows a posterior distribution induced by the GP of $f\cond\ddata_t$ via the non-linear mapping $f\mapsto\bar{h}_f$. 
The resulting density $\pi(\bar{h}\cond\ddata_t)$ can be assessed numerically (see \citet{Jarvenpaa2020_babc}) though this approach is itself approximate and works only in low dimensions. 

Thanks to the more intelligent use of the limited budget of log-likelihood evaluations, \BLFI{} typically needs only a few hundred evaluations for reasonable posterior approximations. This is significantly less than using (pseudo-marginal/noisy) \mh{}. \BLFI{} is conceptually similar to Bayesian optimisation \cite{Hennig2012,Frazier2018,Garnett2022}, adaptive warped Bayesian quadrature \citep{Osborne2012,Gunter2014,Chai2019} and GP-based noisy level set estimation \citep{Bect2012,Lyu2021} developed for other related numerical tasks involving expensive functions.

\subsection{Other related literature} \label{subsec:literature}

Accelerating MCMC by using GPs or other surrogate models, also called emulators or metamodels, has been widely considered in literature, see \citet{Llorente2021} for a recent survey. 
Notably, \citet{Rasmussen2003,Christen2005,Bliznyuk2008,Fielding2011,Conrad2016,Zhang2017,Sherlock2017,Zhang2019} develop asymptotically exact MCMC algorithms mainly in the context of exact but costly likelihood evaluations often resulting from complex ODE or PDE systems with tractable observation models. Sometimes derivative evaluations are also available to aid GP fitting \citep{Lan2016,Paun2022}. Different from these studies, we instead focus on expensive stochastic models whose likelihood function is estimated using forward simulations. We also aim for the best possible sample-efficiency (instead of merely improving over standard MCMC) while accepting some bias due to the full use of the GP surrogate. 
Related techniques that assume expensive likelihood evaluations but which are not directly based on MCMC include \citet{Kandasamy2015,Wang2018,Acerbi2018,Alawieh2020}. These methods use global GP modelling and we expect them hence to suffer from similar practical modelling challenges as B(O)LFI. Moreover, the numerical experiments by \citet{Jarvenpaa2019_sl,Acerbi2020} suggest that the active learning strategies of these papers do not work well in the noisy setting. 

Bayesian inference using \mh{} sampling in the case of ``tall data'' \citep{Korattikara2014,Angelino2016,Bardenet2017,Zhang2020talldata} is another related and likewise challenging task. While the underlying statistical model is tractable and typically relatively cheap, the very large number of data points makes likelihood evaluations costly. A key idea is to use unbiased log-likelihood evaluations obtained by subsampling the data points in the \mh{} accept/reject test. Although the existing methods are better tailored for this specific problem, our proposed approach in principle also applies there. 

In addition to B(O)LFI, other inference frameworks based on GP surrogate modelling have been proposed for LFI. 
Our approach most closely resembles the GPS-ABC algorithm by \citet{Meeds2014} where a related approximate \mh{} framework is considered. However, a major difference is that in GPS-ABC individual summary statistics are modelled with independent GPs in the context of Approximate Bayesian Computation (ABC) while we model the log-likelihood with GP (not necessarily in the ABC scenario). In addition, we provide substantially more comprehensive analysis of the main idea and extend it in various ways in our setting. 
In \citet{Wilkinson2014} the difficulties with global GP modelling are partially eluded by classifying problematic parameter regions as implausible at each ``wave'' of their algorithm and fitting the GP only to its complement. However, this approach seems cumbersome and is difficult to automatise. 
Finally, GP-accelerated MCMC methods in the context of noisy log-likelihood evaluations have been considered by \citet{Drovandi2018,Wiqvist2018} while variational inference is used by \citet{Acerbi2020}. However, these methods are quite convoluted featuring multiple stages and are not designed for maximal sample-efficiency.

\section{Gaussian process emulated \mh{} with noisy likelihood evaluations} \label{sec:gpe}

In this section we develop our framework for emulating the progression of an exact \mh{} when one only has access to a limited number of noisy log-likelihood evaluations. 
We model the log-likelihood using a probabilistic surrogate model and explicitly treat $\gamma(\Btheta^{(i-1)},\Btheta'^{(i)})$---and consequently also $\accratio(\Btheta^{(i-1)},\Btheta'^{(i)})$---as random variables. 
In Section \ref{subsec:approx_mh} we discuss how the \mh{} accept/reject decisions, that control the progression of the \mh{} algorithm, are made in the presence of uncertainty of the exact value of $\accratio(\Btheta^{(i-1)},\Btheta'^{(i)})$ in the light of Bayesian decision theory. Then we present a particular GP surrogate model for the log-likelihood function (Section \ref{subsec:gp_model}), combine this model with the preceding theory (Section \ref{subsec:alpha_gp}) and finally form a practical implementation of the framework (Section \ref{subsec:thealgorithm}). 

\subsection{Uncertainty in the \mh{} acceptance ratio} \label{subsec:approx_mh} \label{subsec:cond_uncond_errors}

\sloppy
Let us revisit the \mh{} sampler in Algorithm \ref{alg:mh}. 
An essential initial observation is that $\alpha(\Btheta^{(i-1)},\Btheta'^{(i)})$ in line \ref{line:newtheta} can equivalently be replaced by the slightly simpler quantity $\gamma(\Btheta^{(i-1)},\Btheta'^{(i)})$. 
Whether to accept or reject a proposed $\Btheta'^{(i)}$ at iteration $i$, when the current point is $\Btheta^{(i-1)}$ and when there is uncertainty about the corresponding likelihood values, is treated as a problem of Bayesian decision theory. 
Potential previous or future decisions are not taken into account for simplicity. 
Let $\hatgamma=\hatgamma(\Btheta^{(i-1)},\Btheta'^{(i)})$ be an estimator for the random variable $\gamma=\gamma(\Btheta^{(i-1)},\Btheta'^{(i)})$ for making the decision. 
We consider a loss function 
\begin{align*}
    l_u(\gamma,\hatgamma) \eqdef \indic_{\gamma<u, \hatgamma\geq u} + \indic_{\gamma\geq u, \hatgamma<u} 
\end{align*}
with a fixed $u=u^{(i)}\in[0,1]$. 
The loss is $1$ if we choose $\hatgamma\geq u$ while in reality $\gamma<u$ or if we choose $\hatgamma<u$ while $\gamma\geq u$, and $0$ otherwise. Both type of errors are hence considered equally undesirable. 
The expected loss is then
\begin{align}
\begin{split}
    \mean_{\gamma}l_u(\gamma,\hatgamma) &= \int_{\reals}(\indic_{\gamma<u}\indic_{\hatgamma\geq u} + \indic_{\gamma\geq u}\indic_{\hatgamma<u}) \ud F_{\gamma}(\gamma) \\
    &= \indic_{\hatgamma\geq u}\int_{\reals}\indic_{\gamma<u}\ud F_{\gamma}(\gamma) + \indic_{\hatgamma<u} \int_{\reals} \indic_{\gamma\geq u} \ud F_{\gamma}(\gamma) \\
    &= \prob(\gamma < u\cond u)\indic_{\hatgamma \geq u} + \prob(\gamma \geq u\cond u)\indic_{\hatgamma < u}, \label{eq:expected_loss}
\end{split}
\end{align}
where $F_{\gamma}(\gamma)$ is the cumulative distribution function (CDF) of $\gamma$. 
We also define an alternative loss function $l(\gamma,\hatgamma) \eqdef \int_0^1 l_u(\gamma,\hatgamma) \ud u$. Using Fubini's theorem we obtain the expected loss $\mean_{\gamma}l(\gamma,\hatgamma) = \int_0^1\mean_{\gamma}l_u(\gamma,\hatgamma)\ud u$, whose integrand is given by (\ref{eq:expected_loss}). 

Similarly to \citet{Meeds2014}, we may also consider the probability of making an error in the \mh{} acceptance test. This is done either conditionally on $u$ so that
\begin{align}
\begin{split}
    \conderr_{u,\hatgamma} &\eqdef \prob(\textnormal{``Incorrect accept/reject decision''} \cond \hatgamma, u) \\
    %
    %
    &= \prob(\{ \gamma < u, \hatgamma \geq u \} \cup \{ \gamma \geq u, \hatgamma < u \} \cond \hatgamma, u) \\
    &= \prob(\gamma < u, \hatgamma \geq u \cond \hatgamma, u) + \prob(\gamma \geq u, \hatgamma < u \cond \hatgamma, u) \\
    &= \prob(\gamma < u\cond u)\indic_{\hatgamma \geq u} + \prob(\gamma \geq u\cond u)\indic_{\hatgamma < u}, \label{eq:conderrgen}
\end{split}
\end{align}
or unconditionally by averaging over $u\sim\Unif([0,1])$ so that 
\begin{align}
    \unconderr_{\hatgamma} \eqdef \int_0^1\conderr_{u,\hatgamma} \Unif(u|[0,1])\ud u
    = \int_0^1\conderr_{u,\hatgamma} \ud u. \label{eq:unconderrgen}
\end{align}
We see that (\ref{eq:conderrgen}), which we refer simply as the \emph{conditional error} from now on, coincides with the expected loss (\ref{eq:expected_loss}). Similarly, the \emph{unconditional error} (\ref{eq:unconderrgen}) equals $\mean_{\gamma}l(\gamma,\hatgamma)$. 
%
An optimal estimator ${\hatgamma}$ for making the accept/reject decision minimises the expected loss. Recall that the \emph{median} of a real-valued random variable $z$, which we denote by $\med(z)$, is defined as any value $m\in\reals$ satisfying $\prob(z\leq m)\geq 1/2$ and $\prob(z\geq m)\geq 1/2$. The median always exists but may not be unique.
\begin{proposition} \label{prop:med}
Suppose $\gamma$ is a real-valued random variable.
Then the choice $\hatgamma = \med(\gamma)$ (where $\med(\gamma)$ can be any of its median values) minimises the unconditional error $\unconderr_{\hatgamma}$
and also the conditional error $\conderr_{u,\hatgamma}$ for each fixed $u\in[0,1]$. 
\end{proposition}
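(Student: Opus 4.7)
The objective is simply to minimise the expected loss (\ref{eq:expected_loss}), which, as noted in the text, equals the conditional error $\conderr_{u,\hatgamma}$. Since the unconditional error $\unconderr_{\hatgamma}$ is obtained by integrating $\conderr_{u,\hatgamma}$ over $u \in [0,1]$, a pointwise-in-$u$ minimiser automatically minimises $\unconderr_{\hatgamma}$ as well. So the whole proof reduces to analysing $\conderr_{u,\hatgamma}$ for a fixed $u$ and checking that $\hatgamma = \med(\gamma)$ does the job.

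\textbf{Step 1: reduce to a two-point minimisation.} For fixed $u$, the estimator $\hatgamma$ enters $\conderr_{u,\hatgamma}$ only through the complementary indicators $\indic_{\hatgamma \geq u}$ and $\indic_{\hatgamma < u}$, which sum to $1$. Writing $a \eqdef \prob(\gamma < u)$ and $b \eqdef \prob(\gamma \geq u) = 1-a$, we have $\conderr_{u,\hatgamma} = a\,\indic_{\hatgamma \geq u} + b\,\indic_{\hatgamma < u}$, so the attainable minimum value is $\min(a,b)$, achieved precisely when $\hatgamma \geq u$ if $a \leq b$ (equivalently $a \leq 1/2$) and $\hatgamma < u$ if $a > b$.

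\textbf{Step 2: verify the median meets this condition for every $u$.} Let $m$ be any median of $\gamma$, so $\prob(\gamma \geq m) \geq 1/2$ and $\prob(\gamma \leq m) \geq 1/2$. I distinguish two cases:
\begin{itemize}
\item If $u \leq m$, then $\hatgamma = m \geq u$, and $\{\gamma < u\} \subseteq \{\gamma < m\}$ gives $\prob(\gamma < u) \leq \prob(\gamma < m) = 1 - \prob(\gamma \geq m) \leq 1/2$. Hence the choice $\hatgamma \geq u$ is the optimal branch.
\item If $u > m$, then $\hatgamma = m < u$, and $\{\gamma \geq u\} \subseteq \{\gamma > m\}$ gives $\prob(\gamma \geq u) \leq \prob(\gamma > m) = 1 - \prob(\gamma \leq m) \leq 1/2$, so $\prob(\gamma < u) \geq 1/2$ and the choice $\hatgamma < u$ is the optimal branch.
\end{itemize}
In both cases $\hatgamma = m$ attains the lower bound $\min(a,b)$, so it minimises $\conderr_{u,\hatgamma}$.

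\textbf{Step 3: conclude for the unconditional error.} Because $\hatgamma = \med(\gamma)$ minimises the integrand $\conderr_{u,\hatgamma}$ pointwise in $u$, it also minimises $\unconderr_{\hatgamma} = \int_0^1 \conderr_{u,\hatgamma}\,\ud u$. The only subtle point, and really the only place where care is needed, is the possibility that the median is not unique or that $\prob(\gamma = m) > 0$; but the one-sided inequalities $\prob(\gamma \geq m) \geq 1/2$ and $\prob(\gamma \leq m) \geq 1/2$ are all that is used, so any valid median works and the boundary case $u = m$ (where both branches are optimal) causes no problem. I expect no genuine obstacles beyond bookkeeping this strict-versus-weak inequality distinction.
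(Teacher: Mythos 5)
Your proof is correct and takes essentially the same route as the paper's: for each fixed $u$ the conditional error is a two-point minimisation with value $\min\{\prob(\gamma<u),\prob(\gamma\geq u)\}$, the median always lands on the branch whose probability is at most $1/2$, and integrating over $u$ then gives the unconditional claim. Your two-case split ($u\leq m$ versus $u>m$) using only the defining inequalities of an arbitrary median is in fact a slightly leaner bookkeeping than the paper's four-case analysis of the median interval $[m_1,m_2]$, but the underlying argument is identical.
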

The proof for this and some other theoretical results are given in \appe{} \ref{appesec:proofs}. From now on we use $\hatgamma$ exclusively to denote this optimal estimator. It follows that the optimal decision is to choose the most probable action given $u$ because
\begin{align}
\begin{split}
    \conderr_{u,{\hatgamma}} &= \prob(\gamma < u\cond u)\indic_{\med(\gamma) \geq u} + \prob(\gamma \geq u\cond u)\indic_{\med(\gamma) < u} \\
    &= \min\{\prob(\gamma < u\cond u),\prob(\gamma \geq u\cond u)\}.
    \label{eq:minp1-p}
\end{split}
\end{align}
%
In the following sections we show that, unlike in the different surrogate modelling scenario of \citet{Meeds2014}, analytical formulas for the key quantities above can be obtained when the log-likelihood follows GP posterior.

\subsection{GP model for the log-likelihood} \label{subsec:gp_model}

We present a GP model which is similar to the one used by \citet{Jarvenpaa2019_sl} and is suitable for low-dimensional parameters ($p<10$). While we use this model in this paper, other choices may be more appropriate in some other settings. 
We first assume 
\begin{equation}
    y_j = f(\Btheta_j) + \epsilon_j, \quad \epsilon_j \sim \Normal(0,\sigma_n^2(\Btheta_j)), \quad j=1,\ldots,t,
    \label{eq:noisemodel}
\end{equation}
where $y_j\in\reals$ denotes a noisy evaluation of the log-likelihood function $f$ at some parameter $\Btheta_j\in\Theta$ and $\sigma_n^2: \Theta \rightarrow \realsp$ is the noise variance. 
The Gaussian noise assumption (\ref{eq:noisemodel}) is further discussed in \appe{} \ref{appsec:gpremarks}. 

We then place the following hierarchical GP prior for $f$:
\begin{align}\begin{split}
    f \cond \Bbeta \sim \GP(m_0(\Btheta),k_{\Bphi}(\Btheta,\Btheta\pr)), \quad 
    m_0(\Btheta) = \sum_{i=1}^q \beta_i h_i(\Btheta), \quad \Bbeta \sim \Normal(\Bb,\BB), \label{eq:gp_prior}
\end{split}\end{align}
where $k_{\Bphi}:\Theta\times\Theta\rightarrow\reals$ is a covariance (kernel) function with hyperparameters $\Bphi$ and $h_i:\Theta\rightarrow\reals$ denote fixed basis functions. The covariance function encodes the smoothness assumption of the likelihood function whereas possible prior assumptions on its shape can be encoded by specifying suitable basis functions. 
In our analysis we assume that $\Bphi$ and $\sigma_n^2(\Btheta)$ for each $\Btheta$ are known though in practice their values are obtained using point estimation. We omit $\Bphi$ from our notation for brevity.  

As in \citet{OHagan1978,Rasmussen2006}, we integrate out $\Bbeta$ in (\ref{eq:gp_prior}). 
Given evaluations $\ddata_{\xt} \eqdef \{(y_j,\Btheta_j)\}_{j=1}^t$, the posterior of $f$ can be shown to be $f \cond \ddata_{\xt} \sim \GP(m_{\xt}(\Btheta),c_{\xt}(\Btheta, \Btheta\pr))$, where 
\begin{align*}
    m_{\xt}(\Btheta) &\eqdef \Bk_{\xt}(\Btheta) \BK^{-1}_{\xt} \By_{\xt} 
    + \BR_{\xt}\T(\Btheta) \bar{\Bbeta}_{\xt}, 
    \\ 
    \begin{split}
    c_{\xt}(\Btheta,\Btheta\pr) 
    &\eqdef k(\Btheta,\Btheta\pr) 
    - \Bk_{\xt}(\Btheta) \BK^{-1}_{\xt} \Bk\T_{\xt}(\Btheta\pr) 
    + \BR_{\xt}\T(\Btheta)[\BB^{-1} + \BH_{\xt}\BK^{-1}_{\xt}\BH_{\xt}\T]^{-1} \BR_{\xt}(\Btheta\pr), 
    \end{split} 
\end{align*}
with $[\BK_{\xt}]_{ij} \eqdef k(\Btheta_{i},\Btheta_{j}) + \indic_{i=j}\sigma_n^2(\Btheta_i)$ for $i,j=1,\ldots,t$, $\Bk_{\xt}(\Btheta) \eqdef (k(\Btheta,\Btheta_1),\ldots,k(\Btheta,\Btheta_t))$, 
$
    \bar{\Bbeta}_{\xt} \eqdef [\BB^{-1} + \BH_{\xt}\BK^{-1}_{\xt}\BH_{\xt}\T]^{-1}(\BH_{\xt}\BK^{-1}_{\xt}\By_{\xt} + \BB^{-1}\Bb) 
    %
$
and $\BR_{\xt}(\Btheta) \eqdef \BH(\Btheta) - \BH_{\xt}\BK^{-1}_{\xt}\Bk_{\xt}\T(\Btheta)$. 
The columns of $\BH_{\xt}\in\reals^{q\times t}$ consist of basis function values evaluated at $\Btheta_{1:t}=[\Btheta_1,\ldots,\Btheta_t]\in \reals^{p \times t}$ and $\BH(\Btheta)$ is the corresponding $q\times 1$ vector at $\Btheta$.  
We also have $\By_t=(y_1,\ldots,y_t)\T$ and we additionally denote the GP variance function as $s_{\xt}^2(\Btheta) \eqdef c_{\xt}(\Btheta,\Btheta)$.  
See \citet{Rasmussen2006} for further details on GP regression and \appe{} \ref{appsec:gpmodellingdetails} for remarks on modelling log-likelihood functions.

\subsection{Uncertainty in the \mh{} acceptance ratio based on GP surrogate} \label{subsec:alpha_gp}

We apply the analysis of Section \ref{subsec:cond_uncond_errors} on handling the uncertainty in the \mh{} accept/reject test when the log-likelihood function follows a GP posterior as in Section \ref{subsec:gp_model}. 
Here $\Btheta$ denotes the current point at an arbitrary iteration of the \mh{} sampler and $\Btheta'$ is the corresponding proposal generated from $q(\Btheta'\cond\Btheta)$. 
We have 
\begin{align*}
    \begin{bmatrix}
    f(\Btheta) \\ f(\Btheta')
    \end{bmatrix} \!\cond \ddata_t
    \sim \Normal_2\left( \begin{bmatrix} m_t(\Btheta) \\ m_t(\Btheta') \end{bmatrix}, \begin{bmatrix} s_t^2(\Btheta) & c_t(\Btheta,\Btheta') \\ c_t(\Btheta,\Btheta') & s_t^2(\Btheta') \end{bmatrix} \right),
\end{align*}
which further implies 
\begin{align}
    f(\Btheta')-f(\Btheta)\cond\ddata_t \sim \Normal(m_t(\Btheta')-m_t(\Btheta), s_t^2(\Btheta')+s_t^2(\Btheta)-2c_t(\Btheta,\Btheta')).
\label{eq_fdiff}
\end{align}
Using (\ref{eq:accratio}), which we can rewrite as 
\begin{align*}
    \gamma_f(\Btheta,\Btheta') \eqdef \frac{\pi(\Btheta')\e^{f(\Btheta')}q(\Btheta\cond\Btheta')}{\pi(\Btheta)\e^{f(\Btheta)}q(\Btheta'\cond\Btheta)}
    = \frac{\pi(\Btheta')q(\Btheta\cond\Btheta')}{\pi(\Btheta)q(\Btheta'\cond\Btheta)}\e^{f(\Btheta')-f(\Btheta)},
\end{align*}
and (\ref{eq_fdiff}), it follows that $\gamma_f(\Btheta,\Btheta')$ given evaluations $\ddata_t$ follows log-Normal distribution:
\begin{align}
    \gamma_f(\Btheta,\Btheta')\cond\ddata_t &\sim \logN(\mu_t(\Btheta,\Btheta'), \sigma_t^2(\Btheta,\Btheta')), \label{eq:gammafdens} \\
    \mu_t(\Btheta,\Btheta') &\eqdef m_t(\Btheta')-m_t(\Btheta) + \log\left( \frac{\pi(\Btheta')q(\Btheta\cond\Btheta')}{\pi(\Btheta)q(\Btheta'\cond\Btheta)} \right), \label{eq:mu} \\
    \sigma_t^2(\Btheta,\Btheta') &\eqdef s_t^2(\Btheta')+s_t^2(\Btheta)-2c_t(\Btheta,\Btheta'). \label{eq:sigma}
\end{align}
%
Furthermore, $\alpha_f(\Btheta,\Btheta') \eqdef \min\{ 1, \gamma_f(\Btheta,\Btheta')\}$ given $\ddata_t$ follows a mixture density consisting of a log-Normal density in $[0,1)$ and a point mass at $1$. Its CDF is $F_{\alpha_f(\Btheta,\Btheta')\cond\ddata_t}(a) = \Phi((\log(a) - \mu_t(\Btheta,\Btheta'))/\sigma_t(\Btheta,\Btheta'))\indic_{a<1} + \indic_{a\geq 1}$ for $a>0$ and $F_{\alpha_f(\Btheta,\Btheta')\cond\ddata_t}(a)=0$ for $a\leq 0$, where $\Phi(\cdot)$ is the CDF of the standard Gaussian distribution. 

Given the GP posterior of $f\cond\ddata_t$ and the optimal estimator
\begin{align}
\hatgamma = \hatgamma(\Btheta,\Btheta') =  \med_{f\cond\ddata_t}\gamma_f(\Btheta,\Btheta') = \e^{\mu_t(\Btheta,\Btheta')} \label{eq:taugpoptimal}
\end{align}
for $\gamma_f(\Btheta,\Btheta')$, the conditional and unconditional errors defined in Section \ref{subsec:cond_uncond_errors} are 
\begin{align}
    \conderr_{t,u,\hatgamma}(\Btheta,\Btheta') &= \Phi\left( -\frac{|\mu_t(\Btheta,\Btheta') - \log(u)|}{\sigma_t(\Btheta,\Btheta')} \right), \label{eq:conderr_gp} \\
    \unconderr_{t,\hatgamma}(\Btheta,\Btheta') 
    &= \int_0^1 \Phi\left( -\frac{|\mu_t(\Btheta,\Btheta') - \log(u)|}{\sigma_t(\Btheta,\Btheta')} \right) \ud u \label{eq:unconderr_gp_v0} \\
    %
    %
    &= \begin{cases} 
    \Phi\left( -{\mu_t}/{\sigma_t}\right) - \e^{\mu_t+{\sigma_t^2}/{2}}\Phi\left( -({\mu_t+\sigma_t^2})/{\sigma_t}\right) 
    &\textnormal{if } \mu_t \geq 0,
    \\
    \Phi\left( {\mu_t}/{\sigma_t}\right) + \e^{\mu_t+{\sigma_t^2}/{2}}\left[\Phi\left( -({\mu_t+\sigma_t^2})/{\sigma_t}\right) -2\Phi(-\sigma_t) \right]
    &\textnormal{if } \mu_t < 0, 
    \end{cases}
    \label{eq:unconderr_gp}
\end{align}
respectively. Note that we sometimes shorten $\mu_t(\Btheta,\Btheta')$ as $\mu_t$ and similarly $\sigma_t(\Btheta,\Btheta')$ as $\sigma_t$. 
Equations (\ref{eq:conderr_gp}) and (\ref{eq:unconderr_gp}) are derived in \appe{} \ref{appesec:proofs} and illustrated in Figure \ref{fig:cond_vs_uncond_errors}. 

\begin{figure}[hbtp]
\centering
\begin{subfigure}{0.32\textwidth}
\includegraphics[width=\textwidth]{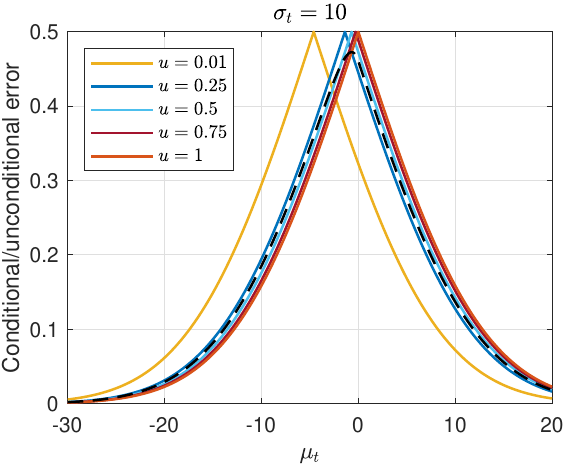}
\end{subfigure}
\begin{subfigure}{0.32\textwidth}
\includegraphics[width=\textwidth]{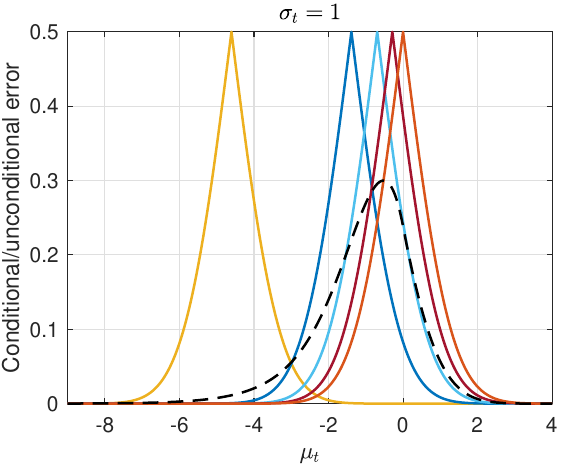}
\end{subfigure}
\begin{subfigure}{0.32\textwidth}
\includegraphics[width=\textwidth]{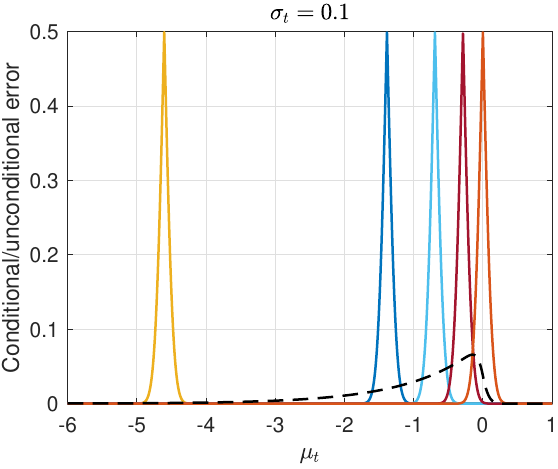}
\end{subfigure}
\caption{The coloured solid lines show the conditional error $\conderr_{t,u,\hatgamma}$ in (\ref{eq:conderr_gp}) with some choices of $u$ and the dashed black line is the unconditional error $\unconderr_{t,\hatgamma}$ in (\ref{eq:unconderr_gp}). 
With each realisation of $u$, $\mu_t$ can be chosen so that $\conderr_{t,u,\hatgamma}$ equals its maximal value $\Phi(0)=1/2$. 
When the GP posterior becomes more accurate in the sense that $\sigma_t$ decreases, the intervals containing such $\mu_t$ values that lead to unconfident \mh{} accept/reject decision become narrower. Importantly, negligible (un)conditional error results when $|\mu_t|$ is large as compared to $\sigma_t$ in which case the accept/reject decision can be done with high confidence although the corresponding likelihood ratio is not necessarily known accurately.} \label{fig:cond_vs_uncond_errors}
\end{figure}

We can also define 
\begin{align*}
   \accrejindic_{u,f}(\Btheta,\Btheta') \eqdef \indic_{\alpha_f(\Btheta,\Btheta')\geq u} = \indic_{\log\gamma_f(\Btheta,\Btheta')\geq \log(u)}
\end{align*}
so that $\accrejindic_{u,f}(\Btheta,\Btheta')=1$ if $\Btheta'$ is to be accepted and $\accrejindic_{u,f}(\Btheta,\Btheta')=0$ otherwise (for a given $u\in[0,1]$ and $f$). 
We define $\logu\eqdef\log(u)$ and see immediately that $\prob_{f\cond\ddata_t}(\accrejindic_{u,f}(\Btheta,\Btheta') = 0) = \Phi((\logu-\mu_t(\Btheta,\Btheta'))/\sigma_t(\Btheta,\Btheta'))$ and $\mean_{f\cond\ddata_t}\accrejindic_{u,f}(\Btheta,\Btheta') = \prob_{f\cond\ddata_t}(\accrejindic_{u,f}(\Btheta,\Btheta') = 1) = \Phi((\mu_t(\Btheta,\Btheta')- \logu)/\sigma_t(\Btheta,\Btheta'))$. 
It further holds that 
\begin{align}
    %
    \Var_{f\cond\ddata_t}\accrejindic_{u,f}(\Btheta,\Btheta') &= \Phi((\logu-\mu_t(\Btheta,\Btheta'))/\sigma_t(\Btheta,\Btheta'))\Phi((\mu_t(\Btheta,\Btheta')- \logu)/\sigma_t(\Btheta,\Btheta')). 
    \label{eq:kappavar}
\end{align}
We also see that 
\begin{align*}
    \conderr_{t,u,\hatgamma}(\Btheta,\Btheta') &= \min\{\prob(\accrejindic_{u,f}(\Btheta,\Btheta') = 0\cond\ddata_t), \prob(\accrejindic_{u,f}(\Btheta,\Btheta') = 1\cond\ddata_t)\},
\end{align*}
which shows that the most probable decision given the GP posterior is here made, as also implied by (\ref{eq:minp1-p}). 
Using the equations above, the fact $\Phi(z)=1-\Phi(-z)$ and the inequality $\min\{x,1-x\}\leq\sqrt{x(1-x)}$ for $x\in[0,1]$, we also see that the conditional error $\conderr_{t,u,\hatgamma}(\Btheta,\Btheta')$ is upper bounded by $({\Var_{f\cond\ddata_t}\accrejindic_{u,f}(\Btheta,\Btheta')})^{1/2}$.

\subsection{\gpmh{} implementation} \label{subsec:thealgorithm}

We combine the preceding analysis and GP model to form Algorithm \ref{alg:gpmh}. 
At each iteration $i$ of the resulting \gpmh{} algorithm, the proposal $\Btheta'^{(i)}$ is either accepted or rejected based on the GP posterior conditioned on the $t$ evaluations in $\ddata_t$ collected that far. The decision is made in a greedy optimal manner\footnote{This approach is greedy in the sense that the optimal decision is made at the iteration $i$ but its effect on the possible future decisions at later iterations is not taken into account.} 
based on the results in Section \ref{subsec:cond_uncond_errors} and \ref{subsec:alpha_gp}. In a similar spirit to \citet{Meeds2014,Korattikara2014}, new log-likelihood evaluations are acquired (lines \ref{line:acq}-\ref{line:sleval2}) until the estimated probability of making an incorrect \mh{} accept/reject decision is smaller than a pre-specified tolerance parameter $\epsi$ (line \ref{line:acc_check}). In most iterations, this is achieved without new log-likelihood evaluations which facilitates computational savings. 
Details on the selection of the evaluation locations 
(line \ref{line:acq}) are given in Section \ref{sec:acq}. 
The GP model is updated after each new evaluation (line \ref{line:gpupd}). 
The outputted samples are finally used to approximate the posterior or some posterior expectations of interest via (\ref{eq:posterior_mean_mcmc_approx}). 
Note that two distinct parameter sets are maintained in the algorithm: $\Btheta_j$ in $\ddata_t$ denote the evaluation locations for the GP fit whereas $\Btheta^{(i)}$ denote the resulting approximate \mh{} samples. 
Handling of possible problematic log-likelihood evaluations on lines \ref{line:sleval1} and \ref{line:sleval2} are described in \appe{} \ref{appsec:gpmodellingdetails}. 

\begin{algorithm}[htb]
\caption{Approximate GP-emulated \mh{} (\gpmh{})} \label{alg:gpmh}
 \begin{algorithmic}[1]
 \Require Prior $\pi(\Btheta)$, GP model for ${f}$, no.~initial evaluations $t_{\text{init}}$, error tolerance $\epsi$, initial point $\Btheta^{(0)}$, proposal $q(\Btheta'\cond\Btheta)$, no.~MH~samples $i_{\text{MH}}$ 
 \Ensure Approximate \mh{} samples $\Btheta^{(1)},\ldots,\Btheta^{(i_{\text{MH}})}$
 %
 \For{$j=1:t_{\text{init}}$} \Comment{Obtain evaluations for the initial GP fitting.} \label{line:line1}\label{line:initstart}
 \State Sample $\Btheta_{j} \simiid q(\cdot\cond\Btheta^{(0)})$ \Comment{Other initial points can also be used.\,\,}
 \State Set $y_{j}$ $\leftarrow$ log-likelihood evaluation at $\Btheta_j$ 
 \label{line:sleval1}
 \EndFor \label{line:line4}
 \State Set $t \leftarrow t_{\text{init}}$ and $\ddata_t \leftarrow \{(y_{j}, \Btheta_{j})\}_{j=1}^{t}$
 \State Fit GP using $\ddata_t$ \label{line:initend}
 %
 \For{$i=1:i_{\text{MH}}$} 
  \State Sample $\Btheta'^{(i)} \sim q(\cdot\cond\Btheta^{(i-1)})$ and $u^{(i)}\sim\Unif([0,1])$ \label{line:newthetagpmh}
  \While{$\unconderr_{t,\hatgamma}(\Btheta^{(i-1)},\Btheta'^{(i)}) > \epsi$} \label{line:unconderr_comp} 
  \Comment{Alternatively, use $\conderr_{t,u^{(i)},\hatgamma}$.}
  \label{line:acc_check}
   \State Obtain $\Btheta^*$ as a solution to (\ref{eq:xiopt}) \Comment{{See Section \ref{sec:acq}}.\,\,}
   \label{line:acq}
   \State Set $y^*$ $\leftarrow$ log-likelihood evaluation at $\Btheta^*$ 
   \label{line:sleval2}
   \State Set $t \leftarrow t+1$ and $\ddata_{t} \leftarrow \ddata_{t-1} \cup \{(y^{*}, \Btheta^{*})\}$
   \State Refit GP using $\ddata_{t}$ \label{line:gpupd} 
  \EndWhile
 \State Set $\Btheta^{(i)} \leftarrow \Btheta'^{(i)}\indic_{\hatgamma \geq u^{(i)}} + \Btheta^{(i-1)}\indic_{\hatgamma < u^{(i)}}$ \Comment{{Accept/reject $\Btheta'^{(i)}$; $\hatgamma$ computed using (\ref{eq:taugpoptimal}).}} \label{line:accorrej}
 \EndFor
 \label{line:post}
 \end{algorithmic}
\end{algorithm}

In this paper we consider a random-walk Metropolis version of \gpmh{} with $q(\Btheta'\cond\Btheta)=\Normal_p(\Btheta'\cond\Btheta,\BSigma)$. 
It is often difficult to select a suitable proposal covariance $\BSigma$ \textit{a priori}. 
In our implementation we specify an initial covariance matrix $\BSigma_{0}$ and update it based on the obtained samples as in the adaptive Metropolis algorithm by \citet{Haario2001}. 
We use the initial proposal density $\Normal_p(\Btheta'\cond\Btheta,\BSigma_0)$ also to obtain evaluations around $\Btheta^{(0)}$ for initial GP fitting (lines \ref{line:line1}-\ref{line:line4}). 
Contrary to a typical MCMC use case, possible poor mixing is not a major concern in our setting where the parameter space is low-dimensional, time spent on evaluating the log-likelihood dominates and the \mh{} accept/reject decision is based solely on the GP on most iterations. Finding a good initial location and proposal covariance is still beneficial and pilot runs may be needed. 

\section{One-step ahead optimal evaluation locations} \label{sec:acq}

We choose the evaluation locations in a one-step ahead optimal manner in the sense of Bayesian experimental design theory, see e.g.~\citet[Chapter~5]{Garnett2022} for background. In this so-called ``myopic'' strategy only the effect of the next log-likelihood evaluation (or a batch of evaluations) is taken into account while potential additional future evaluations needed to make the (un)conditional error eventually smaller than $\epsi$ or possible later \mh{} transitions are not. This common strategy substantially simplifies the computations. 
We denote a collection of candidate evaluation locations as $\Btheta^*\in\reals^{p\times b}$ and the corresponding log-likelihood evaluations as $\By^*\in\reals^b$, where $b\geq 1$ is the batch size, that is, the number of simultaneous evaluations. We also denote $\ddata^* \eqdef \{(y_j^*,\Btheta_j^*)\}_{j=1}^b$. 
Algorithm \ref{alg:gpmh} is stated for the sequential case $b=1$ but we present the theory for the more general batch case $b\geq 1$ as this comes with little additional difficulty. 

Specifically, such evaluation location(s) $\Btheta^*$ are treated optimal that minimise the expected loss where the loss function quantifies the uncertainty associated with the current \mh{} accept/reject decision. Suitable choices include the conditional and unconditional errors. The expectation is taken with respect to the predictive density of the future log-likelihood evaluation(s) at $\Btheta^*$ given by $\pi(\By^*\cond\Btheta^*,\ddata_t) = \Normal_b(\By^*\cond m_t(\Btheta^*),c_t(\Btheta^*,\Btheta^*) + \BLambda^{\!*})$ where $\BLambda^{\!*} \eqdef \diag(\sigma_n^2(\Btheta^*_1),\ldots,\sigma_n^2(\Btheta^*_b))$. 

\begin{proposition} \label{prop:error_formulas}
Suppose the GP model in Section \ref{subsec:gp_model} holds and consider the above set-up. 
The expected conditional error $L_t^{\conderr,u}(\Btheta,\Btheta';\Btheta^*)$, the expected unconditional error $L_t^{\unconderr}(\Btheta,\Btheta';\Btheta^*)$ and the expected variance of $\accrejindic_{u,f}$ denoted by $L_t^{\textnormal{v}}(\Btheta,\Btheta';\Btheta^*)$ are then given by  
\begin{align}
    L_t^{\conderr,u}(\Btheta,\Btheta';\Btheta^*) &\eqdef \mean_{\By^*\cond\Btheta^*,\ddata_t} \conderr_{t+b,u,\hatgamma}(\Btheta,\Btheta') 
    = 2T\left( \frac{\logu-\mu_t(\Btheta,\Btheta')}{\sigma_t(\Btheta,\Btheta')}, \frac{\sqrt{\sigma_t^2(\Btheta,\Btheta') - \xi_t^2(\Btheta,\Btheta';\Btheta^*)}}{ \xi_t(\Btheta,\Btheta';\Btheta^*)} \right), \nonumber 
    \\
%
    L_t^{\unconderr}(\Btheta,\Btheta';\Btheta^*) &\eqdef \mean_{\By^*\cond\Btheta^*,\ddata_t} \unconderr_{t+b,\hatgamma}(\Btheta,\Btheta') 
    = \int_0^1 L_t^{\conderr,u}(\Btheta,\Btheta';\Btheta^*) \ud u, \label{eq:exp_unconderr} \\
%
\begin{split}
    L_t^{\textnormal{V},u}(\Btheta,\Btheta';\Btheta^*) &\eqdef \mean_{\By^*\cond\Btheta^*,\ddata_t} \Var_{f\cond\ddata_t\cup\ddata^*}\accrejindic_{u,f}(\Btheta,\Btheta') \\
    &= 2T\left( \frac{\logu-\mu_t(\Btheta,\Btheta')}{\sigma_t(\Btheta,\Btheta')}, \sqrt{\frac{\sigma_t^2(\Btheta,\Btheta') - \xi_t^2(\Btheta,\Btheta';\Btheta^*)}{\sigma_t^2(\Btheta,\Btheta') + \xi_t^2(\Btheta,\Btheta';\Btheta^*)}} \right), \label{eq:exp_var}
\end{split}
\end{align}
respectively. Above $T(\cdot,\cdot)$ denotes the \owen{} \citep{Owen1956,Owen1980} and
\begin{align}
    \xi_t^2(\Btheta,\Btheta';\Btheta^*) &= \tau_t^2(\Btheta;\Btheta^*) + \tau_t^2(\Btheta';\Btheta^*) - 2\omega_t(\Btheta,\Btheta';\Btheta^*), \label{eq:xi} \\
    %
    %
    \tau_t^2(\Btheta_{\bu};\Btheta^*) &= \omega_t(\Btheta_{\bu},\Btheta_{\bu};\Btheta^*), \label{eq:gptau} \\ 
    \omega_t(\Btheta,\Btheta';\Btheta^*) &= c_t(\Btheta,\Btheta^*)[c_t(\Btheta^*,\Btheta^*) + \BLambda^{\!*}]^{-1}c_t(\Btheta^*,\Btheta'). \label{eq:gpomega} 
    %
\end{align}
\end{proposition}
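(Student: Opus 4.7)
The plan has three ingredients: (i) the joint distribution of $(\mu_{t+b},\sigma_{t+b})$ as a function of the hypothetical $\By^{*}$, (ii) substitution into the conditional error and reduction to an Owen's $T$ expression via a bivariate normal orthant probability, and (iii) mechanical adaptation of the same argument for the expected variance of $\accrejindic_{u,f}$.

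First, the standard rank-$b$ GP update formulas show that $m_{t+b}(\Btheta)$ is an affine function of $\By^{*}$ whereas $c_{t+b}$ is deterministic in $\By^{*}$. Drawing $\By^{*}$ from the prior predictive $\Normal_b(m_t(\Btheta^{*}),c_t(\Btheta^{*},\Btheta^{*})+\BLambda^{*})$ and applying the tower property, a short calculation gives
\begin{equation*}
\mu_{t+b}(\Btheta,\Btheta')\mid\Btheta^{*},\ddata_t\sim\Normal\!\bigl(\mu_t(\Btheta,\Btheta'),\,\xi_t^2(\Btheta,\Btheta';\Btheta^{*})\bigr),
\end{equation*}
with $\xi_t^2$ as in (\ref{eq:xi}), along with the deterministic reduction $\sigma_{t+b}^2(\Btheta,\Btheta')=\sigma_t^2(\Btheta,\Btheta')-\xi_t^2(\Btheta,\Btheta';\Btheta^{*})$. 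The variance reproduces $\xi_t^2$ directly because the affine coefficient of $\By^{*}$ in $m_{t+b}(\Btheta')-m_{t+b}(\Btheta)$ is $(c_t(\Btheta',\Btheta^{*})-c_t(\Btheta,\Btheta^{*}))[c_t(\Btheta^{*},\Btheta^{*})+\BLambda^{*}]^{-1}$.

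Second, substituting into (\ref{eq:conderr_gp}) yields
\begin{equation*}
L_t^{\conderr,u}(\Btheta,\Btheta';\Btheta^{*})=\mean_{Z}\,\Phi\!\left(-\frac{|\xi_t Z+\mu_t-\logu|}{\sigma_{t+b}}\right),\quad Z\sim\Normal(0,1),
\end{equation*}
suppressing the arguments of $\mu_t,\sigma_t,\xi_t,\sigma_{t+b}$. Introducing an auxiliary $N\sim\Normal(0,1)$ independent of $Z$ and using $\Phi(-|x|)=\prob(N\leq x,\,N\leq -x)$, I rewrite the right-hand side as a centred bivariate Gaussian orthant probability in the pair $(\sigma_{t+b}N+\xi_t Z,\,\sigma_{t+b}N-\xi_t Z)$. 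These two coordinates have equal variance $\sigma_t^2$ and correlation $\rho=(\sigma_{t+b}^2-\xi_t^2)/(\sigma_{t+b}^2+\xi_t^2)$, so after standardising the expectation becomes $\Phi_2(h,-h;\rho)$ with $h=(\mu_t-\logu)/\sigma_t$. Owen's classical identity for this anti-symmetric orthant specialises to $\Phi_2(h,-h;\rho)=2T(h,(1+\rho)/\sqrt{1-\rho^2})$; substituting $\rho$ gives $(1+\rho)/\sqrt{1-\rho^2}=\sigma_{t+b}/\xi_t=\sqrt{\sigma_t^2-\xi_t^2}/\xi_t$, and $T(-h,a)=T(h,a)$ yields the stated formula. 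The expected unconditional error then follows by Fubini applied to $\mean_{\By^{*}}\int_0^1\conderr_{t+b,u,\hatgamma}\,\ud u$.

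Third, for the expected variance I start from (\ref{eq:kappavar}), which simplifies to $\Var_{f\mid\ddata_{t+b}}(\accrejindic_{u,f})=\Phi(y)(1-\Phi(y))$ with $y=(\mu_{t+b}-\logu)/\sigma_{t+b}$. Since $\mean_{\By^{*}}\Phi(y)=\Phi(h)$ and, via the same auxiliary-pair trick, $\mean_{\By^{*}}\Phi(y)^2=\Phi_2(h,h;\rho')$ with $\rho'=\xi_t^2/\sigma_t^2$, one has $L_t^{\textnormal{v},u}=\Phi(h)-\Phi_2(h,h;\rho')$. The analogous Owen identity $\Phi_2(h,h;\rho')=\Phi(h)-2T(h,\sqrt{(1-\rho')/(1+\rho')})$ and the simplification $(1-\rho')/(1+\rho')=(\sigma_t^2-\xi_t^2)/(\sigma_t^2+\xi_t^2)$ then deliver the target expression. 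The main obstacle is the careful invocation of Owen's reduction relating $\Phi_2$ to $T$: two sign cases for its piecewise indicator term appear (the anti-diagonal orthant with $hk<0$ for the conditional error versus the diagonal orthant with $hk\geq 0$ for the variance), and one must verify that the algebraic simplifications of $(1\pm\rho)/\sqrt{1-\rho^2}$ in terms of $\sigma_t,\sigma_{t+b},\xi_t$ produce exactly the forms stated in the proposition; the remainder is bookkeeping.
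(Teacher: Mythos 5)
Your proof is correct, and it shares the paper's backbone: both arguments first establish that, under $\By^*\cond\Btheta^*,\ddata_t$, the updated mean difference satisfies $\mu_{t+b}(\Btheta,\Btheta')\sim\Normal(\mu_t,\xi_t^2)$ while $\sigma_{t+b}^2=\sigma_t^2-\xi_t^2$ is deterministic (the paper imports this from Lemma 5.1 of J\"arvenp\"a\"a et al.\ 2019, whereas you rederive it from the affine dependence of the posterior mean on $\By^*$, which is fine and matches (\ref{eq:gpomegav2})), and both then reduce a Gaussian expectation of normal cdfs to \owen{} via the bivariate normal connection. Where you genuinely diverge is in how that reduction is executed. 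The paper splits the conditional error into three terms as in (\ref{eq:cv1}), evaluates each expectation separately using Owen's table integral 10{,}010.1 to produce a $\bvn$ cdf with arguments $\frac{\logu-\mu_t}{\sigma_t}$ and $\frac{\logu-\mu_t}{\xi_t}$ (so $hk\geq 0$ in Owen's equation 3.1), and lets the $\Phi$-terms cancel; for the variance term it simply cites Lemma 3.1 of J\"arvenp\"a\"a et al.\ (2018). You instead keep the error whole, representing $\Phi(-|x|)$ and $\Phi(x)^2$ as orthant probabilities with auxiliary independent standard normals, which lands you directly on $\Phi_2(h,-h;\rho)$ and $\Phi_2(h,h;\rho')$ and then on the same Owen reduction; your algebra for $(1+\rho)/\sqrt{1-\rho^2}=\sqrt{\sigma_t^2-\xi_t^2}/\xi_t$ and $(1-\rho')/(1+\rho')=(\sigma_t^2-\xi_t^2)/(\sigma_t^2+\xi_t^2)$ checks out, and you correctly flag the two sign regimes of the indicator term in Owen's identity (your anti-diagonal case $hk\leq 0$ versus the paper's $hk\geq 0$ configuration; the $h=0$ boundary is covered by continuity). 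The payoff of your route is a fully self-contained and somewhat shorter argument that avoids the term-by-term bookkeeping and the external citation for the variance criterion; the paper's route has the minor advantage of leaning on already-published identities so that only elementary cancellations remain to be verified.
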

Note that above $\conderr_{t+b,u,\hatgamma}(\Btheta,\Btheta')$ and $\unconderr_{t+b,\hatgamma}(\Btheta,\Btheta')$ both depend on $\By^*$ and $\Btheta^*$ via $\ddata^*$ although this is not explicitly shown in the notation. 

Another strategy would be to select such $\Btheta^*$ that minimises the future variance of the log-\mh{} ratio $\Var_{f\cond\ddata_t\cup\ddata^*}\log\gamma_f(\Btheta,\Btheta') = \sigma^2_{t+b}(\Btheta,\Btheta')=\sigma_t^2(\Btheta,\Btheta') - \xi_t^2(\Btheta,\Btheta';\Btheta^*)$ as this acquisition function conveniently does not depend on the unknown $\By^*$. 
Interestingly, $\Var_{f\cond\ddata_t\cup\ddata^*}\log\gamma_f(\Btheta,\Btheta')$ and all three acquisition functions of Proposition \ref{prop:error_formulas} share the same global minimiser: 
\begin{proposition} \label{prop:xi}
\sloppy
The global minimum $\Btheta_{\textnormal{opt}}\in\Theta^b$ for the variance of the log-\mh{} ratio $\Var_{f\cond\ddata_t\cup\ddata^*}\log\gamma_f(\Btheta,\Btheta')$, for the expected conditional error $L_t^{\conderr,u}(\Btheta,\Btheta';\Btheta^*)$ with any $u\in[0,1]$, 
for the expected unconditional error $L_t^{\unconderr}(\Btheta,\Btheta';\Btheta^*)$ and also for the expected variance $L_t^{\textnormal{v},u}(\Btheta,\Btheta';\Btheta^*)$ of $\accrejindic_{u,f}$ with any $u\in[0,1]$, is given by 
\begin{align}
    \Btheta_{\textnormal{opt}} \in \arg\max_{\Btheta^*\in\Theta^b}\xi_t^2(\Btheta,\Btheta';\Btheta^*).
    \label{eq:xiopt}
\end{align}
\end{proposition}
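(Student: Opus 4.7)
The plan is to observe that the candidate locations $\Btheta^*$ enter all three expected error criteria of Proposition \ref{prop:error_formulas} only through the scalar $\xi_t^2(\Btheta,\Btheta';\Btheta^*)$, which reduces the multivariate minimization to a one-dimensional monotonicity argument.

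First I would establish the admissible range of $\xi_t^2$. Applying the GP posterior-update formula for the covariance after hypothetically adding the batch $\ddata^*$ and substituting into the definition $\sigma_t^2 = s_t^2(\Btheta)+s_t^2(\Btheta')-2c_t(\Btheta,\Btheta')$ yields the identity $\sigma_{t+b}^2(\Btheta,\Btheta') = \sigma_t^2(\Btheta,\Btheta') - \xi_t^2(\Btheta,\Btheta';\Btheta^*)$. Since posterior variances are non-negative, this gives $\xi_t^2 \leq \sigma_t^2$. The alternative expression (\ref{eq:gpomegav2}) writes $\xi_t^2$ as a quadratic form in the positive-definite matrix $[c_t(\Btheta^*,\Btheta^*)+\BLambda^*]^{-1}$, so $\xi_t^2 \geq 0$. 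Hence $\xi_t^2 \in [0,\sigma_t^2]$, and the square roots in Proposition \ref{prop:error_formulas} are well-defined.

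Next I would invoke the monotonicity of \owen{}: since its integrand is non-negative on $[0,a]$ for $a\geq 0$, $T(h,a)$ is monotonically non-decreasing in $a\geq 0$ for every fixed $h\in\reals$. For $L_t^{\conderr,u}$ the second argument of $T$ equals $\sqrt{\sigma_t^2-\xi_t^2}/\xi_t$, strictly decreasing in $\xi_t^2\in(0,\sigma_t^2]$; for $L_t^{\textnormal{v},u}$ it equals $\sqrt{(\sigma_t^2-\xi_t^2)/(\sigma_t^2+\xi_t^2)}$, again strictly decreasing (numerator decreases, denominator increases). The first argument of $T$ in both formulas is independent of $\Btheta^*$, so any maximizer of $\xi_t^2$ over $\Btheta^*\in\Theta^b$ simultaneously minimizes $L_t^{\conderr,u}$ and $L_t^{\textnormal{v},u}$ pointwise in $u\in[0,1]$.

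Finally, since $L_t^{\unconderr}$ is, by (\ref{eq:exp_unconderr}), the integral over $u\in[0,1]$ of $L_t^{\conderr,u}$, a maximizer of $\xi_t^2$ that minimizes the integrand uniformly in $u$ is also a global minimizer of $L_t^{\unconderr}$, establishing (\ref{eq:xiopt}). The main subtlety I foresee is rigorously deriving the identity $\xi_t^2 = \sigma_t^2-\sigma_{t+b}^2$ so that $[0,\sigma_t^2]$ is indeed the range; once this is in hand the remainder is a routine monotonicity argument, and the boundary cases $\xi_t^2\in\{0,\sigma_t^2\}$ can be handled by continuity using the limits $2T(h,\infty)=\Phi(-|h|)$ and $T(h,0)=0$ to recover the unimproved and fully-reduced errors respectively.
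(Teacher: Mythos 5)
Your proposal is correct and follows essentially the same route as the paper: the criteria depend on $\Btheta^*$ only through $\xi_t^2$, the \owen{} is monotone in its second argument (the paper notes $a\mapsto T(h,\sqrt{a})$ is strictly increasing), the bound $0\leq\xi_t^2\leq\sigma_t^2$ makes the second argument decreasing in $\xi_t^2$, and the unconditional case follows by integrating the pointwise-in-$u$ statement. Your added detail on deriving the range of $\xi_t^2$ via $\sigma_{t+b}^2=\sigma_t^2-\xi_t^2$ and the boundary limits is consistent with the identities the paper establishes in the proof of Proposition \ref{prop:error_formulas}.
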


We interpret $\arg\max_{\Btheta^*\in\Theta^b}\xi_t^2(\Btheta,\Btheta';\Btheta^*)$ as a set because the minimiser in (\ref{eq:xiopt}) may not be unique. 
When $b=1$ and $\Theta=\prod_{i=1}^p[a_i,b_i]$ where we allow $a_i=-\infty$ and $b_i=\infty$, the optimisation task (\ref{eq:xiopt}) could be simplified by replacing $\Theta$ with some bounded set $\tilde{\Theta}\subset\Theta$ located around $\Btheta$ and $\Btheta'$. For example, we can choose
\begin{equation}
    \tilde{\Theta} = \prod_{i=1}^p [\max\{\min\{\theta_i,\theta_i'\}-cl_i,a_i\},\min\{\max\{\theta_i,\theta_i'\}+cl_i,b_i\}], 
    \label{eq:opt_example_region}
\end{equation}
where $l_i$'s denote the lengths-cales of the GP covariance function and $c>0$ controls the size of the set. 
Of course, the set (\ref{eq:opt_example_region}) is not guaranteed to contain the global optimum unless $c$ is large enough. Another tempting choice is $\tilde{\Theta} = \{\Btheta,\Btheta'\}$
%
%
which simplifies the implementation and facilitates fast computations because no auxiliary global optimisation is then needed. 


\section{Interpretations of \gpmh{}} \label{sec:probinterp}

In Section \ref{subsec:mhasspecialcaseofgpmh} we reason that the standard and noisy/pseudo-marginal \mh{} samplers are sorts of special cases of the \gpmh{} framework when an uninformative GP prior is used. 
In Section \ref{subsec:gpmcmcasblfi} we show a concrete relation between \gpmh{} and the \BLFI{} frameworks. This motivates an alternative, two-stage implementation of \gpmh{} called \mhblfi{}. 
In Section \ref{subsec:probnum} we argue that when the randomness of the \mh{} sampler in Algorithm \ref{alg:mh} is considered fixed, the sampler can be treated as a deterministic mapping whose input is the likelihood function and whose output is a (multi)set of parameter values. The uncertainty in the output due to the use of the GP model in the place of the exact likelihood function can then be probabilistically quantified in principle. \gpmh{} can be viewed as a heuristic yet tractable estimate for the \mh{} output under this conceptual ``Bayesian'' \mh{} setting. 

\subsection{Standard and noisy/pseudo-marginal \mh{} as special cases of \gpmh{}} \label{subsec:mhasspecialcaseofgpmh}

Throughout this section the log-likelihood function $f$ is modelled with a white noise process. That is, the GP prior has mean $m_0(\Btheta)=0$ and covariance function $k_{\Bphi}(\Btheta,\Btheta\pr) = \sigma_s^2\indic_{\Btheta=\Btheta\pr}$ with $\Bphi=\sigma_s^2>0$. We further assume $\sigma_s^2$ is set arbitrarily large. 

First, suppose $\sigma_n^2(\Btheta)=0$ for all $\Btheta$ and $0 \leq \epsi<1/2$. The evaluation at the current point $\Btheta$ is automatically reused from the previous iteration via the GP model\footnote{This holds also at the first \gpmh{} iteration if the initial GP fitting consists of a single evaluation at the initial point $\Btheta=\Btheta^{(0)}$. Other initialisations would not be useful anyway in this setting.} and is not recomputed. If the proposed point $\Btheta'$ has not been visited before, \gpmh{} algorithm needs to evaluate\footnote{Technically, this does not hold when $\pi(\Btheta')=0$. Although not explicitly shown in Algorithm \ref{alg:gpmh}, in such a case $\Btheta'$ is rejected without evaluating the (un)conditional error or the log-likelihood which is in line with the standard \mh{} case.} at $\Btheta'$ because otherwise the (un)conditional error would remain $1/2$. All of our acquisition functions will select $\Btheta'$ in this case and the (un)conditional error becomes $0$ after this exact evaluation, see Section \ref{sssec:xi} for further details on this. 
If $\Btheta'$ has been visited before, its evaluation is also automatically reused via the GP and the (un)conditional error is $0$. Hence, we conclude that \gpmh{} reverts to standard \mh{} where the GP model has no role but to cache the old evaluations for reuse.

Suppose next that the evaluations are noisy so that $\sigma_n^2(\Btheta)>0$ for all $\Btheta$ and $0<\epsi<1/2$. 
Noisy/pseudo-marginal \mh{} is then technically not a special case of Algorithm \ref{alg:gpmh} but falls under a slightly generalised version of our framework where one (and just one) noisy evaluation at $\Btheta'$ is acquired even when the resulting (un)conditional error is not smaller than $\epsi$. 
The GP model again caches the old evaluations. A disagreement however arises if the proposal $\Btheta'$ has been visited before. Then noisy/pseudo-marginal \mh{} would use a new noisy evaluation at $\Btheta'$ whereas the old evaluation at $\Btheta'$ would also be taken into account via the GP model in \gpmh{}. This disagreement however occurs with probability zero in the common case where $\Theta$ is uncountable and the proposal $q$ does not contain any point masses.
Finally, we observe that Algorithm \ref{alg:gpmh} with the white noise prior and $0<\epsi<1/2$ can be seen as a variant of the algorithm by \citet{Korattikara2014} where Bayesian analysis is used instead of their frequentist hypothesis testing approach to determine each \mh{} accept/reject decision.

\subsection{The relation between \gpmh{} and \BLFI{}} \label{subsec:gpmcmcasblfi}

Suppose no new log-likelihood evaluations are needed in Algorithm \ref{alg:gpmh} after some iteration $i$ because any possible later transition can be done confidently enough based on the current GP model. 
From iteration $i$ onwards Algorithm \ref{alg:gpmh} then acts as an exact \mh{} sampler that targets a density proportional to
\begin{equation}
    \med_{f\cond\ddata_t}(\tildepiapprox_f(\Btheta)) = \pi(\Btheta)\e^{m_t(\Btheta)}. \label{eq:medestim}
\end{equation}
This fact follows because in this case $\ddata_t$ does not change after iteration $i$ and because
\begin{align*}
    \hatgamma = \med_{f\cond\ddata_t}(\gamma_f(\Btheta,\Btheta')) &= \e^{\mu_t(\Btheta,\Btheta')}
    %
    %
    = \frac{\pi(\Btheta')\e^{m_t(\Btheta')}q(\Btheta\cond\Btheta')}{\pi(\Btheta)\e^{m_t(\Btheta)}q(\Btheta'\cond\Btheta)}
    = \frac{\med_{f\cond\ddata_t}(\tildepiapprox_f(\Btheta'))q(\Btheta\cond\Btheta')}{\med_{f\cond\ddata_t}(\tildepiapprox_f(\Btheta))q(\Btheta'\cond\Btheta)}.
\end{align*}
This situation occurs also if a hard threshold $t_{\text{max}}$ on the maximum number of log-likelihood evaluations, after which no more evaluations are collected even if some later proposed transition would be unconfident, is introduced to Algorithm \ref{alg:gpmh}.

\subsubsection{\gpmh{} as a \BLFI{} method with a stochastic acquisition function}

The above remark suggests an alternative, two-stage variant of \gpmh{}:~1) Run Algorithm \ref{alg:gpmh} until $t_{\text{max}}$ evaluations have been gathered or some other stopping criterion is met. 
2) Treat the estimate (\ref{eq:medestim}) based on the evaluations collected in stage 1 as an approximation to the unnormalised posterior and plug it in to some standard MCMC sampler. 
We write this idea in the form of Algorithm \ref{alg:mhblfi} which in fact is a special case of \BLFI{} (compare Algorithm \ref{alg:mhblfi} to Algorithm \ref{alg:blfi}!) and we call it \mhblfi{}. 
The tail-recursive procedure \textproc{\AcqParams} together with line (\ref{line:mhblfiacq}) is interpreted as a sequential stochastic strategy for selecting the evaluation locations in the \BLFI{} framework. The recursion never completes if the condition $\unconderr_{t,\hatgamma}(\Btheta,\Btheta') \leq \epsi$ holds for all $\Btheta,\Btheta'\in\Theta$ so in practice Algorithm \ref{alg:mhblfi} would need to be further modified to terminate prematurely if the recursion becomes too deep. 

\begin{algorithm}[htb]
\caption{\gpmh{} reformulated in \BLFI{} framework (MH-BLFI)} \label{alg:mhblfi}
 \begin{algorithmic}[1]
 \Require Prior $\pi(\Btheta)$, GP model for ${f}$, error tolerance $\epsi$,  no.~initial evaluations $t_{\text{init}}$, no.~total evaluations $t_{\text{max}}$, initial point $\Btheta^{(0)}$, proposal $q(\Btheta'\cond\Btheta)$, no.~MCMC~samples $s_{\text{MCMC}}$ 
 \Ensure Samples $\Btheta^{(1)},\ldots,\Btheta^{(s_{\text{MCMC}})}$ 
 \item[{\footnotesize{1-6}}:\!\hspace{0.15cm}{Obtain initial GP and $\ddata_{t_{\text{init}}}$}] \Comment{Lines 1-6 are the same as those in Algorithm \ref{alg:gpmh}.}
\makeatletter
\setcounter{ALG@line}{6} 
\makeatother
 \State Set $\Btheta \leftarrow \Btheta^{(0)}$ and sample $\Btheta' \sim q(\cdot\cond\Btheta)$ and $u\sim\Unif([0,1])$
 \For{$t=t_{\text{init}}\!+\!1:t_{\text{max}}$} 
  \State Set $(\Btheta,\Btheta',u) \leftarrow$ \textproc{\AcqParams}($\Btheta,\Btheta',u$)
   \State Obtain $\Btheta^*$ as a solution to (\ref{eq:xiopt}) using ($\Btheta,\Btheta',u$) \Comment{{As in Algorithm \ref{alg:gpmh}}.\label{line:mhblfiacq}}
   \State Set $y^*$ $\leftarrow$ log-likelihood evaluation at $\Btheta^*$ 
   \State Set $\ddata_t \leftarrow \ddata_{t-1} \cup \{(y^{*}, \Btheta^{*})\}$ 
   \State Refit GP using $\ddata_t$
 \EndFor
 \State Sample $\Btheta^{(1)},\ldots,\Btheta^{(s_{\text{MCMC}})}$ from (\ref{eq:medestim}) with MCMC \Comment{Alternatively, use (\ref{eq:mode_post_estim}).}
 \item[] 
 \Procedure{\AcqParams}{$\Btheta,\Btheta',u$} 
 \label{line:acqproc}
 \If{$\unconderr_{t,\hatgamma}(\Btheta,\Btheta') > \epsi$} 
 \State \textbf{Return} $(\Btheta,\Btheta',u)$
 \ElsIf{$\hatgamma(\Btheta,\Btheta') \geq u$} \Comment{{Proposal $\Btheta'$ is accepted. $\hatgamma$ computed using (\ref{eq:taugpoptimal}).}}
 \State $\Btheta \leftarrow \Btheta'$
 \EndIf \Comment{Only the latest accepted $\Btheta$ needs to be stored.}
 \State Sample $\Btheta' \sim q(\cdot\cond\Btheta)$ and $u\sim\Unif([0,1])$
 \State \textbf{Return} \textproc{\AcqParams}($\Btheta,\Btheta',u$)
 \EndProcedure
 \end{algorithmic}
\end{algorithm}

A key difference between \gpmh{} in Algorithm \ref{alg:gpmh} and \mhblfi{} in Algorithm \ref{alg:mhblfi} is that in the former approach the GP model is continuously refined. 
A potential concern of the former approach is that its convergence as an approximate \mh{} sampler is problematic to assess because the target density is slightly changing. 
On the other hand, the fact that the GP surrogate can no longer be improved in the second stage of the latter approach can also be problematic. 
See also \citet[Section~3]{Llorente2021} for some generic discussion on such approaches. 
Another important difference is that Algorithm \ref{alg:gpmh} is run for $i_{\text{MH}}$ iterations and the number of log-likelihood evaluations is not known in advance whereas Algorithm \ref{alg:mhblfi} uses $t_{\text{max}}$ evaluations (or possibly less if an additional stopping criterion is used). 
%


\subsubsection{Robust estimator of the unnormalised posterior} \label{subsubsec:post_estimator}

An explicit estimator for the unnormalised posterior $\tildepiapprox_f(\Btheta)$ is needed in a two-stage approach such as \mhblfi{} in Algorithm \ref{alg:mhblfi}. The choice of this estimator can be based on Bayesian decision theory. As discussed by \citet{Jarvenpaa2020_babc}, the (marginal) median
$\hat{\tpi}_1(\Btheta) = \med_{f\cond\ddata_t}(\tildepiapprox_f(\Btheta)) = \pi(\Btheta)\e^{m_t(\Btheta)}$,
which was also already shown as (\ref{eq:medestim}), minimises the expected $L^1$-loss so that  $\hat{\tpi}_1(\Btheta) = \arg\min_{\tpi}\mean_{f\cond\ddata_t}\tilde{l}_1(\tildepiapprox_f,\tpi)$ where $\tilde{l}_1(\tildepiapprox_f,\tpi) = \int_{\Theta}|\tildepiapprox_f(\Btheta)-\tpi(\Btheta)|\ud\Btheta$. 
This estimator is intuitive and easy to compute. 

Similarly to standard MCMC methods, the parameter regions far in the posterior tails are typically not extensively explored when Algorithm \ref{alg:gpmh} and \ref{alg:mhblfi} are used and so $\ddata_t$ mostly contains points in the highest density region. 
Consequently, the uncertainty of the likelihood function can remain large in such unexplored tail regions and the resulting GP-based estimator for $\tildepiapprox_f(\Btheta)$, such as the median (\ref{eq:medestim}), can unintuitively have a non-negligible value there. 
This can produce difficult target densities for MCMC in the second stage of a two-stage approach such as \mhblfi{}. The issue has also been observed by \citet{Fielding2011,Drovandi2018} and is illustrated in our set-up in Figure \ref{fig:loglik_gp1}: The estimated SL posterior, the red line in Figure \ref{fig:loglik_gp1}b, has two distant modes. The possibility of the second mode at $\theta=30$ cannot be excluded based on the GP model fitted to the $9$ noisy evaluations. 

\begin{figure}[hbtp] 
\centering
\includegraphics[width=0.8\textwidth]{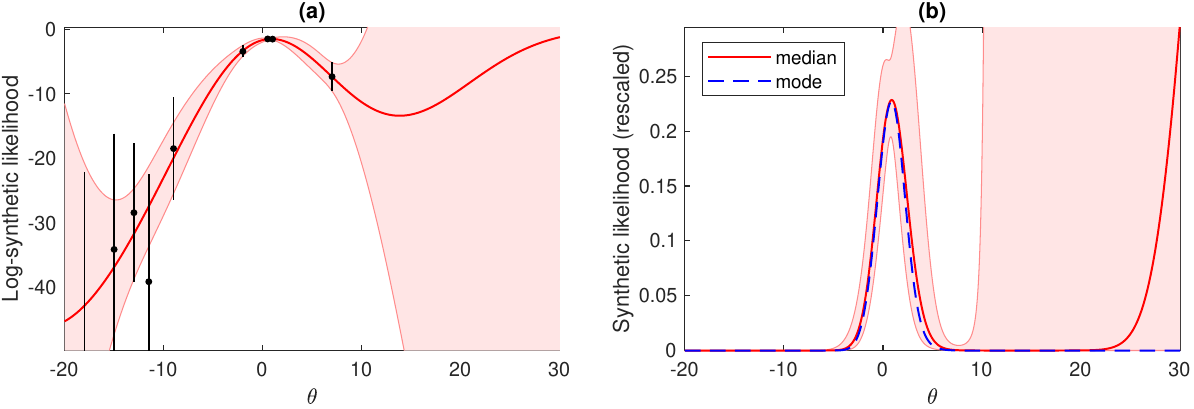}
\caption{(a) Log-SL of a simple 1D toy problem is modelled using a GP with zero-mean prior. Red line (red shaded region) shows the GP posterior mean ($95\%$ credible interval), black dots/lines the log-SL evaluations with corresponding observation errors. (b) The (rescaled) SL, which equals the (unnormalised) SL posterior when a uniform prior on $[-20,30]$ is used, then follows a log-GP. Red line shows the marginal median estimate (\ref{eq:medestim}) and dashed blue line the marginal mode (\ref{eq:mode_post_estim}). The lack of evaluations in $\theta\in[10,30]$ causes large uncertainty in this region as shown by the shaded red region.} \label{fig:loglik_gp1}
\end{figure}

In the practical application of standard MCMC methods all unvisited tail regions are implicitly and inevitably neglected as unlikely as only the visited locations may belong to the final set of samples. \mhblfi{} however requires different, more explicit solution. 
Ideally, the above practical issue is alleviated in \mhblfi{} by incorporating more prior information regarding the possibility of multimodality to the GP model or by meaningfully altering the original prior density $\pi(\Btheta)$ but this is cumbersome. Here we instead propose an estimator for the unnormalised posterior that is shrunk towards zero in regions with large uncertainty.

Consider the loss function
\begin{equation}
    \tilde{l}_g(\tildepiapprox_f,\tpi) 
    \eqdef 
    \int_{\Theta} (1-g(\Btheta))(\tildepiapprox_f(\Btheta)-\tpi(\Btheta))\indic_{\tildepiapprox_f(\Btheta)\geq\tpi(\Btheta)} 
    + g(\Btheta)(\tpi(\Btheta)-\tildepiapprox_f(\Btheta))\indic_{\tpi(\Btheta)>\tildepiapprox_f(\Btheta)} \ud \Btheta,
    \label{eq:better_loss}
\end{equation}
where $g:\Theta\rightarrow(0,1)$ is a weight function. Clearly, the choice $g(\Btheta)=1/2$ gives the $L^1$ loss. By changing the order of expectation and integration and then using Proposition 2.5.5 in \citet{Robert2007}, we can see that the posterior expected loss $\mean_{f\cond\ddata_t}\tilde{l}_g(\tildepiapprox_f,\tpi)$ is minimised when $\tpi(\Btheta)$ is the $(1-g(\Btheta))$-percentile of the log-Normal distribution of $\tildepiapprox_f(\Btheta)\cond\ddata_t$ for (almost) all $\Btheta\in\Theta$. The resulting estimator is hence $\pi(\Btheta)\exp({m_t(\Btheta)+\Phi^{-1}(1-g(\Btheta))s_t(\Btheta)})$.
If we exceptionally allow $g$ to depend on the posterior of $f$ and choose $g(\Btheta) = \Phi(s_t(\Btheta))$ so that the loss function (\ref{eq:better_loss}) penalises large posterior estimates in the regions with large uncertainty, we obtain the estimator
\begin{equation}
     \hat{\tpi}_g(\Btheta) = \mode_{f\cond\ddata_t}(\tildepiapprox_f(\Btheta)) = \pi(\Btheta)\e^{m_t(\Btheta) - s_t^2(\Btheta)}, \label{eq:mode_post_estim}
\end{equation}
which is the marginal mode. 
Estimator (\ref{eq:mode_post_estim}) behaves similarly as (\ref{eq:medestim}) in the highest density region where typically $s_t^2(\Btheta)\approx0$ or $s_t^2(\Btheta) \ll |m_t(\Btheta)|$ but meaningfully shrinks its value towards $0$ in regions with large uncertainty as seen in Figure \ref{fig:loglik_gp1}b (the dashed blue line)\footnote{In \appe{} \ref{appsec:add_viz} we show that a more suitable GP model or an additional evaluation near the right boundary also remove the problematic, second mode in this particular 1D case. 
}. 

\subsection{``Bayesian'' \mh{} sampler} \label{subsec:probnum}

We take here a more conceptual approach than in other sections and again revisit the \mh{} sampler in Algorithm \ref{alg:mh}. 
Samples from $q(\Btheta'\cond\Btheta)$ are often obtained using a relation $\Btheta'=g(\Btheta,\Br)$, where $g:\Theta\times\reals^r\rightarrow\Theta$ is a known function and $\Br$ follows some standard distribution. 
For example, the independent \mh{} sampler is obtained using $g(\Btheta,\Br) = \Br$ and the Gaussian proposal $q(\Btheta'\cond\Btheta)=\Normal_p(\Btheta'\cond\Btheta,\BSigma)$ follows as $g(\Btheta,\Br)=\Btheta+\Br, \Br\sim\Normal_p(\Bzeros,\BSigma)$. 
Although we could proceed more generally, in the following we assume the relation $\Btheta'=g(\Btheta,\Br)=\Btheta + \Br$, where $\Br$ follows some absolute continuous density (e.g.~Gaussian with a non-singular covariance matrix $\BSigma$). 
We disregard here any intricate issues related to the convergence and initialisation of \mh{}. That is, we consider an ideal scenario where the initial point $\Btheta^{(0)}$ is located in the highest density region, a suitable proposal $q$ is immediately available and a single chain (length $n$, no burn-in) long enough to produce a negligible Monte Carlo error is run. 

A preliminary key observation is that instead of drawing $u^{(i)}$ and $\Btheta'^{(i)}$ at each iteration $i$ of Algorithm \ref{alg:mh}, $u^{(i)}$ and $\Br_i$ for $i=1,\ldots,n$ can be pre-generated. From now on we hence suppose $u^{(i)}$'s and $\Br_i$'s are fixed so we can exceptionally treat \mh{} as a deterministic algorithm (or mapping) whose input is the likelihood function and output a (multi)set of $n$ samples. The GP posterior $f\cond\ddata_t\sim\GP(m_t(\Btheta),c_t(\Btheta,\Btheta\pr))$ of the log-likelihood then induces a probability distribution over the output, the $n$ samples. 
The corresponding estimate $\hat{\bar{h}}_n$ in (\ref{eq:posterior_mean_mcmc_approx}) similarly follows a probability distribution induced by the GP.
We call this unusual approach ``Bayesian'' \mh{}. 
In this approach the posterior uncertainty of $f$ is accounted probabilistically but the sampling error due to the finite sample size $n$ is not. This makes Bayesian \mh{} fundamentally different from the related Bayesian quadrature methods \citep{OHagan1991,Rasmussen2003bmc,Briol2019}. 

The proposal is either accepted or the current point is kept at each iteration of \mh{}. 
Hence, the possible states of \mh{} sampler at iteration $i$ are $\mathcal{S}_i \eqdef \{\Btheta^{(0)} + \sum_{j=1}^i e_j \Br_j \cond e_j\in\{0,1\}\forall j\in\{1,\ldots,i\}\}$ so that $S_0 = \{\Btheta^{(0)}\}$, $S_1 = \{\Btheta^{(0)}, \Btheta^{(0)} + \Br_1\}$, $S_2 = \{\Btheta^{(0)}, \Btheta^{(0)} + \Br_1, \Btheta^{(0)} + \Br_2, \Btheta^{(0)} + \Br_1 + \Br_2\}$ and so on. We also have $\mathcal{S}_i\subset\mathcal{S}_{i+1}, i\geq 0$. From now on we assume $|\mathcal{S}_i|=2^i$ for all $0\leq i\leq n$ which would happen with probability $1$ anyway as we already assumed that $\Br_i$'s are realisations of an absolute continuous density. 
We denote the (random) state of Bayesian \mh{} at iteration $i$ as $\Btheta_{(i)}\in\mathcal{S}_i$. 
The Bayesian \mh{} forms itself a discrete-time process with finite state space $\mathcal{S}_i$ whose cardinality grows exponentially as a function of iteration $i$. 
The process is not Markovian in general so that the probability mass function of each realisation only satisfies $p(\Btheta_{(0)},\ldots,\Btheta_{(n)}\cond\ddata_t) = \prod_{i=1}^n p(\Btheta_{(i)}\cond\Btheta_{(i-1)},\ldots,\Btheta_{(0)},\ddata_t)p(\Btheta_{(0)})$, where the initial probability mass function is $p(\Btheta_{(0)})=\smash{\indic_{\Btheta_{(0)}=\Btheta^{(0)}}}$. 
For example, the posterior probability that the true \mh{} chain would first stay at the initial point $\Btheta^{(0)}$ and then move to the proposed point would be $p(\Btheta^{(0)},\Btheta^{(0)},\Btheta^{(0)}+\Br_2\cond\ddata_t) = \prob_{f\cond\ddata_t}(\gamma_f(\Btheta^{(0)},\Btheta^{(0)}+\Br_1)<u^{(1)},\gamma_f(\Btheta^{(0)},\Btheta^{(0)}+\Br_2)\geq u^{(2)})$ which could be computed using the bivariate Gaussian CDF. 
There are exactly $2^n$ paths the true chain can take 
and the probability of other paths is hence $0$. For example, $p(\Btheta^{(0)},\Btheta^{(0)},\Btheta^{(0)}+\Br_1)=0$ since a transition from $\Btheta^{(0)}$ to $\Btheta^{(0)}+\Br_1$ can only happen at iteration $i=1$. 

The expectation of $\hat{\bar{h}}_{n+1}=\sum_{i=0}^n h(\Btheta_{(i)})/(n+1)$ with respect to $f\cond\ddata_t$ is computed as 
\begin{align*}
    \mean_{f\cond\ddata_t}(\hat{\bar{h}}_{n+1}) &= \sum_{(\Btheta_{(0)},\ldots,\Btheta_{(n)})\in\prod_{i=0}^n\mathcal{S}_i} \frac{1}{n+1}\sum_{i=0}^n h(\Btheta_{(i)})p(\Btheta_{(0)},\ldots,\Btheta_{(n)}\cond\ddata_t) \\
    &=
    \frac{1}{n+1}\sum_{i=0}^n \sum_{\Btheta_{(i)}\in\mathcal{S}_i}h(\Btheta_{(i)})p(\Btheta_{(i)}\cond\ddata_t).
\end{align*}
A formula for the variance of $\hat{\bar{h}}_{n+1}$ can also be derived. 
Unfortunately, these computations seem to require repeated evaluations of multivariate Gaussian CDF and would not scale better than $\bigO(2^n)$ in any case. Even if some realisations could be neglected as impossible (e.g.~chains that lead outside the support of the prior density) or extremely unlikely based on the GP, the computations remain intractable. Generating a sample path of $(\Btheta_{(0)},\ldots,\Btheta_{(n)})$ is more feasible as this requires only an iterative generation of a GP sample path. The resulting $\bigO(n^3)$ cost still limits this approach to short chains only. 
GP approximations \citep[see e.g.][]{Wilson2020} could be used to bypass the cubic cost but we do not consider them here. 

We may now view Algorithm \ref{alg:gpmh}, as well as the related GPS-ABC algorithm by \citet{Meeds2014}, as a heuristic yet tractable approach for constructing an estimator for the true \mh{} chain. In these algorithms the most probable decision, either acceptance or rejection of the proposed point, is selected (un)conditionally on $u^{(i)}$ at each iteration $i$ without acknowledging possible future accept/reject decisions or the locations visited during iterations $0,1,\ldots,i-2$. 
Perhaps a more natural estimator would be the most probable set of samples but it also appears intractable. 
The uncertainty of the resulting set of samples (or of the estimate $\hat{\bar{h}}_n$) due to the GP posterior is not explicitly quantified. 

An ideal experimental design strategy for Bayesian \mh{} would be such that minimises the expected uncertainty of the samples or of $\hat{\bar{h}}_n$. 
Such a strategy is also intractable even under the one-step ahead simplification. Furthermore, this strategy would presumably feature similar practical challenges as mentioned in Section \ref{sec:intro}, e.g.~the need for accurate global modelling. 
The tractable sequential strategies developed in Section \ref{sec:acq} instead minimise the expected uncertainty regarding the current accept/reject decision without explicitly accounting for the aforementioned final goal of the inference. 


\section{Summary of theoretical analysis} \label{sec:theory}

We briefly outline two most important theoretical results regarding our \gpmh{} algorithm. 
The first key result is that the optimal evaluation location $\Btheta_{\textnormal{opt}}$ of (\ref{eq:xiopt}) does not generally coincide with $\Btheta$ or $\Btheta'$ unlike one might first intuitively expect, see \appe{} \ref{subsec:eval_loc_analysis} for our detailed analysis on this. This motivates the experimental design strategies of Section \ref{sec:acq} and suggests that restricting the optimisation in (\ref{eq:xiopt}) to $\tilde{\Theta} = \{\Btheta,\Btheta'\}$ may not produce the best possible sample-efficiency. We investigate this empirically in Section \ref{sec:exp}. 

In Section \ref{subsec:nrevals} we analyse the number of log-likelihood evaluations needed at an individual iteration of \gpmh{}. The results there demonstrate a potential shortcoming of \gpmh{}: Some individual step of the algorithm may need even hundreds of evaluations. However, these results concern the worst case situation and are not representative of practice where most accept/reject decisions are done based solely on the GP model and already during the early iterations. The \gpmh{} algorithm may still occasionally get ``stuck'' (somewhat similarly as pseudo-marginal \mh{} but for a completely different reason) especially if $\epsi$ is chosen ``too small''. 
The numerical experiments in Section \ref{sec:exp} show that sample-efficiency comparable to \BLFI{} is reached with suitable choices of $\epsi$. 


%

\section{Numerical experiments}\label{sec:exp}

In this section we investigate the effect of the tolerance parameter $\epsi$ and the developed sequential experimental design strategies on the quality of the resulting posterior approximation. We compare our \gpmh{} and \mhblfi{} implementations and also consider a \BLFI{} implementation with the theoretically well-motivated and best-performing integrated median interquantile range (IMIQR) strategy by \citet{Jarvenpaa2019_sl}. 
We consider three scenarios: 1) synthetically constructed log-densities corrupted with additive Gaussian noise (Section \ref{subsec:toymodels}), 2) SL inference for simulator-based models (Section \ref{subsec:sim_models}), 3) likelihood-free generalised Bayesian inference (Section \ref{subsec:bacterial_model}). This allows us to understand how the algorithms perform both in ideal circumstances and in more realistic situations where the GP modelling assumptions are to some extent violated. 

Our \gpmh{} and \mhblfi{} implementations both use the same GP surrogate and mainly differ in how the final posterior approximation is formed, as discussed in Section \ref{subsec:gpmcmcasblfi}. 
Instead of using \mhblfi{} with pre-determined number of evaluations as in Algorithm \ref{alg:mhblfi}, it is implemented here so that a separate \mh{} sampler targeting the robust mode-based estimate (\ref{eq:mode_post_estim}) is run at various stages during Algorithm \ref{alg:gpmh}. This approach simplified our experiments and reduced their total computational cost. 
We use the unconditional error (\ref{eq:unconderr_gp}) and the two strategies of Section \ref{sec:acq} for selecting the evaluations: 1) ``\epoe{}'' which stands for expected probability of error and requires solving (\ref{eq:xiopt}) over the set (\ref{eq:opt_example_region}) with $c=3/4$, and 2) ``\epoer{}'' where the extra ``r'' informs that the optimisation in (\ref{eq:xiopt}) is restricted to $\tilde{\Theta} = \{\Btheta,\Btheta'\}$. We also consider a baseline ``\naive{}'', where the new evaluation location $\Btheta^*$ is the current point $\Btheta$ with probability $0.5$ and the proposed point $\Btheta'$ otherwise. 

In each experiment, the initial point $\Btheta^{(0)}$ is chosen to be near (but outside) the highest density region to represent the scenario where some weak information about the location is available, e.g.~as a result of pilot runs. 
%
We use a Gaussian proposal whose covariance matrix is updated adaptively as mentioned in Section \ref{subsec:thealgorithm}. 
The first quarter of the (approximate) MCMC samples is always neglected as burn-in. 
%
We use the same GP model of Section \ref{subsec:gp_model} for all of our experiments. In particular, we use basis functions $1, \theta_j, \theta_j^2$ for each dimension $j$ and we set $\Bb=\Bzeros$ and $B_{jk}=30^2\indic_{j=k}$ although this is likely unideal. We assume that the log-likelihood is a smooth function on the highest density region and use the squared exponential covariance function $k_{\Bphi}(\Btheta,\Btheta\pr) = \sigma_{s}^2 \exp(-\sum_{j=1}^p(\theta_j-\theta_j\pr)^2/(2l_j^2))$. 
%
We further use relatively uninformative hyperpriors for the GP hyperparameters $\Bphi=(\sigmaf^2,l_1,\ldots,l_p)$ (and $\sigma_n^2$ in Section \ref{subsec:toymodels} and \ref{subsec:bacterial_model}). The hyperparameters are re-estimated immediately after each new log-likelihood evaluation when $t\leq 300$ and after every $10$th evaluation otherwise.

We summarise the posterior approximation accuracy mainly via the average total variation distance over the coordinate-wise marginal densities as compared to the ground-truth\footnote{Mixing of (approximate) MCMC chains is not a major concern in our low-dimensional set-up and we consequently only analyse the eventual posterior approximation accuracy as a function of the limited number of log-density evaluations used. However, poor mixing or lack of convergence of (approximate) MCMC chains can definitely affect the results in other settings and should in practice be assessed using standard methods, see e.g.~\citet[Chapter~12]{Robert2004}, though their validity is questionable whenever the target density is slightly changing. Also, we do not use criteria such as the effective sample size per computational cost as a summary of performance because our algorithms are approximate and their computational cost varies as a function of iteration.}. In the 2D cell biology experiment we instead use the joint total variation. That is, in the former case we define 
$\TV(\pi,\pi\pr) \eqdef \sum_{j=1}^p\int_{\Theta_j}|\pi(\theta_j)-\pi\pr(\theta_j)|\ud\theta_j/(2p)$  
and in the latter case 
$\TV(\pi,\pi\pr) \eqdef \int_{\Theta}|\pi(\Btheta)-\pi\pr(\Btheta)|\ud\Btheta/2$
where $\pi, \pi\pr$ are pdfs both defined over $\Theta=\prod_{j=1}^p\Theta_j\subset\reals^p$ and $\pi(\theta_j), \pi\pr(\theta_j)$ denote their marginals. 
We chose these criteria to be in line with previous work and because their values are easily approximated using the MCMC output, are not overly sensitive to inevitable small errors especially in the tails and are easy to interpret (as the overlap between the densities in the scale $[0,1]$). 
Each run of the algorithms is repeated $100$ times ($50$ times for cell biology experiment) with different realisation of randomness to assess the variability. 
%
The experiments were performed using MATLAB 2022a. Some GP functionality was taken from GPstuff 4.7 \citep{Vanhatalo2013}.

\subsection{Synthetic log-densities} \label{subsec:toymodels}

We first consider three 6D densities from \citet{Jarvenpaa2019_sl} with different characteristics: a Gaussian density called `Simple', a banana-shaped density `Banana' and a multimodal density `Multimodal'.
The variance of the log-density evaluations $\sigma_n^2(\Btheta)$ is here constant and estimated together with the GP hyperparameters $\Bphi$. We use $\sigma_n=1$ for Banana and Multimodal and $\sigma_n=2$ for the Simple log-density. 
Further details of the target densities are given in \appe{} \ref{appsec:toymodel_details} and some additional results in \appe{} \ref{appsec:synth_add_res}. 
The number of initial evaluations is $t_{\text{init}}=10$. Algorithm \ref{alg:gpmh} is run for $i_{\text{MH}}=10^5$ iterations. 

Figure \ref{fig:res6d_eval} show the median accuracy and the variability of the final posterior approximation. A hypothetical best possible algorithm that always produces an exact posterior ($\TV=0$) without any log-density evaluations would appear in the left lower corner of the figure. We see that \epoe{} produces the best sample-efficiency, the \naive{} method is the worst and \epoer{} is roughly halfway between them. 
Decreasing $\epsi$ leads to more accurate posterior approximations as expected. 
Interestingly, a fairly large tolerance ($\epsi\approx 0.3$) is already sufficient to produce reasonable approximations. In this case \epoe{} needs at most a few hundred log-density evaluations. 
Banana log-density is more challenging than the other two as all methods face some challenges in estimating its long tails. 
\gpmh{} and \mhblfi{} implementations produce similar results which is because most log-density evaluations are already collected during the ``burn-in'', as seen in Figure \ref{fig:res6d_cdfevals} of \appe{}. 
The results also show that \epoe{} produces similar or slightly worse accuracy as \BLFI{} (IMIQR) with large values of $\epsi$ and performs slightly better with low $\epsi$. 

\begin{figure}[hbt]
\centering
\includegraphics[width=0.97\textwidth]{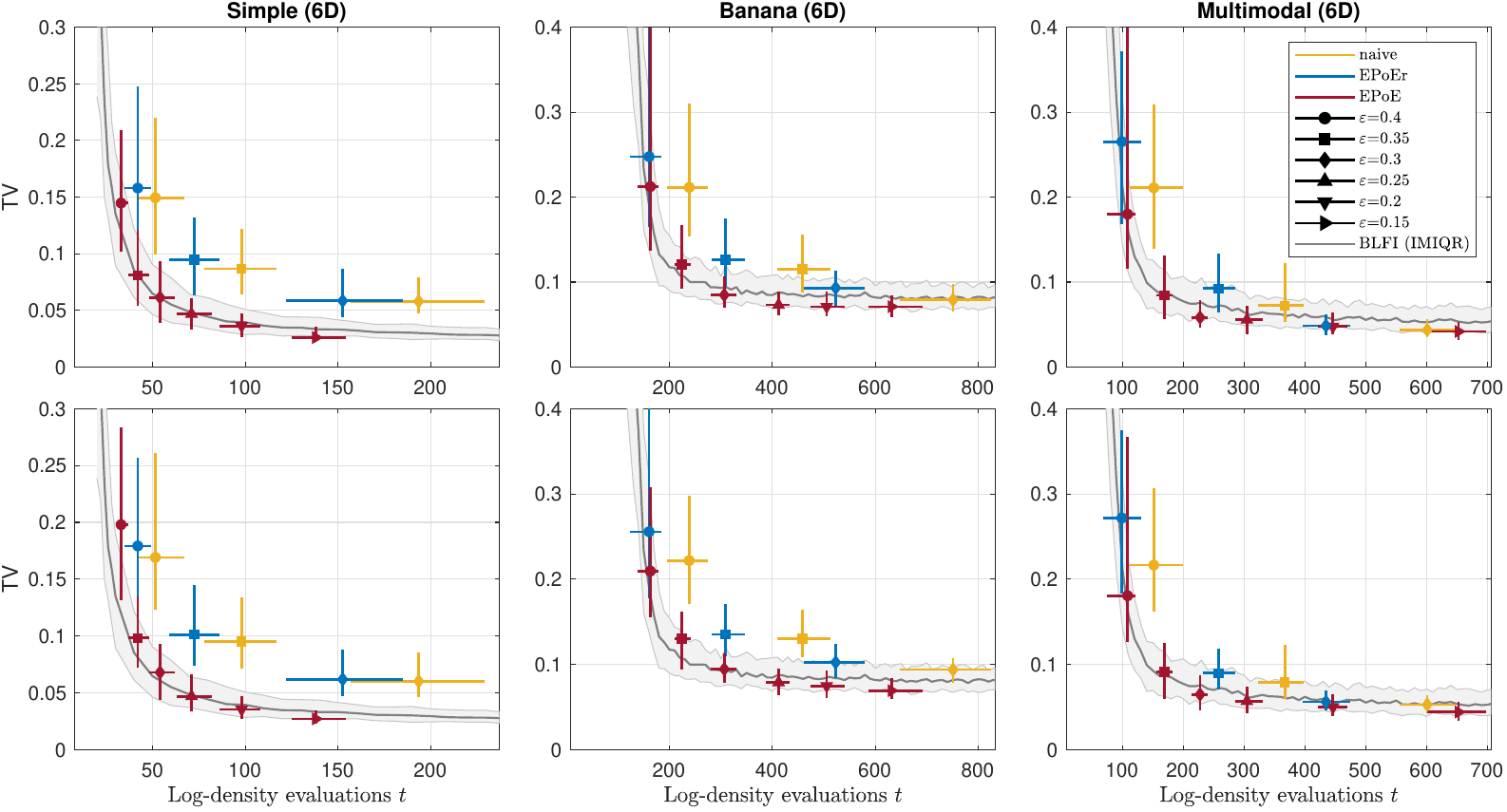}
\caption{Accuracy of the marginal posterior approximation as a function of the used log-density evaluations at the final iteration $i_{\text{MH}}=10^5$. The horizontal and vertical lines show the middle $75\%$ computed over $100$ repeated runs and the marker in the middle shows the corresponding (marginal) median. The grey lines/area show the median and the middle $75\%$ for \BLFI{} with IMIQR design strategy over $100$ repeats. \emph{Top row} shows \gpmh{} and the \emph{bottom row} the corresponding results by \mhblfi{}.} \label{fig:res6d_eval}
\end{figure}

Typical examples of selected evaluation locations are shown in Figure \ref{fig:simple_eval_example}. \epoe{} tends to select slightly more diverse evaluation locations as the two other methods. On the other hand, the evaluation locations of IMIQR are even more diverse and, despite the fairly large number of initial locations ($20$ for Simple and $50$ for Banana), some of its evaluations occur at the boundary of the parameter space. 
When the initial point $\Btheta^{(0)}$ is far from the highest density region, evaluations intuitively occur on a path connecting the initial point and the highest density region as most clearly seen in the top left case of Figure \ref{fig:simple_eval_example}. 

\begin{figure}[hbt] 
\centering
\includegraphics[width=0.9\textwidth]{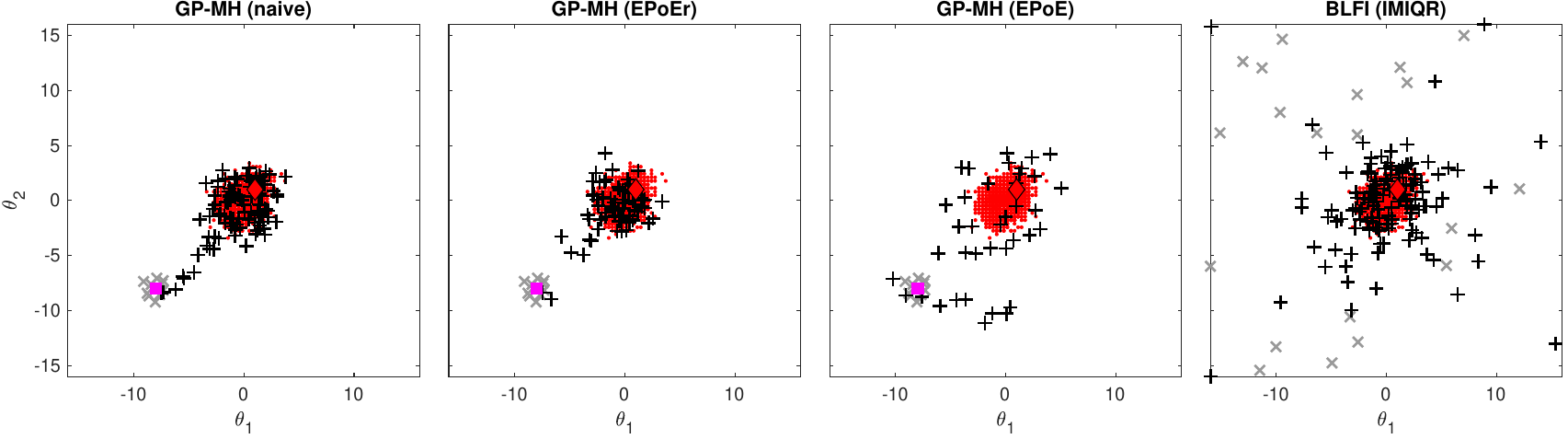}
\includegraphics[width=0.9\textwidth]{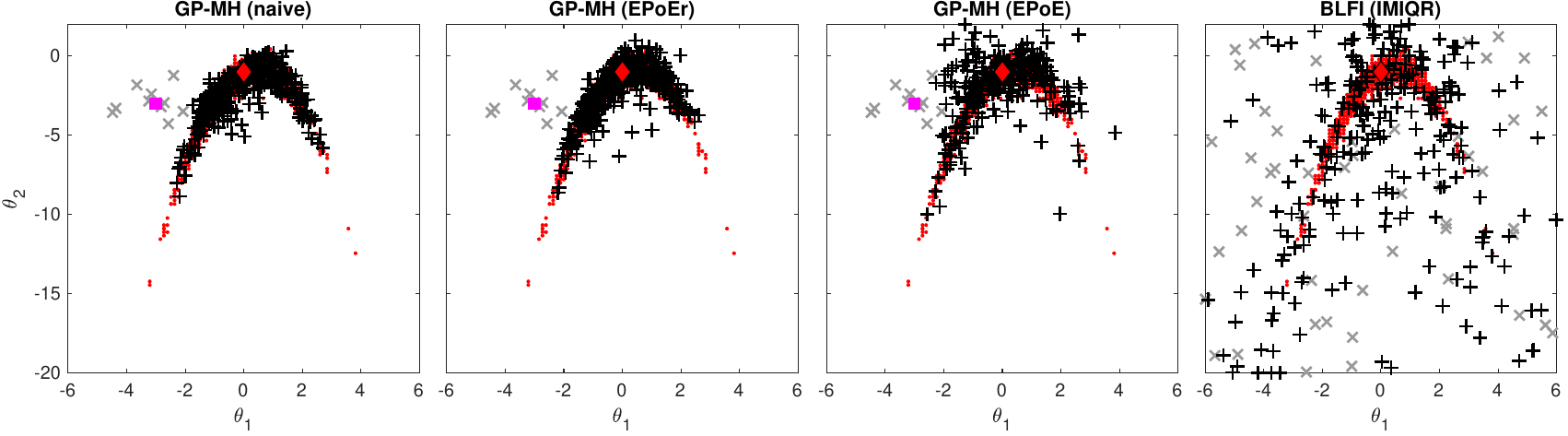}
\caption{Typical realisations of the log-density evaluation locations in the Simple (\emph{top row}) and Banana (\emph{bottom row}) experiments. The evaluation locations are projected to the first two components. Red dots in the background illustrate the true target density, the grey crosses ($\times$) the $t_{\text{init}}$ initial evaluation locations and the black plus signs ($+$) the evaluation locations. The magenta square shows the initial location.} \label{fig:simple_eval_example}
\end{figure}

\subsection{SL inference}\label{subsec:sim_models}

We next consider simulator-based models where SL is used as the target likelihood, see Section \ref{subsec:lfi} of \appe{} for details on SL. 
Although our methodology is particularly useful for expensive models, 
we here nevertheless use models that are not highly costly (although not very cheap either). 
This allows us to compare our results directly to reasonable ground-truth posteriors obtained via extensive computation using SL-MCMC \citep{Price2018}. 
For simplicity, we always evaluate log-SL using fixed number of repeated simulations $N$. 

We use the bootstrap to estimate the noise variance $\sigma_n(\Btheta)$ at the evaluated locations in $\ddata_t$ but this does not work for \epoe{} which needs estimates of $\sigma_n(\Btheta)$ at each $\Btheta$. 
We first approximated $\sigma_n(\Btheta)$ for \epoe{} with a constant obtained near the MAP parameter value but this resulted underestimated $\sigma_n(\Btheta^*)$ and hence overestimated $\xi_t^2(\Btheta,\Btheta';\Btheta^*)$ in the tails which further caused overexploration of the tails. As a heuristic solution we set $\sigma_n(\Btheta)=0.1$ as if the evaluations were almost noiseless. 
More realistic estimates could possibly be obtained by modelling $\sigma_n(\Btheta)$ as a function of $\Btheta$ (nearest neighbour interpolation was used by \citet{Acerbi2020} for a similar goal but this approach would make $\sigma_n(\Btheta)$---and consequently \epoe{}---discontinuous) but our simple approach already produced improvements over the naive approach. \epoer{} was observed to be insensitive for this choice. We leave more detailed analysis on modelling the noise variance, and the log-likelihood function itself, an important topic for future work.

\subsubsection{Theta-Ricker model} \label{subsubsec:thetaricker}

We consider theta-Ricker model, see e.g.~\citet{Polansky2009} and references therein for background.
In this model $N_t$ denotes the number of individuals in a population (or population density) at time $t$ which evolves according to the discrete time process
\begin{equation*}
N_{t+1} = r N_{t} \exp(-\log(r)(N_{t}/K)^{\theta} + \epsi_t), 
\end{equation*}
for $t=1,\ldots,T$. Parameter $K$ indicates the population size when the growth rate $r$ goes to zero and $\theta$ 
further controls the form of the growth rate. 
The Ricker model, a common LFI test problem that we also consider in \appe{} \ref{appsec:ricker_thetaricker}, is a special case with $\theta=1$ and $K=\log(r)$. Gaussian process noise model $\epsi_t \sim \Normal(0,\sigma^2_{\epsi})$ is assumed. 
A noisy measurement $x_t$ of the population size $N_t$ is further assumed to follow Poisson observation model
$
x_t \cond N_{t}, \phi \sim \Poi(\phi N_{t}),
$
where $\phi$ is a scale parameter. 

The goal is to compute the SL posterior for the parameters\footnote{We use similar parametrisation for our (theta-)Ricker models as in the (scaled) Ricker model of \citet{Wood2010} and several of its follow-up articles for consistency. This is however different from the common definition for (theta-)Ricker model given e.g.~in \citet{Polansky2009} where $r$ is used in the place of our $\log(r)$.} $\Btheta=(\log(r),\theta,K,\phi,\sigma_{\epsi})$ given data $\Bx = (x_t)_{t=1}^T$ with $T=100$. 
The initial population size is $N_{0}=1$ and we use independent priors $\log(r)\sim\Unif([2,5]), \theta\sim\Unif([0.01,2]), K\sim\Unif([1,5]), \phi\sim\Unif([4,20]), \sigma_{\epsi}\sim\Unif([0,0.8])$. 
We use $N=100$ repeated simulations and the $13$ summary statistics proposed by \citet{Wood2010} to compute log-SL. We acknowledge that these statistics may be unideal as they are designed for the simpler Ricker model but this way we obtain a challenging target density which could be encountered during a typical LFI workflow. 
The ``true'' parameter for generating the data is $\Btheta_{\text{true}}=(3.5, 1.0, 3.5, 10, 0.3)$. 
We use $t_{\text{init}}=20$ and $i_{\text{MH}} = 2\cdot 10^5$. The initial location and initial proposal covariance are $\Btheta^{(0)}=(3.4,   0.9,  3.0,   8.0,  0.3)$ and $\BSigma_0=\diag(0.05,   0.1, 0.25,   0.5, 0.05)^2$, respectively. We observe noise level of $\sigma_n\gtrsim 1.0$ in the highest posterior density region.

Figure \ref{fig:res_thetaricker_eval} shows that \epoe{} produces the best sample-efficiency (especially when $\epsi=0.35$ or $0.3$) though the improvements are not as substantial as in the more ideal GP modelling scenario of Section \ref{subsec:toymodels}. 
We see that $\epsi=0.4$ does not produce accurate posterior ($\TV\gtrsim 0.3$) but already $\epsi=0.35$ produces reasonable approximations ($\TV\lesssim 0.1$) and requires only $300-500$ log-SL evaluations. 
Naive with $\epsi=0.25$ would have required more evaluations than our limit $10^3$ which we set to keep the computation time bounded.
\gpmh{} and \mhblfi{} again produce similar results.  

\begin{figure}[hbtp]
\centering
\includegraphics[width=0.75\textwidth]{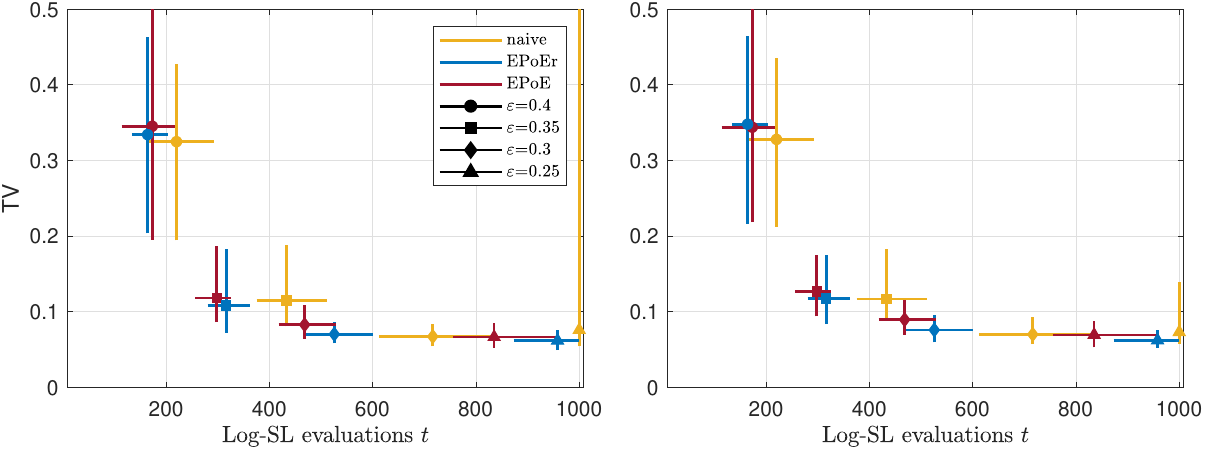}
\caption{Accuracy of the marginal posterior approximation in the theta-Ricker experiment as a function of log-likelihood evaluations at the final iteration $i_{\text{MH}}=2\cdot 10^5$. \emph{Left plot} shows \gpmh{} and the \emph{right plot} the corresponding results by \mhblfi{}.} \label{fig:res_thetaricker_eval}
\end{figure}

Figure \ref{fig:thetaricker_post_example} shows a typical estimated posterior with \epoe{} and $\epsi=0.3$. 
Parameters $K$ and $\phi$ cannot be fully identified which makes their true joint marginal posterior difficult to approximate also. The overall approximation quality is still good when $\epsi \lesssim 0.35$ and, most importantly, the non-identifiability of these two parameters is clearly captured. 
When using $\epsi=0.4$ the approximation for $(\log(r),\theta,\sigma_e)$ was still reasonable but substantial amount of the probability mass of ($K,\phi$) was often missed. 

\begin{figure}[hbt] 
\centering
\includegraphics[width=0.78\textwidth]{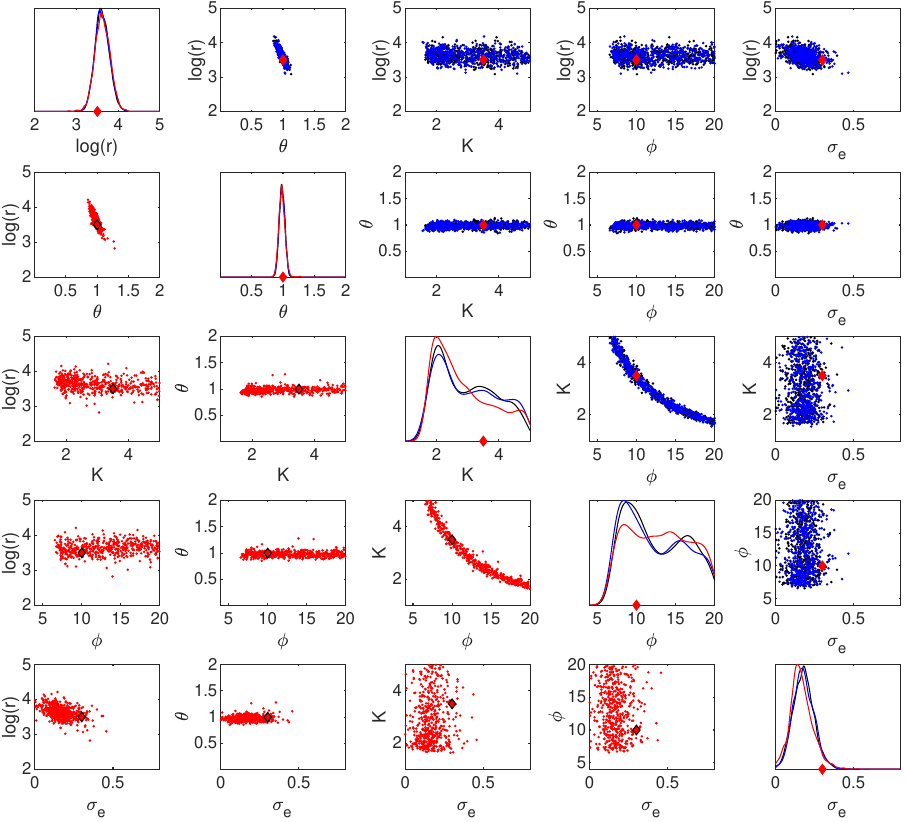}
\caption{Comparison of the ground-truth posterior (red dots/line) computed using SL-MCMC and a typical example of estimated posterior (blue dots/line: \gpmh{} posterior, black dots/line: \mhblfi{} posterior) in the case of theta-Ricker experiment. The red diamond shows the true parameter.} \label{fig:thetaricker_post_example}
\end{figure}

We were unable to obtain reasonable posterior approximations in our theta-Ricker experiment using \BLFI{}. The main reason for this was that log-SL behaves irregularly on certain boundary regions of the parameter space where the method typically needs to evaluate. This produces a vicious cycle where a poor global GP fit causes next evaluation locations to be uninformative and the GP fit to remain poor. We unsuccessfully tried thresholding the log-SL values and special initialisations. Cropping the problematic parameter regions would help but is cumbersome due to the parameter correlations and would complicate optimising the acquisition functions. Similar problems emerge with other techniques relying on global GP surrogate. Especially current implementations of BOLFI typically excessively evaluate near the boundaries as seen e.g.~in \citet{Jarvenpaa2018_acq,Picchini2020}. 
However, when initialised near the highest posterior density region, \gpmh{} and \mhblfi{} produced accurate results and gracefully avoided the problematic regions. Namely, if a new parameter is proposed in such a region, a new log-likelihood evaluation is often triggered there. This point is then simply rejected without updating the GP or having to alter the acquisition function (see \appe{} \ref{appsec:gpmodellingdetails} for details). 
In rare cases where the algorithm still enters such problematic regions and gets stuck there, the algorithm can be restarted with different initialisation.

\subsubsection{Cell biology model} \label{subsubsec:cellmodel}

We consider a model used for estimating cell motility and proliferation which are further needed when assessing the efficacy of certain medical treatments. Background and details of the model can be found in \citet{Price2018} and references therein. 
In the model T3T cells are represented as an $R\times C$ binary matrix at each time point and each $(x,y)$ location indicates whether a cell is present there or not. The cell dynamics are simulated over time using a random walk model which features two parameters that control the cell movement and reproduction, the probability of motility $P_m \in [0,1]$ and the probability of proliferation $P_p \in [0, 1]$. 
Observed data can be obtained using a scratch assay and consist of images (binary matrices), measured at some time points. 
The resulting likelihood function is intractable. 
We consider similar set-up as in \citet{Price2018}. We use simulated data consisting of binary matrices over $145$ time instances, we set $R = 27, C = 36$ and we initially place  $110$ cells randomly in the rectangle with positions $x \in \{1, 2, \ldots, 13\}, y \in \{1, 2, \ldots, 36\}$. 

Although the model has only two parameters, the imposed small budget for log-SL evaluations ($\leq750$) and the relatively large noise level of the log-likelihood evaluations (we use $N=2500$ which results $\sigma_n\gtrsim 2.5$ in the highest density region) makes inference challenging. One log-SL evaluation takes approximately $7$s \citep[on a PC laptop with Intel Core i5 8265U and using the C-code by][]{Price2018} which makes SL-MCMC feasible but fairly expensive. For example, a single chain with length $10^5$ requires approximately $8$ days of computing. Using more image data, larger lattice, more complicated cell dynamics or less efficient implementation would make the inference even more expensive. 
We use the same $145$ summary statistics as \citet{Price2018}. These include the Hamming distances between all the subsequent binary matrices over the $144$ time intervals and the total number of cells in the final time period. 

We use $\Btheta_{\text{true}}=(0.35, 1.0\cdot10^{-3})$. 
We first experimented with $\Unif([0,1]^2)$ prior but immediately observed that initialising our method in $\theta_2\gtrsim4.0\cdot10^{-3}$, where log-SL value is negligible and has very large variance, would not work. Similar difficulties would affect also SL-MCMC.
In fact, log-SL decreases very fast near all boundaries which is problematic for B(O)LFI. While it is feasible to construct a bounding box to crop such regions in this particular 2D case, this in general involves tedious manual work. 
While not absolutely necessary, we restricted the parameter space of $\theta_2$ and coded this into the prior $\Btheta\sim\Unif([0,1]\!\times\![0,4.0\cdot10^{-3}])$. 
We use $t_{\text{init}}=10$, $i_{\text{MH}} = 10^5$, $\Btheta^{(0)} = (0.5, 1.5\cdot10^{-3})$ and $\BSigma_0 = \diag(0.02, 2.0\cdot10^{-4})^2$.

Overall the results summarised as Figure \ref{fig:res_cell} are similar to those in the previous experiments. However, the difference between the methods is smaller and the variability in the number of used log-SL evaluations is substantially larger especially when $\epsi=0.2$. This variability is mostly caused by some individual iterations requiring around $30-70$ evaluations. Presumably the fairly large $\sigma_n$ and the low-dimensional $\Btheta$ make the progress of the algorithm more dependent on randomness as in our other experiments. 
%
We nevertheless obtain reasonable posterior approximations with only a few hundred log-SL evaluations. The computational cost of each run was at most one to two hours. 
We observed isolated cases where the algorithm had traversed to the problematic boundary region and got stuck there. As these rare cases all happened when $\epsi\geq 0.3$, it is safe to say that our algorithm worked robustly in this experiment. 

\begin{figure}[hbtp]
\centering
\includegraphics[width=0.75\textwidth]{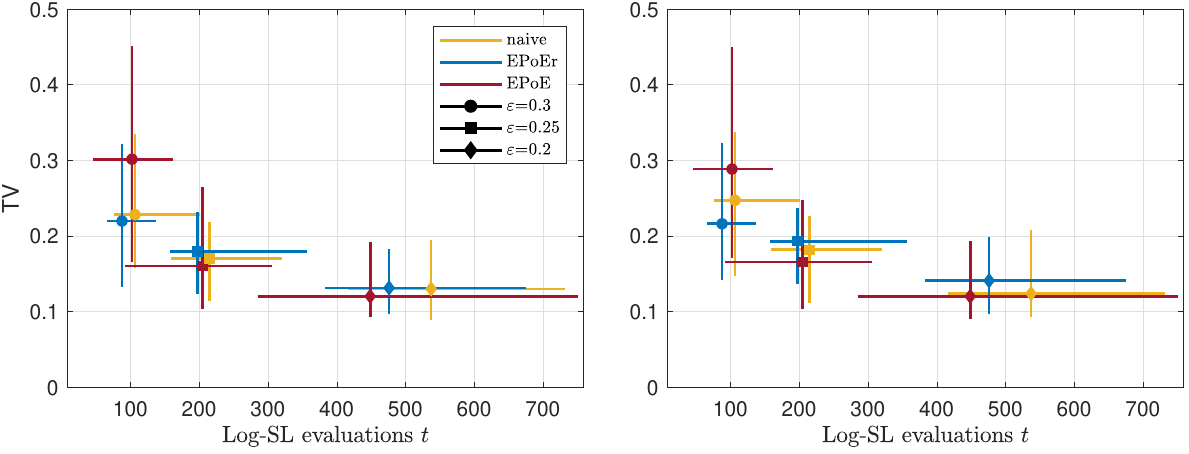}
\caption{Accuracy of the posterior approximation in the Cell biology experiment as a function of log-likelihood evaluations at the final iteration $i_{\text{MH}}=10^5$. \emph{Left plot} shows \gpmh{} and the \emph{right plot} the corresponding results by \mhblfi{}.} \label{fig:res_cell}
\end{figure}

\subsection{Likelihood-free generalised Bayesian inference} \label{subsec:bacterial_model}

We have now covered the case where one has access to a small number of noisy evaluations of the likelihood function or some approximation such as SL. However, \gpmh{} can be extended to other situations, in particular to the ABC case when a single model simulation is computationally costly and a surrogate model is formed for the ABC discrepancy. 
See Section \ref{subsec:abc} for details on ABC relevant to what follows and \citet{Meeds2014} for another approach that is also mentioned in Section \ref{subsec:literature}. 

Suppose the discrepancy $\Delta(\Bxobs,\Bx)$ follows Gaussian distribution with mean $d(\Btheta)$ and variance $\sigma^2_n(\Btheta)$ when $\Bx\sim\pi(\Bx\cond\Btheta)$ and where the observed data $\Bxobs$ is again suppressed from the notation. The ABC likelihood (\ref{eq:abc_prob}) of \appe{} \ref{subsec:abc} is then proportional to $\Phi[(h-d(\Btheta))/\sigma_n(\Btheta)]$ where $h>0$ is the threshold. 
The unknown mean discrepancy $d(\Btheta)$ can be modelled with GP as first suggested by \citet{Gutmann2016}, though the likelihood ratio $\Phi[(h-d(\Btheta'))/\sigma_n(\Btheta')]/\Phi[(h-d(\Btheta))/\sigma_n(\Btheta)]$ in the \mh{} accept/reject test then depends on $d$ in a such way which does not seem to allow analytical computations. 
Although efficient numerical approach may be feasible, in the following we instead consider an alternative target density which directly facilitates \gpmh{}. 

\subsubsection{Generalised posterior with expected ABC discrepancy as the loss}

We consider a target density proportional to
\begin{align}
    \pi(\Btheta)\e^{-wl(\Btheta,\Bxobs)}, \label{eq:genpost_bact}
\end{align}
where $w>0$ is a scaling constant and the loss function $l(\Btheta,\Bxobs)$ is selected as 
\begin{align}
    l(\Btheta,\Bxobs) = l_{\Delta}(\Btheta,\Bxobs) 
    \eqdef \mean_{\Bx\cond\Btheta}\Delta(\Bxobs,\Bx). \label{eq:meandiscrloss}
\end{align}
%
We assume $\mean_{\Bx\cond\Btheta}\Delta(\Bxobs,\Bx)$ is finite at each $\Btheta$ and (\ref{eq:genpost_bact}) can be normalised. 
This approach can be considered as an instance of generalised Bayesian inference by \citet{Bissiri2016} though the discrepancy $\Delta(\Bxobs,\Bx)$ in the ABC case typically does not satisfy a coherence property that \citet{Bissiri2016} used to provide a justification for the exponential form of (\ref{eq:genpost_bact}). 
The density (\ref{eq:genpost_bact}) is also a solution to
\begin{align}
    \arg\min_{\nu}\int wl(\Btheta,\Bxobs)\nu(\ud\Btheta) + \text{KL}(\nu\,||\,\pi). \label{eq:genpost_variational}
\end{align}
For small $w$ the generalised posterior (\ref{eq:genpost_bact}) is similar to the prior $\pi(\Btheta)$ in the sense of Kullback-Leibler divergence ($\text{KL}$).
If $w$ is set arbitrarily large, then a point mass at $\arg\min_{\Btheta}l(\Btheta,\Bxobs)$ is a solution to (\ref{eq:genpost_variational}) and can in principle be computed using (standard) Bayesian optimisation. 

The choice $l(\Btheta,\Bxobs)=-\log\pi(\Bxobs\cond\Btheta)$ with $w=1$ gives the exact posterior and the ABC posterior is formally obtained when $w=1$ and  
\begin{align}
l(\Bxobs,\Btheta) 
= l_{\textnormal{ABC}}(\Bxobs,\Btheta) 
\eqdef -\log\mean_{\Delta\cond\Btheta} K_h(\Delta) = -\log\int K_h(\Delta(\Bxobs,\Bx))\pi(\Bx\cond\Btheta)\ud\Bx. \label{eq:abcloss}
\end{align}
Assuming the kernel $K_h$ is strictly positive 
and by using Jensen's inequality, we have $l_{\textnormal{ABC}}(\Btheta,\Bxobs) \leq -\mean_{\Delta\cond\Btheta}\log K_h(\Delta)$.
When the Exponential kernel $K_h(r)\propto\e^{-r/h}$ is used, it is easy to see that $-\mean_{\Delta\cond\Btheta}\log K_h(\Delta)$ equals $wl_{\Delta}(\Btheta,\Bxobs)$ with $w=1/h$ up to an additive constant. That is, in this case (\ref{eq:genpost_bact}) is also obtained from the ABC posterior by replacing the ``ABC-loss'' $l_{\textnormal{ABC}}(\Btheta,\Bxobs)$ with the larger loss $wl_{\Delta}(\Btheta,\Bxobs)$. 

A computational advantage of (\ref{eq:genpost_bact}) is that the mean discrepancy only needs to be accurately estimated 
though meaningful uncertainty estimates are still needed to apply \gpmh{}. The standard ABC approach instead requires the more difficult task of estimating the lower tail probability of the discrepancy. Also, (\ref{eq:genpost_bact}) may be more appropriate when the simulator-based model is supposedly grossly misspecified yet no better model is available. In this case the interpretation of the exact posterior is problematic. A downside is that (\ref{eq:genpost_bact}) does not have the interpretation as the exact posterior when the model is well-specified and careful selection of $w$ is required. 

\gpmh{} is applied by replacing the target log-likelihood $f(\Btheta)$ with $-w l_{\Delta}(\Btheta,\Bxobs)$. The noisy evaluations are obtained as $-w\sum_{i=1}^M \Delta(\Bxobs,\Bx^{(i)})/M$, where $\Bx^{(1)},\ldots,\Bx^{(M)}$ are samples from $\pi(\Bx\cond\Btheta)$, which clearly defines an unbiased estimator for $-w l_{\Delta}(\Btheta,\Bxobs)$. Although the discrepancy is commonly formed to be non-negative, Gaussian distribution for $\Delta(\Bxobs,\Bx^{(i)})$ is a sensible general assumption \citep{Jarvenpaa2020_babc}. If a relatively large $M$ can be used, the central limit theorem further justifies the Gaussianity assumption. 

\subsubsection{Bacterial infections model}

\citet{Numminen2013} developed a simulator-based model to understand transmission dynamics of bacterial infections in day care centres. It is defined is terms of a latent continuous-time Markov process and an observation model. 
The model has been used previously by \citet{Gutmann2016,Jarvenpaa2020_babc} to demonstrate that GP-based ABC methods can substantially improve the computational efficiency over standard ABC samplers though here we instead briefly investigate the loss-based general Bayesian inference with \gpmh{} as an alternative to ABC. 
We estimate the three unknown parameters of the model, the internal, external and co-infection parameters $\beta,\Lambda$ and $\theta$, respectively. 
We use the same real data as \citet{Numminen2013} which describes colonisations with the bacterium \emph{Streptococcus pneumoniae} and consists of varying numbers of sampled attendees at $29$ day care centres at a single time point. 
The simulation time depends on the parameter value and takes several seconds for this data. Further details of the model and data can be found in \citet{Numminen2013}. 

The discrepancy is formed similarly as in \citet{Gutmann2016}: The data of each day care centre (a binary matrix whose elements inform whether each sampled attendee is infected with each of the bacterial strains at the observed time point) is summarised with four statistics. The $L^1$ distances between the empirical CDFs over the $29$ day care centres both for the observed and simulated data are then computed for each summary. 
The discrepancy is defined as the square root of the average of the resulting four $L^1$ distances (weighted by the magnitude of the observations). 
Also as in previous work, we use uniform priors $\beta\sim\Unif([0,11]),\Lambda\sim\Unif([0,2])$ and $\theta\sim\Unif([0,1])$. We treat $\sigma_n^2=\Var_{\Bx^{(1)},\ldots,\Bx^{(M)}\cond\Btheta}(-w\sum_{i=1}^M \Delta(\Bxobs,\Bx^{(i)})/M)$ as a constant with respect to $\Btheta$ and estimate it as in Section \ref{subsec:toymodels}. 
We use $t_{\text{init}}=10$, $i_{\text{MH}} = 10^5$, $\Btheta^{(0)}=(2.0, 0.4, 0.2)$ and $\BSigma_0=\diag(0.5, 0.1, 0.02)^2$ for each experiment. 

Figure \ref{fig:res_bacterial_eval} shows the results with some choices of $w$ and $M$. A few hundred ($w=10$ or $20$) or a thousand ($w=40$) model simulations are sufficient to produce useful approximations. \gpmh{} is hence a feasible option also in this setting. 
Naive method interestingly works well and clearly produces the best final accuracy when $w=40$ and $M=4$. \epoer{} and especially \epoe{} rely more on the GP model than naive and their performance is presumably more affected by the misspecified constant noise assumption of the GP. While naive requires more evaluations than the other methods on average to make the unconditional error smaller than $\epsi$, it appears that the extra evaluations are here not redundant and cause later accept/reject decisions to be made more accurately than required. The accuracy of the accept/reject decisions by \epoe{} and \epoer{} are more closely centred near the upper bound $\epsi$. 

\begin{figure}[hbt]
\centering
\includegraphics[width=1\textwidth]{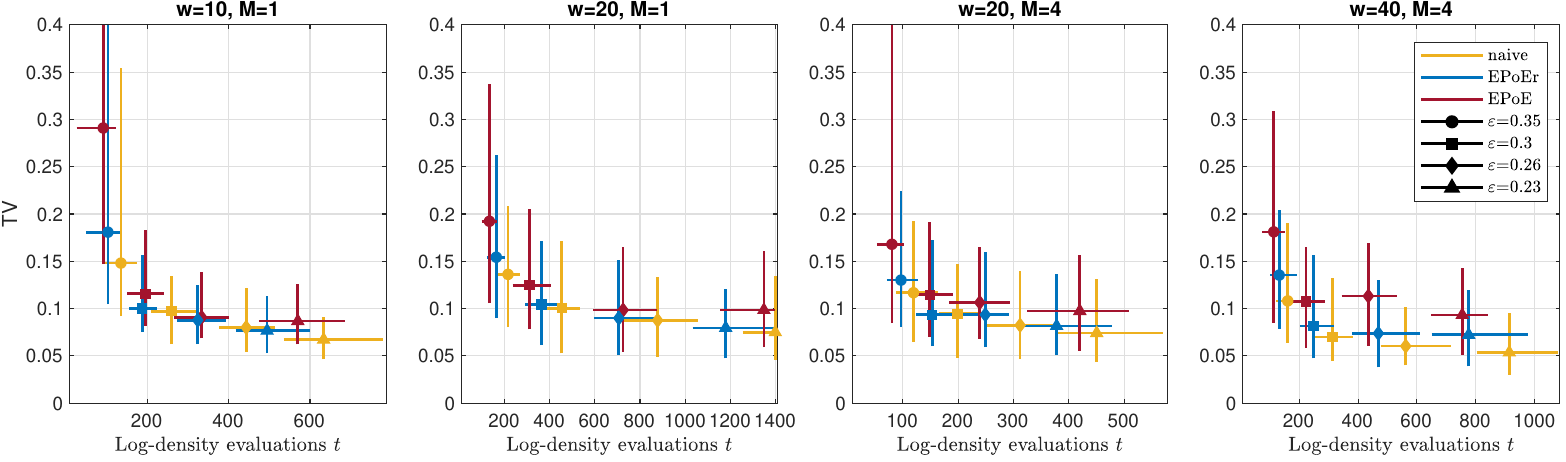}
\caption{Accuracy of the marginal generalised posterior approximation of \gpmh{} in the bacterial infections experiment as a function of log-density evaluations at the final iteration $i_{\text{MH}}=10^5$. The results for \mhblfi{} were similar.} \label{fig:res_bacterial_eval}
\end{figure}

Some estimated generalised posteriors are illustrated in Figure \ref{fig:bacterial_posts_example} in the \epoer{} case with $\epsi=0.23$, $M=1$ and $w=20$. We observe that the choice $w=20$ (which is approximately equal to $2(\Var_{\Bx\cond\Btheta}\Delta(\Bxobs,\Bx))^{-1/2}$ for any $\Btheta$ near the mode) produces fairly similar density estimate as the ABC posterior obtained using the ABC-MCMC sampler by \citet{Marjoram2003}, the same discrepancy as for \gpmh{} and a uniform kernel with threshold $h=1.0$. 
This suggests that, at least in some situations, our generalised posterior may be considered as an approximation to the ABC posterior itself. 

\begin{figure}[hbtp] 
\centering
\includegraphics[width=0.85\textwidth]{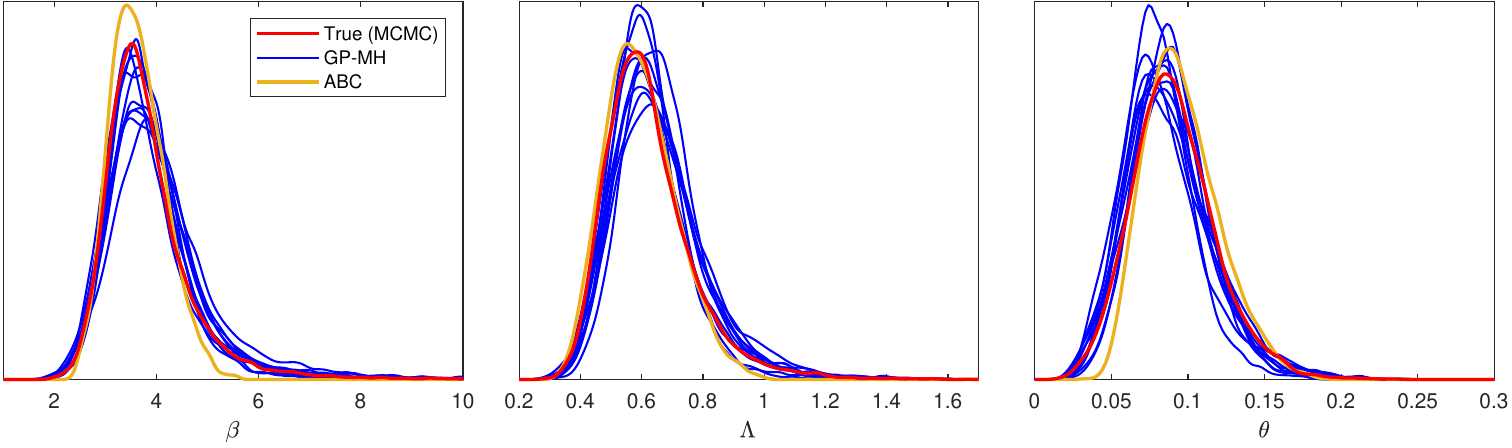}
\caption{Comparison of the ground-truth generalised posterior (red line) and $10$ examples of the estimated posterior produced by \gpmh{} (blue lines) in the case of bacterial infections model. The ABC posterior (yellow line) is also shown for comparison. 
} \label{fig:bacterial_posts_example}
\end{figure}

\section{Summary and discussion}\label{sec:disc}

We proposed a new sample-efficient framework for approximate Bayesian inference by combining Metropolis-Hastings sampling with Gaussian process emulation. 
The resulting \gpmh{} implementation is mainly intended for low-dimensional problems ($p<10$) when only a small number of possibly noisy likelihood evaluations (e.g.~$t<10^3$) is available. The log-likelihood needs to be a smooth function in the highest density region and its evaluations need to be (approximately) Gaussian distributed. 
We gave a Bayesian decision-theoretic justification for various parts of the method and discussed the relationship between \gpmh{} and ``Bayesian optimisation-like'' B(O)LFI frameworks for approximate Bayesian inference by \citet{Gutmann2016,Jarvenpaa2019_sl}.

Two particular implementations of the proposed framework were investigated numerically. 
Our \gpmh{} and \mhblfi{} implementations produced similar posterior approximations which is unsurprising given their connection (Section \ref{subsec:gpmcmcasblfi}) and because most of the log-likelihood evaluations are typically collected during the early iterations which are neglected anyway as ``burn-in''. 
The slightly changing target distribution of \gpmh{} thus does not appear problematic. 
Our experiments show that Bayesian sequential design strategies can provide worthwhile improvements to sample-efficiency over a naive strategy. However, we believe that their full potential did not realise due to practical challenges with surrogate modelling and as the \epoe{} strategy does not account for the resulting posterior approximation directly \citep[unlike IMIQR by][]{Jarvenpaa2019_sl} but is designed to minimise the evaluations needed to make the current \mh{} accept/reject decision within the required accuracy. 
This makes the more heuristic \epoer{} a practical choice for \gpmh{}. 

\gpmh{} aims to construct the GP surrogate around the highest posterior density region which makes \gpmh{} robust to possible violations of GP modelling assumptions near the parameter boundaries. This is in contrary to earlier B(O)LFI methods where the log-likelihood is modelled and the acquisition function optimised over the whole parameter space. On the other hand, posterior densities that do behave smoothly everywhere and feature many distant modes are more ideal for B(O)LFI. Computations needed to apply \epoe{}(r) methods are substantially more efficient than those developed by \citet{Jarvenpaa2019_sl}. 
For example, the GP computations in a typical run with the theta-Ricker model using $2\cdot10^5$ iterations and around $500$ evaluations took less than $15$ minutes while the global optimisation of the IMIQR acquisition function once in \BLFI{} already takes up to one minute. \epoer{} and naive methods do not even need auxiliary optimisation which also simplifies the implementation. 

A potential downside over B(O)LFI is that \gpmh{} may require more careful initialisation. As with all \mh{} samplers, selection of good proposal covariance can be tricky. Also, if \gpmh{} is started far from the highest density region or if the log-likelihood behaves irregularly around the initial point, difficulties with traversing to the highest density region or GP fitting may still emerge. 
The trade-off between accuracy and computational cost needs to be adjusted (in a somewhat nontransparent fashion) using the parameter $\epsi$. 
A current good practice would be to first run the algorithm using a fairly large $\epsi$ and guided by the experiments of this paper. If the resulting posterior appears inaccurate or if less evaluations were used than anticipated, the algorithm can be rerun with decreased $\epsi$ and with the evaluations from the previous run used as initial data. 

There is room to further enhance and extend the proposed approach. Detailed analysis of the interplay between the error tolerance $\epsi$, the number of total evaluations needed and the accuracy of the resulting posterior approximation would be beneficial but seems difficult to establish. 
For global optimisation, one could extend \gpmh{} to work with simulated annealing. 
The use of \gpmh{} for the very challenging case of high-dimensional parameters, improved modelling of the log-likelihood variance and alternative criteria for controlling the accuracy of the \mh{} accept/reject test could also be investigated. 


\vbox{
\acks{Some of the computations were performed on resources provided by UNINETT Sigma2 - the National Infrastructure for High Performance Computing and Data Storage in Norway. This research was funded by the Norwegian Research Council FRIPRO grant no.~299941 and by the European Research Council grant no.~742158.}
}

\appendix
\numberwithin{equation}{section}
\numberwithin{figure}{section}
\numberwithin{theorem}{section}

\section{Background on likelihood-free inference}

\subsection{Synthetic likelihood} \label{subsec:lfi}

LFI methods mainly differ on how the model simulations are used to approximate the intractable likelihood function. 
%
Synthetic likelihood \citep{Wood2010,Price2018} is a parametric approximation to the likelihood $\pi(\Bxobs\cond\Btheta)$ formed by first replacing the full data $\Bxobs\in\reals^d$ with summary statistics $S(\Bxobs)$, where $S:\reals^d\rightarrow\reals^s, s < d$, and then assuming 
\begin{equation}
\pi(S(\Bxobs) \cond \Btheta) 
= \Normal_s(S(\Bxobs)\cond\Bmu_{\Btheta},\BSigma_{\Btheta}). 
%
\label{eq:SL}
\end{equation}
The approximation results from replacing the full data $\Bxobs$ with potentially insufficient summary statistics $S(\Bxobs)$ and from the Gaussianity assumption in (\ref{eq:SL}) that rarely holds exactly. 
The unknown expectation $\Bmu_{\Btheta}\in\reals^{s}$ and covariance matrix  $\BSigma_{\Btheta}\in\reals^{s\times s}$ in (\ref{eq:SL}) are estimated from $N$ repeated simulations for each proposed $\Btheta$ using
\begin{equation}
\hat{\Bmu}_{\Btheta} = \frac{1}{N}\sum_{i=1}^{N}S(\Bx^{(i)}_{\Btheta}), 
%
\quad \hat{\BSigma}_{\Btheta} = \frac{1}{N-1}\sum_{i=1}^{N} (S(\Bx^{(i)}_{\Btheta}) - \hat{\Bmu}_{\Btheta}) (S(\Bx^{(i)}_{\Btheta}) - \hat{\Bmu}_{\Btheta})\T, \label{eq:SL_ML}
\end{equation}
where $\Bx^{(i)}_{\Btheta} \sim \pi(\cdot\cond\Btheta)$ for $i=1,\ldots,N$. An estimator for the likelihood function is then obtained by replacing the unknown $\Bmu_{\Btheta}$ and $\BSigma_{\Btheta}$ in (\ref{eq:SL}) with the point estimates $\hat{\Bmu}_{\Btheta}$ and $\hat{\BSigma}_{\Btheta}$ in (\ref{eq:SL_ML}). See Section \ref{appsec:gpremarks} for some remarks on this estimator. The resulting log-synthetic likelihood evaluations (abbreviated as log-SL) are noisy because $N$ cannot in practice be large for computational reasons. 
%
Various extensions of SL have also been proposed \citep{An2018,An2019,Nott2019,Thomas2018} which similarly produce noisy log-likelihood approximations.

\subsection{Approximate Bayesian computation} \label{subsec:abc}

The ABC likelihood is formed in a non-parametric fashion so that 
\begin{align}
    \pi(\Bxobs \cond \Btheta) \approx \pi_{\text{ABC}}(\Bxobs \cond \Btheta) \eqdef \mean_{\Bx\cond\Btheta}K_h(\Delta(\Bxobs,\Bx)) = \int K_h(\Delta(\Bxobs,\Bx))\pi(\Bx\cond\Btheta)\ud\Bx, \label{eq:abc_lik}
\end{align}
where the discrepancy $\Delta(\Bxobs,\Bx)$ quantifies the similarity between the observed data $\Bxobs$ and the pseudo-data $\Bx$ simulated from the model likelihood $\pi(\Bx\cond\Btheta)$. The data is commonly reduced to summary statistics (as in the SL case) so that, for example, $\Delta(\Bxobs,\Bx)=||S(\Bxobs)-S(\Bx)||$ where $||\cdot||$ is some suitable norm although other choices are also possible. The kernel $K_h(r)$ in (\ref{eq:abc_lik}) with with bandwidth $h>0$ further controls the accuracy of the ABC approximation. A common choice is the uniform kernel $K_h(r) \propto \indic_{r\leq h}$ so that
\begin{align}
    \pi_{\text{ABC}}(\Bxobs \cond \Btheta) \propto \prob_{\Bx\cond\Btheta}(\Delta(\Bxobs,\Bx) \leq h). \label{eq:abc_prob}
\end{align}
Under some regularity conditions and assuming $S$ above is a sufficient statistic, (\ref{eq:abc_prob}) converges to the exact likelihood as $h\rightarrow 0$. However, in practice, one usually needs to resort to non-sufficient statistics and $h$ cannot be taken too small because simulation of pseudo-data that gives a close match to the observed summary statistic $S(\Bx)$ is inefficient. 

An unbiased estimator for the ABC likelihood (\ref{eq:abc_lik}) is obtained as
\begin{align}
    \pi_{\text{ABC}}(\Bxobs \cond \Btheta) \approx \frac{1}{M}\sum_{i=1}^M K_h(\Delta(\Bxobs,\Bx^{(i)})), \label{eq:abc_unbiased}
\end{align}
where $\Bx^{(i)}\sim\pi(\cdot\cond\Btheta)$ for $i=1,\ldots,M$. 
Further details on ABC methods can be found e.g.~in \citet{Sisson2019}.

\section{Proofs and mathematical derivations of Section \ref{sec:gpe} and \ref{sec:acq}} \label{appesec:proofs}

\begin{proof}[Proposition \ref{prop:med}]
We use the fact that the set of medians of a random variable is a closed interval which we denote as $[m_1,m_2]$ where $m_1\leq m_2$. That is, $m$ is a median of $\gamma$ $\iff$ $m\in[m_1,m_2]$. From the definition of the median we see that $\prob(\gamma < m) \leq 1/2$ for any median $m$. Suppose $m_1<m\leq m_2$. Since then $1/2\leq\prob(\gamma\leq m_1)\leq\prob(\gamma< m)$ it follows that $\prob(\gamma<m)=1/2$ in this case. 

We consider fixed $u$, treat the conditional error $\conderr_{u,\hatgamma}$ as a function of $\hatgamma$ and write it as 
\begin{align*}
    \conderr_{u,\hatgamma} = \begin{cases}\prob(\gamma<u) &\textnormal{if } \hatgamma \geq u, \\ \prob(\gamma\geq u) &\textnormal{if } \hatgamma < u, \end{cases}
\end{align*}
to ease up the analysis to follow. 
This function is clearly bounded and consists of two values depending whether $\hatgamma\geq u$ or $\hatgamma<u$. 

Suppose first $u=m_1$. Then $\prob(\gamma<m_1)\leq 1/2$ and consequently $\prob(\gamma\geq m_1)\geq 1/2$. Thus, if we choose $\hatgamma\geq m_1$ we get the minimum. In particular, we can choose $\hatgamma$ to be any median. 

Suppose now $m_1<u\leq m_2$. Then $\prob(\gamma<u)=\prob(\gamma\geq u) = 1/2$ so that any choice of $\hatgamma$ will do. Again, we can choose $\hatgamma$ to be any median. 

Suppose $u<m_1$. Since $u$ is not a median it must hold that $\prob(\gamma\geq u)<1/2$ or $\prob(\gamma\leq u)<1/2$. The former option is not possible because it would contradict with the facts that CDF is a non-decreasing function and $m_1$ is median. Thus $\prob(\gamma\leq u)<1/2$ must hold and it clearly follows that $\prob(\gamma<u) < 1/2$ so we can choose $\hatgamma$ to be any median to get this minimum value. Similarly we see that if $u>m_2$, then $\prob(\gamma\geq u) < 1/2$ so that we can choose $\hatgamma$ to be any median to get the minimum.
We have thus shown that the median of $\gamma$ minimises the conditional error given each fixed value of $u$. 

Since any median minimises the conditional error with each fixed $u$, it follows that any median minimises also the conditional error integrated over $u\in[0,1]$ which is the unconditional error. 
%
Alternatively, we can also see this as follows. We write
\begin{align*}
    \conderr_{u,\hatgamma} &= \prob(\gamma < u\cond u)\indic_{\hatgamma \geq u} + \prob(\gamma \geq u\cond u)\indic_{\hatgamma < u} \\
    &= \prob(\gamma < u\cond u)\indic_{\hatgamma \geq u} + (1-\prob(\gamma < u\cond u))(1-\indic_{\hatgamma \geq u}) \\
    &= \indic_{\hatgamma \geq u}(2\prob(\gamma < u\cond u)-1) + c_u,
\end{align*}
where $c_u=\prob(\gamma \geq u\cond u)$ does not depend on $\hatgamma$. 
It follows that
\begin{align*}
    \unconderr_{\hatgamma} = \int_0^{\hatgamma}(2\prob(\gamma < u\cond u)-1) \ud u + \int_0^1 c_u \ud u.
\end{align*}
The claim then follows from the facts $\prob(\gamma < m_1)\leq 1/2$, $\prob(\gamma < m) = 1/2$ for $m\in(m_1,m_2]$ and $\prob(\gamma < u)>1/2$ for $u>m_2$. 
\end{proof}

\noindent\textbf{Justification for (\ref{eq:conderr_gp}) and (\ref{eq:unconderr_gp}).} 
The former equation is obtained as follows 
\begin{align*}
    \conderr_{t,u,\hatgamma}(\Btheta,\Btheta') 
    &= \prob(\gamma_f(\Btheta,\Btheta') < u\cond u)\indic_{\med(\gamma_f(\Btheta,\Btheta')) \geq u} + \prob(\gamma_f(\Btheta,\Btheta') \geq u\cond u)\indic_{\med(\gamma_f(\Btheta,\Btheta')) < u} \\
    &= \prob(\log\gamma_f(\Btheta,\Btheta') < \logu\cond u)\indic_{\med(\gamma_f(\Btheta,\Btheta')) \geq u} \!+\! \prob(\log\gamma_f(\Btheta,\Btheta') \geq \logu\cond u)\indic_{\med(\gamma_f(\Btheta,\Btheta')) < u} \\
    &= \Phi\left( \frac{\mu_t(\Btheta,\Btheta') - \logu}{\sigma_t(\Btheta,\Btheta')} \right)\!\indic_{\mu_t(\Btheta,\Btheta') < \logu} 
    + \Phi\left( \frac{ \logu - \mu_t(\Btheta,\Btheta')}{\sigma_t(\Btheta,\Btheta')} \right)\!\indic_{\mu_t(\Btheta,\Btheta') \geq \logu} \\
    &= \Phi\left( \frac{\min\{\mu_t(\Btheta,\Btheta') - \logu, \logu - \mu_t(\Btheta,\Btheta') \}}{\sigma_t(\Btheta,\Btheta')} \right) \\
    &= \Phi\left( -\frac{|\mu_t(\Btheta,\Btheta') - \logu|}{\sigma_t(\Btheta,\Btheta')} \right).
\end{align*}

We state a Lemma that we need to derive (\ref{eq:unconderr_gp}):
\begin{lemma} \label{lemma:integral1}
Suppose $\sigma>0$ and $0\leq a \leq b$. Then 
\begin{align*}
    \int_a^b \Phi\left( \frac{\log u - \mu}{\sigma} \right) \ud u 
    = \e^{\mu}\left[ e^{\beta}\Phi\left(\frac{\beta}{\sigma}\right)
    - e^{\alpha}\Phi\left(\frac{\alpha}{\sigma}\right)
    + \e^{\sigma^2/2}\left( \Phi\left(\frac{\alpha-\sigma^2}{\sigma}\right) - \Phi\left(\frac{\beta-\sigma^2}{\sigma}\right) \right)
    \right], 
\end{align*}
where $\alpha \eqdef \log a - \mu$ and $\beta \eqdef \log b - \mu$. 
\end{lemma}
\begin{proof}
We first use change of variables $x={(\log u - \mu)}/{\sigma}$ to compute
\begin{align*}
    \int_a^b \Phi\left( \frac{\log u - \mu}{\sigma} \right) \ud u 
    = \sigma\e^{\mu}\int_{\alpha/\sigma}^{\beta/\sigma} \e^{\sigma x}\Phi(x) \ud x. 
\end{align*}
The final equality then follows by using the equation 101.000 in \citet{Owen1980} and some straightforward simplifications. 
\end{proof}

To shorten the notation, we write $\mu_t$ for $\mu_t(\Btheta,\Btheta')$ and $\sigma_t$ for $\sigma_t(\Btheta,\Btheta')$. We can write
\begin{align*}
    \unconderr_{t,\hatgamma}(\Btheta,\Btheta') &= \int_0^1 \Phi\left( -\frac{|\mu_t - \log u|}{\sigma_t} \right) \ud u,
\end{align*}
%
from which we see that if $\mu_t\geq 0$, then 
\begin{align*}
    \unconderr_{t,\hatgamma}(\Btheta,\Btheta') &= \int_0^1 \Phi\left( \frac{\log u - \mu_t}{\sigma_t} \right) \ud u.
\end{align*}
The first case of (\ref{eq:unconderr_gp}) then follows immediately by using Lemma \ref{lemma:integral1} and some straightforward simplifications. 

Suppose now that $\mu_t< 0$. Then $0<\e^{\mu_t}<1$ and we can write
\begin{align}
    \unconderr_{t,\hatgamma}(\Btheta,\Btheta') &= \int_0^{\e^{\mu_t}}\Phi\left( \frac{\log u - \mu_t}{\sigma_t} \right) \ud u + \int_{\e^{\mu_t}}^{1}\Phi\left( \frac{\mu_t - \log u}{\sigma_t} \right) \ud u \nonumber \\
    &= 1-\e^{\mu_t} + \int_0^{\e^{\mu_t}}\Phi\left( \frac{\log u - \mu_t}{\sigma_t} \right) \ud u - \int_{\e^{\mu_t}}^{1}\Phi\left( \frac{\log u - \mu_t}{\sigma_t} \right) \ud u. \label{eq:tyuio}
\end{align}
We use Lemma \ref{lemma:integral1} to compute both integrals in (\ref{eq:tyuio}) and after some straightforward computations we obtain the second case of (\ref{eq:unconderr_gp}).

\begin{proof}[Proposition \ref{prop:error_formulas}]
Based on the GP model, we have
\begin{align*}
    \By^*\cond\Btheta^*,\ddata_t \sim \Normal_b(m_t(\Btheta^*),c_t(\Btheta^*,\Btheta^*) + \BLambda^{\!*}). 
\end{align*}
Then, by Lemma 5.1 in \citet{Jarvenpaa2019_sl}, it follows that\footnote{Lemma 5.1 in fact shows the result for the GP mean and variance functions in a single $\Btheta$-location only but it is easy to see that the result immediately extends for the more general case considered here.}
\begin{align*}
    \begin{bmatrix}
    m_{t+b}^*(\Btheta) \\ m_{t+b}^*(\Btheta')
    \end{bmatrix} \!\cond\Btheta^*,\ddata_t
    &\sim \Normal_2\left( \begin{bmatrix} m_t(\Btheta) \\ m_t(\Btheta') \end{bmatrix}, \begin{bmatrix} \tau_t^2(\Btheta;\Btheta^*) & \omega_t(\Btheta,\Btheta';\Btheta^*) \\ \omega_t(\Btheta,\Btheta';\Btheta^*) & \tau_t^2(\Btheta';\Btheta^*) \end{bmatrix} \right), \\
    c_{t+b}^*(\Btheta,\Btheta') &= c_t(\Btheta,\Btheta') - \omega_t(\Btheta,\Btheta';\Btheta^*), 
\end{align*}
where $*$ is used to emphasise that these quantities depend on $\Btheta^*$ and possibly also $\By^*$ via $\ddata^*$. 
It follows that
\begin{align*}
    \mu_{t+b}^*(\Btheta,\Btheta') 
    \cond\Btheta^*,\ddata_t
    &\sim \Normal_1(\mu_t(\Btheta,\Btheta'), \xi_t^2(\Btheta',\Btheta;\Btheta^*)),
\end{align*}
where $\mu_t(\Btheta,\Btheta')$ is given by (\ref{eq:mu}) and $\xi_t^2(\Btheta',\Btheta;\Btheta^*)$ by (\ref{eq:xi}). 
We also see that
\begin{align}
\begin{split}
    \sigma_{t+b}^{2*}(\Btheta,\Btheta') &= s_t^2(\Btheta)-\tau_t^2(\Btheta;\Btheta^*) + s_t^2(\Btheta')-\tau_t^2(\Btheta';\Btheta^*) - 2(c_t(\Btheta,\Btheta')-\omega_t^2(\Btheta',\Btheta;\Btheta^*)) \\
    &= s_t^2(\Btheta) + s_t^2(\Btheta') - 2c_t(\Btheta,\Btheta') 
    - (\tau_t^2(\Btheta;\Btheta^*)+\tau_t^2(\Btheta';\Btheta^*) - \omega_t^2(\Btheta',\Btheta;\Btheta^*)) \\
    &= \sigma_{t}^{2}(\Btheta,\Btheta') - \xi_t^2(\Btheta',\Btheta;\Btheta^*). 
\end{split} \label{eq:updgpvarformula}
\end{align}
To shorten the notation, we once again drop ``$(\Btheta,\Btheta')$'' from various formulas. For example, we write $\mu_t$ for $\mu_t(\Btheta,\Btheta')$ and $\xi_t(\Btheta^*)$ for $\xi_t(\Btheta',\Btheta;\Btheta^*)$. 

We write the conditional error as
\begin{align}
    \conderr_{t+b,u,\hatgamma}(\Btheta,\Btheta') 
    = 2\Phi\left(\frac{\logu-\mu_{t+b}^*}{\sigma_{t+b}^*}\right)\indic_{\mu_{t+b}^*\geq\logu} 
    - \indic_{\mu_{t+b}^*\geq\logu} 
    + 1 - \Phi\left(\frac{\logu-\mu_{t+b}^*}{\sigma_{t+b}^*}\right).
    \label{eq:cv1}
\end{align}
We then compute
\begin{align*}
    \mean_{\mu_{t+b}^*\cond\Btheta^*,\ddata_t}\left[ \indic_{\mu_{t+b}^*\geq\logu} \right] &= 1 - \int_{-\infty}^{\logu}\Normal(\mu_{t+b}^*\cond\mu_{t},\xi_t^2(\Btheta^*)) 
    = 1 - \Phi\left(\frac{\logu-\mu_{t}}{\xi_t(\Btheta^*)}\right), \\
    \mean_{\mu_{t+b}^*\cond\Btheta^*,\ddata_t}\left[ 1 - \Phi\left(\frac{\logu-\mu_{t+b}^*}{\sigma_{t+b}^*}\right)\right] &= 1 - \Phi\left(\frac{\logu-\mu_{t}}{\sigma_{t}}\right),
\end{align*}
where we have used equation 3.82 in \citet{Rasmussen2006} and the fact $\Phi(z)=1-\Phi(-z)$. 
The first term in (\ref{eq:cv1}) requires some more work. We write
\begin{align}
    &\mean_{\mu_{t+b}^*\cond\Btheta^*,\ddata_t}\left[ 2\Phi\left(\frac{\logu-\mu_{t+b}^*}{\sigma_{t+b}^*}\right)\indic_{\mu_{t+b}^*\geq\logu} \right] \nonumber \\
    &\myquad= 2\int_{\logu}^{\infty}  \Phi\left(\frac{\logu-\mu_{t+b}^*}{\sigma_{t+b}^*}\right) \Normal(\mu_{t+b}^* \cond \mu_t, \xi_t^2(\Btheta^*)) \ud \mu_{t+b}^* \nonumber \\
    %
    &\myquad= 2\Phi\left(\frac{\logu-\mu_{t}}{\sigma_{t}}\right) - 2\int_{-\infty}^{\logu}  \Phi\left(\frac{\logu-\mu_{t+b}^*}{\sigma_{t+b}^*}\right) \Normal(\mu_{t+b}^* \cond \mu_t, \xi_t^2(\Btheta^*)) \ud \mu_{t+b}^* \nonumber \\
    &\myquad= 2\Phi\left(\frac{\logu-\mu_{t}}{\sigma_{t}}\right) - 2\int_{-\infty}^{\frac{\logu-\mu_{t}}{\xi_t(\Btheta^*)}}  \Phi\left(\frac{\logu-\mu_{t}-\xi_t(\Btheta^*)x}{\sigma_{t+b}^*}\right) \Normal(x \cond 0, 1) \ud x, \label{eq:xcvb}
\end{align}
where we used the transformation $x=(\mu_{t+b}^*-\mu_{t})/\xi_t(\Btheta^*)$. To compute the integral in (\ref{eq:xcvb}) we use the equation 10,010.1 in \citet{Owen1980}. After some straightforward computations we obtain 
\begin{align}
    \int_{-\infty}^{\frac{\logu-\mu_{t}}{\xi_t(\Btheta^*)}}  \Phi\left(\frac{\logu-\mu_{t}-\xi_t(\Btheta^*)x}{\sigma_{t+b}^*}\right) \Normal(x \cond 0, 1) \ud x 
    = \bvn\left(\frac{\logu-\mu_{t}}{\sigma_{t}},\frac{\logu-\mu_{t}}{\xi_t(\Btheta^*)}; \frac{\xi_t(\Btheta^*)}{\sigma_{t}}\right), \label{eq:bvnformula}
\end{align}
where $\bvn(h,k,\rho)$ is the pdf of a bivariate Gaussian with unit variances and correlation coefficient $\rho$ evaluated at $(h,k)\T$. We use the connection between $\bvn$ and \owen{} given by equation 3.1 in \citet{Owen1980}
(the first case of which applies here because $hk=(\logu-\mu_{t})^2/(\sigma_{t}\xi_t(\Btheta^*))\geq 0$ and because $hk=0 \iff h=k=0$ hold with (\ref{eq:bvnformula})) and the fact $T(h,0)=0$ for any $h\in\reals$, to further obtain
\begin{align*}
    &\bvn\left(\frac{\logu-\mu_{t}}{\sigma_{t}},\frac{\logu-\mu_{t}}{\xi_t(\Btheta^*)}; \frac{\xi_t(\Btheta^*)}{\sigma_{t}}\right) \\
    &\myquad= \half\Phi\left(\frac{\logu-\mu_{t}}{\sigma_{t}}\right) 
    + \half\Phi\left(\frac{\logu-\mu_{t}}{\xi_t(\Btheta^*)}\right) 
    - T\left( \frac{\logu-\mu_t}{\sigma_t}, \frac{\sqrt{\sigma_t^2 - \xi_t^2(\Btheta^*)}}{ \xi_t(\Btheta^*)} \right).
\end{align*}
Once we combine the equations above, we see that all the $\Phi(\cdot)$-terms cancel out and we are left with (\ref{eq:exp_var}). 

The formula (\ref{eq:exp_unconderr}) follows immediately from above because we can change the order of integration over $u\in[0,1]$ and the expectation with respect to $\pi(\mu_{t+b}^*\cond\mu_{t},\xi_t(\Btheta^*))$ using Fubini's theorem. 

By using (\ref{eq:kappavar}) and the fact $\Phi(z)=1-\Phi(-z)$, we write the expected variance of $\accrejindic_{u,f}$ as 
\begin{align*}
    L_t^{\textnormal{v},u}(\Btheta,\Btheta';\Btheta^*) 
    &= \int_{-\infty}^{\infty} \left[ \Phi\left(\frac{\logu-\mu_{t+b}^*}{\sigma_{t+b}^*}\right) - \Phi^2\left(\frac{\logu-\mu_{t+b}^*}{\sigma_{t+b}^*}\right)\right] \Normal(\mu_{t+b}^* \cond \mu_t, \xi_t^2(\Btheta^*)) \ud \mu_{t+b}^*.
\end{align*}
We then recognise that this integral is of the same form as in the proof of Lemma 3.1 in \citet{Jarvenpaa2018_acq} from which (\ref{eq:exp_var}) follows. 
\end{proof}

\begin{proof}[Proposition \ref{prop:xi}]
The result for the variance of the log-\mh{} ratio follows immediately from $\sigma^2_{t+b}(\Btheta,\Btheta')=\sigma_t^2(\Btheta,\Btheta') - \xi_t^2(\Btheta,\Btheta';\Btheta^*)$ which is given by (\ref{eq:updgpvarformula}). 
The \owen{} satisfies $T(h,a) = \frac{1}{2\pi}\int_{0}^{a}\frac{\e^{-h^2(1+x^2)/2}}{1+x^2} \ud x$. This fact shows that the function $a\mapsto T(h,\sqrt{a}), a\geq 0$ is strictly increasing with any fixed $h\in\reals$. It follows that $L_t^{\conderr,u}(\Btheta,\Btheta';\Btheta^*)$ is minimised when $\frac{\sigma_t^2(\Btheta,\Btheta') - \xi_t^2(\Btheta,\Btheta';\Btheta^*)}{ \xi_t^2(\Btheta,\Btheta';\Btheta^*)}$ is minimised which clearly happens when $\Btheta^*$ is chosen as in (\ref{eq:xiopt}) since $0 \leq \xi_t^2(\Btheta,\Btheta';\Btheta^*) \leq \sigma_t^2(\Btheta,\Btheta')$. (These two inequalities follows from (\ref{eq:gpomegav2}) and the fact that variance is always non-negative.) 
This reasoning holds with any $u>0$ so that $L_t^{\unconderr}(\Btheta,\Btheta';\Btheta^*)$ is also minimised by this choice of $\Btheta^*$. 
The proof for the case of $L_t^{\textnormal{v},u}(\Btheta,\Btheta';\Btheta^*)$ is similar as for $L_t^{\conderr,u}(\Btheta,\Btheta';\Btheta^*)$. 
\end{proof}

\section{Some theoretical analysis} \label{appsec:extra_analysis}

In the following we present some theoretical analysis for understanding \gpmh{}. The key results are briefly summarised in the main text as Section \ref{sec:theory}. 

\subsection{Evaluation locations} \label{subsec:eval_loc_analysis}

We analyse the optimal evaluations locations. We mostly limit our attention to the sequential case $b=1$ where $\Btheta^*$ consists of a single parameter. Throughout this section we assume $\Btheta,\Btheta'\in\Theta$ are arbitrary distinct points.

\subsubsection{Analysis of \texorpdfstring{$\tau_t$}{tau}-function}

We start by stating some results regarding $\tau_t^2(\Btheta;\Btheta^*)$ in (\ref{eq:gptau}) which gives the reduction of GP variance at $\Btheta$ resulting from evaluations at $\Btheta^*$. 
When $b=1$ we can write (\ref{eq:gptau}) as 
\begin{align*}
    \tau_t^2(\Btheta;\Btheta^*) = \frac{c_t^2(\Btheta,\Btheta^*)}{s_t^2(\Btheta^*) + \sigma_n^2(\Btheta^*)}.
\end{align*}
Consider first the case $\sigma_n(\Btheta) = 0$. We see immediately that then $\tau_t^2(\Btheta;\Btheta) = s_t^2(\Btheta)$ so that the intuitive choice $\Btheta^* = \Btheta$ gives the maximal reduction of variance at $\Btheta$. 

Consider now the situation $\sigma_n(\Btheta) = \sigma_n>0$ for all $\Btheta$, that is, the noise level is constant. 
Then we have $\tau_t^2(\Btheta;\Btheta) = [s_t^2(\Btheta)/(s_t^2(\Btheta) + \sigma_n^2)]s_t^2(\Btheta) < s_t^2(\Btheta)$ whenever $s_t(\Btheta)>0$. 
We can write $c_t(\Btheta,\Btheta^*) = \rho_t(\Btheta,\Btheta^*)s_t(\Btheta)s_t(\Btheta^*)$, where $\rho_t(\cdot,\cdot)$ is the GP correlation function and 
\begin{align*}
    \tau_t^2(\Btheta;\Btheta^*) = \frac{\rho_t^2(\Btheta,\Btheta^*)s_t^2(\Btheta)s_t^2(\Btheta^*)}{s_t^2(\Btheta^*) + \sigma_n^2}.
\end{align*}
If $|\rho_t(\Btheta,\Btheta^*)|=1$ and $s_t^2(\Btheta^*)>s_t^2(\Btheta)>0$ then we obtain
\begin{align*}
    \tau_t^2(\Btheta;\Btheta^*) = \frac{s_t^2(\Btheta)s_t^2(\Btheta^*)}{s_t^2(\Btheta^*) + \sigma_n^2} 
    = \frac{s_t^4(\Btheta)}{s_t^2(\Btheta) + \frac{s_t^2(\Btheta)}{s_t^2(\Btheta^*)}\sigma_n^2} > \tau_t^2(\Btheta;\Btheta)
\end{align*}
which shows that choosing $\Btheta^* = \Btheta$ does not produce the maximal reduction of variance at $\Btheta$ in general.
%
This is obviously also the case when $\sigma_n(\Btheta)>0$ is not constant with respect to $\Btheta$. For example, if $0<s_t(\Btheta)<\infty$ and $\sigma_n(\Btheta)=\infty$ then obviously $\tau_t^2(\Btheta;\Btheta)=0$ but it is possible that $\tau_t^2(\Btheta;\Btheta^*)>0$ for some $\Btheta^*\neq\Btheta$.

\subsubsection{Analysis of \texorpdfstring{$\xi_t$}{xi}-function} \label{sssec:xi}

The preliminary analysis above showed that an evaluation at $\Btheta$ may not maximally reduce the uncertainty about $f$ at $\Btheta$ (or be a sensible choice at all) unless the evaluation is exact. 
We now similarly analyse $\xi_t^2(\Btheta,\Btheta';\Btheta^*)$ in (\ref{eq:xi}).
To this end, we first observe that we can alternatively write (\ref{eq:xi}) as 
\begin{align}
    \xi_t^2(\Btheta,\Btheta';\Btheta^*) &=  (c_t(\Btheta,\Btheta^*) - c_t(\Btheta',\Btheta^*))[c_t(\Btheta^*,\Btheta^*) + \BLambda^{\!*}]^{-1}(c_t(\Btheta^*,\Btheta) - c_t(\Btheta^*,\Btheta')) \geq 0. \label{eq:gpomegav2}
\end{align}

First, suppose $\sigma_n(\Btheta)=\sigma_n(\Btheta')=0$. Then $\xi_t^2(\Btheta,\Btheta';[\Btheta,\Btheta']) = s_t^2(\Btheta)+s_t^2(\Btheta')-2c_t(\Btheta,\Btheta') = \sigma_t^2(\Btheta,\Btheta')$ which is easily verified using (\ref{eq:gpomegav2}) and some straightforward computations. That is, when two evaluations are to be used in the non-noisy setting 
so that $b=2$, the intuitive choice $\Btheta^*=[\Btheta,\Btheta']$ produces the maximal reduction of uncertainty. 
Similarly, if we have already evaluated at $\Btheta$ and have one evaluation to use so that $b=1$, then $\xi_t^2(\Btheta,\Btheta';\Btheta^*) = \tau_t^2(\Btheta';\Btheta^*)$ which shows that the intuitive choice $\Btheta^*=\Btheta'$ produces the maximal reduction of uncertainty which in this case equals $s_t^2(\Btheta')=\sigma_t^2(\Btheta,\Btheta')$. 

Let us now suppose $b=1$. We can then write (\ref{eq:gpomegav2}) as 
\begin{align}
    \xi_t^2(\Btheta,\Btheta';\Btheta^*) &= \frac{[c_t(\Btheta,\Btheta^*) - c_t(\Btheta',\Btheta^*)]^2}{s_t^2(\Btheta^*) + \sigma_n^2(\Btheta^*)}. 
    %
    \label{eq:xin}
\end{align}
%
We suppose $\sigma_n(\Btheta) = \sigma_n \geq 0$ for all $\Btheta$. As analysing (\ref{eq:xin}) analytically is hard in general when $b=1$, we limit our attention to the $t=0$ case where the GP posterior equals the GP prior. For simplicity, we also assume a stationary covariance function of the form $k(\Btheta,\Btheta') = \sigmaf^2\kappa(||\Btheta - \Btheta'||_{\BLL})$ where $\sigmaf>0$, $\kappa:\realsp\rightarrow[0,1]$ is a strictly decreasing function with $\kappa(0)=1$ and $\lim_{r\rightarrow\infty}\kappa(r)=0$, $\BLL$ is a positive definite matrix and $||\Btheta||^2_{\BLL}\eqdef\Btheta\T\BLL\Btheta$. 
For example, the choice $\BLL=\diag(l_1^2,\ldots,l_p^2)^{-1}$ where $l_i>0$ are the length-scales and $\kappa(r)=\exp(-r^2/2)$ gives the (anisotropic) squared exponential (SE) covariance function. 
%
%
Then (\ref{eq:xin}) can be written as 
\begin{align}
    \xi_t^2(\Btheta,\Btheta';\Btheta^*) &= c[\kappa(||\Btheta - \Btheta^*||_{\BLL}) - \kappa(||\Btheta' - \Btheta^*||_{\BLL})]^2, 
    \label{eq:xisimple}
\end{align}
where $c>0$ does not depend on $\Btheta^*$. 
Now, (\ref{eq:xisimple}) shows that if $\Btheta^*$ is far enough (as measured using the norm $||\cdot||_{\BLL}$) from both $\Btheta$ and $\Btheta'$, then $\xi_t^2(\Btheta,\Btheta';\Btheta^*)\approx 0$ so that the uncertainty associated with the \mh{} accept/reject decision will decrease only little. 
Surprisingly, this is also the case if $\kappa(||\Btheta - \Btheta^*||_{\BLL}) = \kappa(||\Btheta' - \Btheta^*||_{\BLL})$, which is equivalent to $||\Btheta - \Btheta^*||_{\BLL}=||\Btheta' - \Btheta^*||_{\BLL}$. That is, points $\Btheta^*$ that are equally far from $\Btheta$ and $\Btheta'$ are uninformative in this case. 

Under the above assumptions, we can reason that if $\Btheta$ and $\Btheta'$ are very far from each other, then choosing $\Btheta^*=\Btheta$ or $\Btheta^*=\Btheta'$ will approximately maximise $\xi_t^2(\Btheta,\Btheta';\Btheta^*)$. Suppose next that we have SE covariance function. 
Then $\nabla_{\!\Btheta^*}\xi_t^2(\Btheta,\Btheta';\Btheta^*)$ is proportional to
\begin{align*}
\left[\e^{-\half||\Btheta - \Btheta^*||_{\BLL}^2} - \e^{-\half||\Btheta' - \Btheta^*||_{\BLL}^2}\right] \!\left[\e^{-\half||\Btheta - \Btheta^*||_{\BLL}^2}\BLL(\Btheta-\Btheta^*) - \e^{-\half||\Btheta' - \Btheta^*||_{\BLL}^2}\BLL(\Btheta'-\Btheta^*)\right]\!. 
\end{align*}
This formula shows that the gradient $\nabla_{\!\Btheta^*}\xi_t^2(\Btheta,\Btheta';\Btheta^*)$ is nonzero both at $\Btheta^*=\Btheta$ and $\Btheta^*=\Btheta'$ because $\Btheta\neq\Btheta'$ so these points are not local or global optima (except possibly in the special case where they belong to the boundary of the set $\Theta$). 
%
Our numerical experiments in Section \ref{sec:exp} further demonstrate that it matters in practice whether we optimise $\xi_t^2(\Btheta,\Btheta';\Btheta^*)$ globally or over $\Btheta^*\in\{\Btheta,\Btheta'\}$.

\subsection{Worst case upper bounds for the repeated, noisy evaluations} \label{subsec:nrevals}

We provide some analysis on the number of repeated, noisy log-likelihood evaluations, denoted by $\n$ in this section, needed at an individual iteration of Algorithm \ref{alg:gpmh} to satisfy the condition $\conderr_{\n,u,\hatgamma}(\Btheta,\Btheta')\leq\epsi$ or $\unconderr_{\n,\hatgamma}(\Btheta,\Btheta')\leq\epsi$. 
%
We first recognise some special cases: If $\epsi \geq 1/2$, then obviously no log-likelihood evaluations are needed. On the other hand, if $\epsi=0$ and $\pi(\Btheta')>0$ then the log-likelihood must be known exactly at both $\Btheta$ and $\Btheta'$ which requires arbitrarily many evaluations in the noisy case. 
As seen in Figure \ref{fig:cond_vs_uncond_errors}, if $\mu_t(\Btheta,\Btheta') = \log(u)$ then $\conderr_{t,u,\hatgamma}(\Btheta,\Btheta')=\Phi(0)=1/2$ even if $\sigma_t(\Btheta,\Btheta')$ is negligible yet nonzero. 
This situation however happens with vanishing probability. 
%
%
\begin{proposition} \label{prop:bound1}
Suppose that the GP prior model of Section \ref{subsec:gp_model} but with a stationary covariance function as in Section \ref{sssec:xi} holds and that $\n\geq 0$ evaluation locations are then chosen as to minimise either the conditional error (\ref{eq:conderr_bound1}) or unconditional error (\ref{eq:unconderr_bound1}) computed between distinct parameters $\Btheta, \Btheta'\in\Theta$ in an optimal fashion of Section \ref{sec:acq}. 
Suppose $0<\epsi<1/2$ and $\sigma_n(\Btheta)>0$ for all $\Btheta\in\Theta$. 
Then\footnote{The fact $c_{\n} = \min\{2\sigmaf,2\sqrt{2}\bar{\sigma}_n/\sqrt{\n}\} \leq 2\sqrt{2}\bar{\sigma}_n/\sqrt{\n}$ could be used to slightly simplify the bounds when $\n$ is even and $\n\geq2$.} it holds that 
\begin{align}
    \prob(\conderr_{\n,u,\hatgamma}(\Btheta,\Btheta') \geq \epsi) &\leq 1-\e^{2\Phi^{-1}(\epsi)c_{\n}}, \label{eq:conderr_bound1} \\
    \unconderr_{\n,\hatgamma}(\Btheta,\Btheta') &\leq \max_{\mu\leq 0}\left\{ \Phi\left(\frac{\mu}{c_{\n}}\right) + \e^{\mu+{c_{\n}^2}/{2}}\left( \Phi\left( -\frac{\mu+c_{\n}^2}{c_{\n}} \right) - 2\Phi(-c_{\n})\right)\!\right\}, \label{eq:unconderr_bound1}
\end{align}
where $\prob(\cdot)$ is with respect to $u\sim\Unif([0,1])$, $c_{\n}\eqdef 2\min\{\sigmaf,\bar{\sigma}_n/\sqrt{\lfloor \n/2\rfloor}\}$ and $\bar{\sigma}_n \eqdef \max\{ \sigma_n(\Btheta), \sigma_n(\Btheta') \}$. 
\end{proposition}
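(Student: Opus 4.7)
The plan is to reduce both bounds to a single estimate on the posterior standard deviation $\sigma_n(\Btheta,\Btheta')$ of the log-likelihood difference, and then apply this estimate to the closed-form expressions (\ref{eq:conderr_gp}) and (\ref{eq:unconderr_gp}). By Proposition \ref{prop:xi}, optimising (\ref{eq:xiopt}) with $b=n$ and $t=0$ is equivalent to minimising $\sigma_n^2(\Btheta,\Btheta')=s_n^2(\Btheta)+s_n^2(\Btheta')-2c_n(\Btheta,\Btheta')$ over all batches of $n$ evaluation locations, so the $\sigma_n(\Btheta,\Btheta')$ obtained under the optimal design is at most the value produced by any specific suboptimal one.

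The core step is to show $\sigma_n(\Btheta,\Btheta')\le c_n$ by comparing to two such suboptimal designs. The empty design gives the trivial prior bound: because GP posterior variance is nonincreasing in observations we have $s_n^2(\cdot)\le\sigma_s^2$, and Cauchy--Schwarz gives $c_n(\Btheta,\Btheta')\ge-s_n(\Btheta)s_n(\Btheta')$, so $\sigma_n(\Btheta,\Btheta')\le s_n(\Btheta)+s_n(\Btheta')\le 2\sigma_s$. A more informative design places $m=\lfloor n/2\rfloor$ simulations at each of $\Btheta$ and $\Btheta'$ (any extra evaluation if $n$ is odd can be discarded). Their averages form a single effective Gaussian observation $\bar y_{\Btheta'}-\bar y_{\Btheta}$ of $f(\Btheta')-f(\Btheta)$ with independent noise of variance $\sigma_n^2(\Btheta)/m+\sigma_n^2(\Btheta')/m\le 2\bar\sigma_n^2/m$. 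Since conditioning on the full data set yields a smaller posterior variance than conditioning on this coarser scalar alone, and that in turn is bounded above by the noise variance of the scalar (standard Gaussian conjugacy), we get $\sigma_n(\Btheta,\Btheta')\le\sqrt 2\,\bar\sigma_n/\sqrt{m}$. Taking the minimum of the two bounds and absorbing the $\sqrt 2$ into a factor of $2$ yields $\sigma_n(\Btheta,\Btheta')\le c_n$. I expect the main subtlety here to be controlling the basis-function contribution to $s_0^2$; the cleanest route is to carry out the argument conditional on $\Bbeta$ and to observe that any mean shift cancels in the difference $f(\Btheta')-f(\Btheta)$.

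With $\sigma_n\le c_n$ established, (\ref{eq:conderr_bound1}) follows by a direct computation. From (\ref{eq:conderr_gp}), $\conderr_{n,u,\hatgamma}(\Btheta,\Btheta')\ge\epsilon$ exactly when $|\mu_n(\Btheta,\Btheta')-\log u|\le-\sigma_n(\Btheta,\Btheta')\Phi^{-1}(\epsilon)$, which places $\log u$ in an interval of length $L=-2\sigma_n\Phi^{-1}(\epsilon)\ge 0$. Since $\log u$ has density $\e^{x}\indic_{x\le 0}$ under $u\sim\Unif([0,1])$, the $u$-probability of such an event is maximised by anchoring the interval at $[-L,0]$, giving $1-\e^{-L}=1-\e^{2\sigma_n\Phi^{-1}(\epsilon)}\le 1-\e^{2c_n\Phi^{-1}(\epsilon)}$, where the last step uses $\Phi^{-1}(\epsilon)\le 0$ together with $\sigma_n\le c_n$. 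For (\ref{eq:unconderr_bound1}) I would first use the integral form $\unconderr_{n,\hatgamma}(\mu,\sigma)=\int_0^1\Phi(-|\mu-\log u|/\sigma)\ud u$: the integrand is pointwise nondecreasing in $\sigma$, hence $\unconderr(\mu_n,\sigma_n)\le\unconderr(\mu_n,c_n)$. Finally, for $\mu\ge 0$ the absolute value simplifies to $\mu-\log u$ and the integrand $\Phi((\log u-\mu)/c_n)$ is pointwise nonincreasing in $\mu$, so $\unconderr(\mu,c_n)\le\unconderr(0,c_n)$; a short calculation shows that this common value at $\mu=0$ coincides with the $\mu<0$ branch of (\ref{eq:unconderr_gp}) evaluated in the limit $\mu\uparrow 0$, and hence is attained inside the max on the right-hand side of (\ref{eq:unconderr_bound1}).
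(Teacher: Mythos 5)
Your proposal is correct and follows the same overall strategy as the paper's proof---reduce both bounds to the deterministic estimate $\sigma_n(\Btheta,\Btheta')\le c_n$ by comparing the optimal design with a suboptimal one that splits $\lfloor n/2\rfloor$ evaluations between $\Btheta$ and $\Btheta'$, then take worst cases over $\mu_n$---but you execute the two key sub-steps differently. For the variance bound the paper works pointwise: it uses Cauchy--Schwarz to get $\sigma_n(\Btheta,\Btheta')\le s_n(\Btheta)+s_n(\Btheta')$ and then computes $s_t(\Btheta)=\sigmaf/\sqrt{1+t\sigmaf^2/\sigma_n^2(\Btheta)}\le\min\{\sigmaf,\sigma_n(\Btheta)/\sqrt t\}$ explicitly via Sherman--Morrison; you instead bound the variance of the difference directly, treating $\bar y_{\Btheta'}-\bar y_{\Btheta}$ as a single Gaussian observation of $f(\Btheta')-f(\Btheta)$ and invoking conjugacy together with monotonicity of Gaussian conditional variances under refinement of the conditioning set. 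This is valid, avoids the explicit matrix algebra, and in fact yields the slightly sharper $\sqrt 2\,\bar\sigma_n/\sqrt{\lfloor n/2\rfloor}$ for that branch before you relax to $c_n$. For (\ref{eq:conderr_bound1}) your interval-translation argument (density of $\log u$ is $\e^{x}$ on $(-\infty,0]$, so an interval of length $L$ has probability at most $1-\e^{-L}$, attained at $[-L,0]$) replaces the paper's case analysis over the sign of $\mu_n$; both identify the same worst case $\mu_n=\Phi^{-1}(\epsilon)\sigma_n$. Your handling of (\ref{eq:unconderr_bound1}) (monotonicity in $\sigma$, reduction of $\mu\ge 0$ to $\mu=0$, and the observation that for $\mu<0$ the value is literally one of the expressions maximised on the right-hand side) matches the paper's argument.

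One caveat: your suggested fix for the basis-function contribution does not work as stated. Conditioning on $\Bbeta$ only lowers the posterior variance (so it cannot upper-bound the marginal-posterior quantity appearing in the proposition), and the mean term does not cancel in $f(\Btheta')-f(\Btheta)$ unless $\BH(\Btheta)=\BH(\Btheta')$, which fails for the linear/quadratic basis used in the paper. That said, the paper's own proof makes the same implicit simplification---it computes $s_t^2(\Btheta)$ from the kernel alone, i.e.\ takes the prior variance to be $\sigmaf^2$---so this affects only the $2\sigmaf$ branch of $c_n$ and does not distinguish your argument from the published one; your averaged-observation bound for the $\bar\sigma_n/\sqrt{\lfloor n/2\rfloor}$ branch is in fact robust to the basis-function term, unlike the paper's pointwise computation.
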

%

Importantly, Proposition \ref{prop:bound1} concerns the worst case situation where no log-likelihood evaluations are yet obtained (that is, the GP posterior equals the GP prior) and where the evaluations are always such that $\mu_{\n}(\Btheta,\Btheta')=\Phi^{-1}(\epsi)\sigma_{\n}(\Btheta,\Btheta')$ as seen in the proof of (\ref{eq:conderr_bound1}) in \appe{} \ref{appsec:proofs2}. Furthermore, Proposition \ref{prop:bound1} is for the batch case where $\n$ evaluations are selected simultaneously (but we hypothesise similar result holds in the sequential case). 
We observe numerically that the maximum of (\ref{eq:unconderr_bound1}) is typically found in $\mu\in(-0.7,0)$. 
We see that $\prob(\conderr_{\n,u,\hatgamma}\geq \epsi)\rightarrow 0$ and $\unconderr_{\n,\hatgamma}(\Btheta,\Btheta')\rightarrow 0$ as $\n\rightarrow\infty$. 

%
We next develop revised bounds that are more representative of practice though they still assume a certain worst case situation. We consider a Gaussian target density $\Normal_p(\Bmu,\BSigma)$ where $\BSigma\in\reals^{p\times p}$ is positive definite and w.l.o.g.~we set $\Bmu=\Bzeros$. We consider a Gaussian proposal $q(\Btheta'\cond\Btheta)=\Normal_p(\Btheta'\cond\Btheta,s^2\BSigma)$ where $s>0$ is a fixed scaling parameter. 
We suppose that the artificial scenario, where $\Btheta$ is first drawn from the target and $\Btheta'$ is then drawn from the proposal so that $\Btheta\sim\Normal_p(\Bzeros,\BSigma), \Btheta'\cond\Btheta \sim \Normal_p(\Btheta,s^2\BSigma)$, represents a typical \mh{} iteration. Then we obtain 
\begin{align}
    \mean_{\Btheta,\Btheta'}(f(\Btheta')-f(\Btheta)) = -\half ps^2, \quad \Var_{\Btheta,\Btheta'}(f(\Btheta')-f(\Btheta)) = \half ps^2(s^2+2),
    \label{eq:mean_var_fdiff}
\end{align}
where $f$ is the Gaussian log-target density and where the expectation and variance are wrt.~the randomness due to sampling $\Btheta$ and $\Btheta'$. 
In particular, if we use the common recommended choice $s^2=2.4^2/p$ \citep[Chapter~12.2]{Gelman2013}, we obtain $\mean_{\Btheta,\Btheta'}(f(\Btheta')-f(\Btheta)) = -2.88$ and a relatively large standard deviation $\stdev_{\Btheta,\Btheta'}(f(\Btheta')-f(\Btheta)) = 2.4\sqrt{2.88/p+1}$. 

The bounds of Proposition \ref{prop:bound1} are then revised by using the distribution of $\mu_{\n}(\Btheta,\Btheta')$, which we assume to be the same as that of $f(\Btheta')-f(\Btheta)$ described above, in place of its worst case choice. These computations are not completely analytic but the resulting bounds are easily evaluated numerically. The details are postponed to \appe{} \ref{appsec:proofs2}.

Figure \ref{fig:bounds} demonstrates the bounds. According to the bounds of Proposition \ref{prop:bound1} (dashed lines in (a) and (b)) hundreds of repeated evaluations are needed if $\epsi$ is small. As expected, the revised bounds (solid lines in (a) and (c)), based on the assumed distribution of $\mu_t$ in the Gaussian case, are tighter. 
These results indicate a potential shortcoming of \gpmh{}: Many repeated evaluations may be needed at least in some individual worst case iterations of the \gpmh{} algorithm. 

\begin{figure}[hbtp] 
\centering
\begin{subfigure}{0.32\textwidth}
\includegraphics[width=\textwidth]{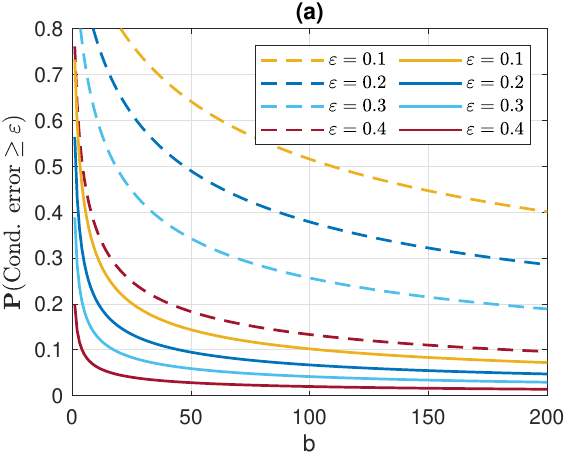}
\end{subfigure}
\begin{subfigure}{0.32\textwidth}
\includegraphics[width=\textwidth]{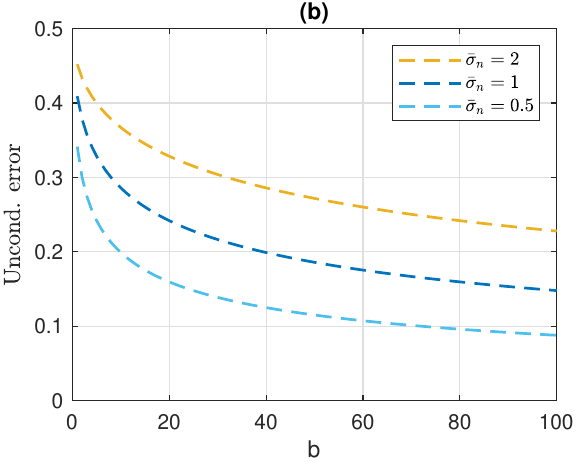}
\end{subfigure}
\begin{subfigure}{0.32\textwidth}
\includegraphics[width=\textwidth]{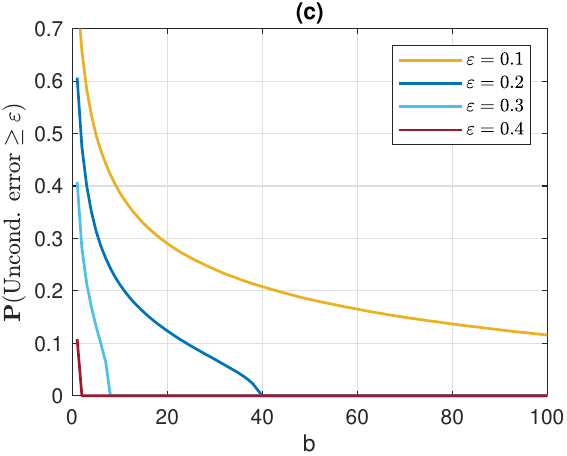}
\end{subfigure}
\caption{(a) Dashed lines show the worst case bound (\ref{eq:conderr_bound1}) for the probability that the conditional error is larger than $\epsi$. Solid lines show the corresponding bound based on the assumed distribution of $\mu_{\n}$ in the Gaussian case (\ref{eq:extraboundconderr}) with $p=5$ and $s^2=2.4^2/p$. In both cases $\bar{\sigma}_n=1$. (b) The worst case bound (\ref{eq:unconderr_bound1}) for the unconditional error for three different values of $\bar{\sigma}_n$. (c) The upper bounds for the probability that the unconditional error is larger than $\epsi$ in the assumed Gaussian case (\ref{eq:extraboundunconderr}).} \label{fig:bounds}
\end{figure}

\subsection{Proofs and mathematical derivations of Section \ref{subsec:nrevals}} \label{appsec:proofs2}

\begin{proof}[Proposition \ref{prop:bound1}]
We denote $\lambda_{\n} \eqdef -\Phi^{-1}(\epsi)\sigma_{\n}(\Btheta,\Btheta')$. Note that $\lambda_{\n}>0$ since $\epsi\in(0,1/2)$. We then obtain 
\begin{align}
\begin{split}
    \prob(\conderr_{\n,u,\hatgamma} \geq \epsi) &= \prob\left( \Phi\left( -\frac{|\mu_{\n}(\Btheta,\Btheta') - \logu|}{\sigma_{\n}(\Btheta,\Btheta')} \right) \geq \epsi \right) \\
    &= \prob\left( |\mu_{\n}(\Btheta,\Btheta') - \logu| \leq -\Phi^{-1}(\epsi)\sigma_{\n}(\Btheta,\Btheta') \right) \\
    %
    %
    &= 1 - \prob\left( \{\mu_{\n}(\Btheta,\Btheta') - \logu \geq \lambda_{\n} \} \cup \{ \logu - \mu_{\n}(\Btheta,\Btheta') \geq \lambda_{\n} \} \right) \\
    &= 1 - \prob\left( \mu_{\n}(\Btheta,\Btheta') - \logu \geq \lambda_{\n} \right) - \prob\left(\logu - \mu_{\n}(\Btheta,\Btheta') \geq \lambda_{\n} \right) \\ 
    &= 1 - \prob\left( u \leq \e^{\mu_{\n}(\Btheta,\Btheta') - \lambda_{\n}} \right) - \prob\left( u > \e^{\mu_{\n}(\Btheta,\Btheta') + \lambda_{\n}} \right) \\
    &= 1 - \min\{ \e^{\mu_{\n}(\Btheta,\Btheta') - \lambda_{\n}} ,1\} - (1-\min\{ \e^{\mu_{\n}(\Btheta,\Btheta') + \lambda_{\n}} ,1\}) \\
    &= \max\{ 1 - \e^{\mu_{\n}(\Btheta,\Btheta') - \lambda_{\n}} ,0\} + \min\{ \e^{\mu_{\n}(\Btheta,\Btheta') + \lambda_{\n}}-1 ,0\}. \label{eq:conderrmaxmin}
\end{split}
\end{align}
%

If $\mu_{\n}(\Btheta,\Btheta')\geq 0$, then simple computation shows that
\begin{align*}
    \prob(\conderr_{\n,u,\hatgamma} \geq \epsi) &= \max\{ 1 - \e^{\mu_{\n}(\Btheta,\Btheta') - \lambda_{\n}} ,0\} \\
    &=\begin{cases} 0 &\textnormal{if } \lambda_{\n} \leq \mu_{\n}(\Btheta,\Btheta'), \\
    1-\e^{\mu_{\n}(\Btheta,\Btheta') - \lambda_{\n}} &\textnormal{otherwise}. \end{cases}
\end{align*}
Since the function $\mu_{\n}\mapsto 1-\e^{\mu_{\n}-\lambda_{\n}}$ is decreasing for $\mu_{\n} \geq 0$ and because $\prob(\conderr_{\n,u,\hatgamma} \geq \epsi)=0$ for $\lambda_{\n} \leq \mu_{\n}(\Btheta,\Btheta')$, we see that $\prob(\conderr_{\n,u,\hatgamma} \geq \epsi) \leq 1 - \e^{-\lambda_{\n}}$ for any $\mu_{\n}(\Btheta,\Btheta') \geq 0$.

If $\mu_{\n}(\Btheta,\Btheta')<0$, then we see that 
\begin{align*}
    \prob(\conderr_{\n,u,\hatgamma} \geq \epsi) &= 1 - \e^{\mu_{\n}(\Btheta,\Btheta') - \lambda_{\n}}  + \min\{ \e^{\mu_{\n}(\Btheta,\Btheta') + \lambda_{\n}}-1 ,0\} \\
    &= \begin{cases}1-\e^{\mu_{\n}(\Btheta,\Btheta') - \lambda_{\n}} &\textnormal{if } -\lambda_{\n} \leq \mu_{\n}(\Btheta,\Btheta'), \\
    \e^{\mu_{\n}(\Btheta,\Btheta') + \lambda_{\n}} - \e^{\mu_{\n}(\Btheta,\Btheta') - \lambda_{\n}} &\textnormal{otherwise}. \end{cases} 
\end{align*}
Since the function $\mu_{\n}\mapsto 1-\e^{\mu_{\n}-\lambda_{\n}}$ is decreasing, its maximum in $-\lambda_{\n}\leq\mu_{\n}<0$ occurs with $\mu_{\n}=-\lambda_{\n}$. As $\mu_{\n}\mapsto \e^{\mu_{\n}-\lambda_{\n}}\e^{\mu_{\n} + \lambda_{\n}} - \e^{\mu_{\n} - \lambda_{\n}} = 2\e^{\mu_{\n}}\sinh(\lambda_{\n})$ is increasing in $\mu_{\n}<-\lambda_{\n}$, the choice $\mu_{\n}=-\lambda_{\n}$ gives an upper bound also when $\mu_{\n}<-\lambda_{\n}$. 
We have thus shown
\begin{align}
    \prob(\conderr_{\n,u,\hatgamma} \geq \epsi) \leq 1-\e^{-2\lambda_{\n}} = 1-\e^{2\Phi^{-1}(\epsi)\sigma_{\n}(\Btheta,\Btheta')}, \label{eq:firstpart}
\end{align}
which also works when $\mu_{\n}(\Btheta,\Btheta') \geq 0$. 

Next we obtain
\begin{align}
\begin{split}
    \sigma_{\n}(\Btheta,\Btheta') &= \sqrt{s_{\n}^2(\Btheta)+s_{\n}^2(\Btheta')-2c_{\n}(\Btheta,\Btheta'))} \\
    &\leq \sqrt{s_{\n}^2(\Btheta)+s_{\n}^2(\Btheta') + 2s_{\n}(\Btheta)s_{\n}(\Btheta')} \\
    &= s_{\n}(\Btheta)+s_{\n}(\Btheta'). 
\end{split} \label{eq:ub_s1s2}
\end{align}
We further bound (\ref{eq:ub_s1s2}) in terms of the ${\n}$ evaluations. We note that since the optimal method for minimising either the conditional or unconditional error also minimises $\sigma_{\n}(\Btheta,\Btheta')$, any method for selecting the ${\n}$ evaluation locations will give an upper bound. 

Suppose that $\Btheta_{1:t}$ consists of $t\geq 1$ evaluation locations that all are $\Btheta$. We denote $\onevector \eqdef [1,\ldots,1]\T\in\reals^t$ and $\Id$ is a $t\times t$ identity matrix. 
Then
\begin{align}
\begin{split}
    s_t^2(\Btheta) &= k(\Btheta,\Btheta) - k(\Btheta,\Btheta_{1:t})[k(\Btheta_{1:t},\Btheta_{1:t}) + \sigma_n^2(\Btheta)\Id]^{-1}k(\Btheta_{1:t},\Btheta) \\
    &= \sigmaf^2 - \sigmaf^2\onevector\T\Big[\onevector\onevector\T + \frac{\sigma_n^2(\Btheta)}{\sigmaf^2}\Id\Big]^{-1}\onevector. 
\end{split}\label{eq:smf}
\end{align}
We use Sherman–Morrison formula to compute $\Big[\onevector\onevector\T + \frac{\sigma_n^2(\Btheta)}{\sigmaf^2}\Id\Big]^{-1} = \frac{\sigmaf^2}{\sigma_n^2(\Btheta)}\Id - \frac{(\sigmaf^2/\sigma_n^2(\Btheta))^2}{1+t\sigmaf^2/\sigma_n^2(\Btheta)}\onevector\onevector\T$. After plugging this formula to (\ref{eq:smf}) and some straightforward calculations, we see that
\begin{align}
    s_t(\Btheta) &= \frac{\sigmaf}{\sqrt{1+{t\sigmaf^2}/{\sigma_n^2(\Btheta)}}}
    \leq \min\left\{\sigmaf, \frac{\sigma_n(\Btheta)}{\sqrt{t}}\right\}, \label{eq:bdx}
    %
\end{align}
which also works when $t=0$ because then ${\sigma_n(\Btheta)}/{\sqrt{t}}=\infty$ so that $s_0(\Btheta)=\sigmaf$. 

Suppose now that we have $m\geq0$ evaluations at $\Btheta$ and $m'\geq0$ evaluations at $\Btheta'$ such that $m+m'\leq {\n}$. Then
\begin{align*}
    s_{\n}(\Btheta)+s_{\n}(\Btheta') &\leq s_m(\Btheta)+s_{m'}(\Btheta') \\
    \overset{\eqref{eq:bdx}}&{\leq} \min\left\{\sigmaf,\frac{\sigma_n(\Btheta)}{\sqrt{m}}\right\} + \min\left\{\sigmaf,\frac{\sigma_n(\Btheta')}{\sqrt{m'}}\right\} \\
    &\leq \min\left\{2\sigmaf, \left( \frac{1}{\sqrt{m}} + \frac{1}{\sqrt{m'}} \right)\bar{\sigma}_n\right\},
\end{align*}
where the first inequality follows from the fact that the GP variance function is non-increasing as a set function of data, that is, adding more evaluation locations to data cannot increase the GP variance.  
If we choose in particular ${m}={m'}=\lfloor {\n}/2\rfloor$, we obtain
\begin{align}
    s_{\n}(\Btheta)+s_{\n}(\Btheta') \leq  2\min\left\{\sigmaf,\bar{\sigma}_n/\sqrt{\lfloor {\n}/2\rfloor}\right\} = c_{\n}. \label{eq:lastpart}
\end{align}
Combining (\ref{eq:firstpart}), (\ref{eq:ub_s1s2}) and (\ref{eq:lastpart}) produces the final bound (\ref{eq:conderr_bound1}).

To prove (\ref{eq:unconderr_bound1}), we first notice that $\unconderr_{\n,\hatgamma}(\Btheta,\Btheta')$ is decreasing function with respect to $\mu_{\n}$ 
when $\mu_{\n}\geq0$ so that the choice $\mu_{\n}=0$ maximises $\unconderr_{\n,\hatgamma}(\Btheta,\Btheta')$ in $\mu_{\n} \geq 0$. 

Suppose now $\mu_{\n} \leq 0$. For this case we already derived the formula:
\begin{align*}
    \unconderr_{\n,\hatgamma}(\Btheta,\Btheta') &= \Phi\left(\frac{\mu_{\n}}{\sigma_{\n}}\right) + \e^{\mu_{\n}+{\sigma_{\n}^2}/{2}}\left( \Phi\left( -\frac{\mu_{\n}+\sigma_{\n}^2}{\sigma_{\n}} \right) - 2\Phi(-\sigma_{\n})\right). 
\end{align*}
We maximise it with respect to $\mu_{\n}$ and use the inequality for $\sigma_{\n}$ derived in the first part of this proof and the fact that $\sigma_{\n}\mapsto \unconderr_{\n,\hatgamma}(\Btheta,\Btheta')$ is strictly increasing function for $\sigma_{\n}>0$ (which we see directly from (\ref{eq:conderr_gp})) to obtain
\begin{align*}
    \unconderr_{\n,\hatgamma}(\Btheta,\Btheta') &\leq \max_{\mu\leq 0}\left\{ \Phi\left(\frac{\mu}{\sigma_{\n}}\right) + \e^{\mu+{\sigma_{\n}^2}/{2}}\left( \Phi\left( -\frac{\mu+\sigma_{\n}^2}{\sigma_{\n}} \right) - 2\Phi(-\sigma_{\n})\right)\!\right\} \\
    &\leq \max_{\mu\leq 0}\left\{ \Phi\left(\frac{\mu}{c_{\n}}\right) + \e^{\mu+{c_{\n}^2}/{2}}\left( \Phi\left( -\frac{\mu+c_{\n}^2}{c_{\n}} \right) - 2\Phi(-c_{\n})\right)\!\right\},
\end{align*}
which is the desired bound. 
\end{proof}

\noindent\textbf{Justification for (\ref{eq:mean_var_fdiff}).}
We write
\begin{align}
    f(\Btheta')-f(\Btheta) &= \log\Normal(\Btheta'\cond \Bzeros, \BSigma) - \log\Normal(\Btheta\cond \Bzeros, \BSigma) = \half(\Btheta\T\BSigma^{-1}\Btheta - \Btheta\Tpr\BSigma^{-1}\Btheta').
\end{align}
Since $\BSigma$ is positive definite, we have Cholesky factorisation $\BSigma=\BL\BL\T$ so that $\BSigma^{-1}=\BL\Tinv\BL^{-1}$, where $\BL\Tinv\eqdef(\BL^{-1})\T=(\BL\T)^{-1}$. Consider random vectors $\Bpsi = \BL^{-1}\Btheta$ and $\Bpsi' = \BL^{-1}\Btheta'$. Clearly $\Btheta\T\BSigma^{-1}\Btheta=\Bpsi\T\Bpsi$, $\Btheta\Tpr\BSigma^{-1}\Btheta'=\Bpsi\Tpr\Bpsi'$ and $[\Bpsi\T, \Bpsi\Tpr]\T$ is Gaussian distributed. We compute $\mean(\Bpsi)=\BL^{-1}\mean(\Btheta)=\Bzeros$ and $\Var(\Bpsi)=\BL^{-1}\Var(\Btheta)\BL\Tinv=\BL^{-1}\BSigma\BL\Tinv=\Id$. 
we also have $\mean(\Bpsi')=\mean(\mean(\BL^{-1}\Btheta'\cond\Btheta))=\mean(\BL^{-1}\mean(\Btheta'\cond\Btheta))=\mean(\BL^{-1}\Btheta)=\Bzeros$ and 
\begin{align*}
    \Var(\Bpsi') &= \mean(\Var(\BL^{-1}\Btheta'\cond\Btheta)) + \Var(\mean(\BL^{-1}\Btheta'\cond\Btheta)) \\
    &= \mean(\BL^{-1}\Var(\Btheta'\cond\Btheta)\BL\Tinv) + \Var(\BL^{-1}\mean(\Btheta'\cond\Btheta)) \\
    &= \mean(s^2\BL^{-1}\BSigma\BL\Tinv) + \Var(\BL^{-1}\Btheta) \\
    &= s^2\Id + \BL^{-1}\BSigma\BL\Tinv \\
    &= (s^2+1)\Id.
\end{align*}
Since we can write $\Btheta'=\Btheta+\Br$, where $\Br\sim\Normal(\Bzeros,s^2\BSigma)$, it follows that 
\begin{align*}
    \cov(\Bpsi,\Bpsi') &= \cov(\BL^{-1}\Btheta, \BL^{-1}(\Btheta+\Br)) \\
    &= \cov(\BL^{-1}\Btheta,\BL^{-1}\Btheta) + \cov(\BL^{-1}\Btheta, \BL^{-1}\Br) \\
    &= \Var(\BL^{-1}\Btheta) + \BL^{-1}\cov(\Btheta,\Br)\BL\Tinv \\
    &= \Id.
\end{align*}
We have thus shown
\begin{align}
\begin{split}
    f(\Btheta')-f(\Btheta) &= \half(\Bpsi\T\Bpsi - \Bpsi\Tpr\Bpsi') = \half\sum_{i=1}^p (\psi_i^2 - \psi_i'{}^2), \\
    \begin{bmatrix}
    \Bpsi \\ \Bpsi'
    \end{bmatrix}
    &\sim \Normal_{2p}\left( \begin{bmatrix} \Bzeros \\ \Bzeros \end{bmatrix}, \begin{bmatrix} \Id & \Id \\ \Id & (s^2+1)\Id \end{bmatrix} \right).
\end{split} \label{eq:psijoint}
\end{align}
This shows that the distribution of $f(\Btheta')-f(\Btheta)$ does not depend on $\BSigma$ and is approximately Gaussian by the central limit theorem (which applies because the random variables $\psi_i^2 - \psi_i'{}^2, i=1,\ldots,p$ are independent and have finite variance by (\ref{eq:psijoint})) when $p$ is large. 

The expectation and variance are now obtained as\footnote{One could also write $\Bpsi\T\Bpsi - \Bpsi\Tpr\Bpsi' = [\Bpsi\T,\Bpsi\Tpr]\begin{bmatrix} \Id & \zeromatrix \\ \zeromatrix & -\Id \end{bmatrix} \begin{bmatrix} \Bpsi \\ \Bpsi' \end{bmatrix}$ and then use known formulas for computing the expectation and variance of this quadratic form to obtain the same results.}:
\begin{align*}
    \mean_{\Btheta,\Btheta'}(f(\Btheta')-f(\Btheta)) &= \half\sum_{i=1}^p \left(\mean(\psi_i^2)-\mean(\psi_i'{}^2)\right) = -\half ps^2, \\
    \Var_{\Btheta,\Btheta'}(f(\Btheta')-f(\Btheta)) &= \frac{1}{4}\sum_{i=1}^p \left(\Var(\psi_i^2) + \Var(\psi_i'{}^2) - 2 \cov(\psi_i^2,\psi_i'{}^2)\right) = \half ps^2(s^2+2), 
\end{align*}
\sloppy
where we have additionally used the facts $\Var(\psi_i^2) = \mean(\psi_i^4)-\mean(\psi_i^2)^2$, $\cov(\psi_i^2,\psi_i'{}^2) = \mean(\psi_i^2\psi_i'{}^2)-\mean(\psi_i^2)\mean(\psi_i'{}^2)$ and well-known formulas for the moments of zero-mean Gaussian distribution.

\noindent\textbf{Derivation of the revised upper bounds.} 
Proposition \ref{prop:bound1} shows the worst case upper bounds with respect to $\mu_{\n}$. 
Here we derive revised bounds where we instead use a specific distribution for $\mu_{\n}$ under the Gaussian target and proposal assumption. That is, we assume $\mu_{\n}$ follows (for each possible $\n$) the same distribution as $f(\Btheta')-f(\Btheta)$ in (\ref{eq:psijoint}). 
Under the assumptions of Proposition \ref{prop:bound1} and using (\ref{eq:conderrmaxmin}) we obtain 
\begin{align}
\begin{split}
    &\prob(\conderr_{\n,u,\hatgamma}(\Btheta,\Btheta') \geq \epsi) \\
    &\myquad= \int_{\reals} \prob(\conderr_{\n,u,\hatgamma}(\Btheta,\Btheta') \geq \epsi\cond\mu_{\n}) \pi(\mu_{\n}) \ud \mu_{\n} \\
    &\myquad= \int_{\reals} \left(\max\{ 1 - \e^{\mu_{\n} - \lambda_{\n}} ,0\} + \min\{ \e^{\mu_{\n} + \lambda_{\n}}-1 ,0\}\right) \pi(\mu_{\n}) \ud \mu_{\n} \\
    &\myquad\leq \int_{\reals} \left(\max\{ 1 - \e^{\mu_{\n} + \Phi^{-1}(\epsi)c_{\n}} ,0\} + \min\{ \e^{\mu_{\n} - \Phi^{-1}(\epsi)c_{\n}}-1 ,0\}\right) \pi(\mu_{\n}) \ud \mu_{\n} \\
    &\myquad\approx \frac{1}{r}\sum_{i=1}^r \left(\max\{ 1 - \e^{\mu_{\n}^{(i)} + \Phi^{-1}(\epsi)c_{\n}} ,0\} + \min\{ \e^{\mu_{\n}^{(i)} - \Phi^{-1}(\epsi)c_{\n}}-1 ,0\}\right), 
\end{split} \label{eq:extraboundconderr}
\end{align}
where $\mu_{\n}^{(i)}\simiid\pi(\mu_{\n}), i=1,\ldots,r$ (for any $\n$). Samples from $\pi(\mu_{\n})$ can be obtained by first drawing $[\psi_j^{(i)},\psi_j'^{(i)}]\T\simiid\Normal_2\Big(\begin{bmatrix}0\\0\end{bmatrix},\begin{bmatrix}1& 1\\1&s^2+1\end{bmatrix}\Big)$ for $j=1,\ldots,p$ and then computing $\mu_{\n}^{(i)}=\sum_{i=1}^p(\psi_j^{(i)2} - \psi_j'^{(i)2})/2$.

We write the unconditional error as $E(\mu_{\n},\sigma_{\n})$ when it is considered as a function of $\mu_{\n}$ and $\sigma_{\n}$. We then obtain
\begin{align}
\begin{split}
    \prob(\unconderr_{\n,\hatgamma}(\Btheta,\Btheta') \geq \epsi) =
    \prob(E(\mu_{\n},\sigma_{\n}) \geq \epsi) 
    &\leq \prob(E(\mu_{\n},c_{\n}) \geq \epsi) 
    %
    %
    \approx \frac{1}{r}\sum_{i=1}^r \indic_{E(\mu_{\n}^{(i)},c_{\n}) \geq \epsi},
\end{split} \label{eq:extraboundunconderr}
\end{align}
where $c_{\n}$ is as in Proposition \ref{prop:bound1} and where $\mu_{\n}^{(i)}\simiid\pi(\mu_{\n}), i=1,\ldots,r$ can be simulated as above.

\section{Additional illustrations} \label{appsec:add_viz}

Figures \ref{fig:loglik_gp2} and \ref{fig:loglik_gp3} show how a GP prior with a non-zero mean function and an additional evaluation near the right boundary of the parameter space, respectively, produce more intuitive estimates of the SL posterior in the illustrative example of Section \ref{subsubsec:post_estimator}. In the former case large uncertainty of the likelihood near the right boundary however remains. 

\begin{figure}[hbt] 
\centering
\includegraphics[width=0.8\textwidth]{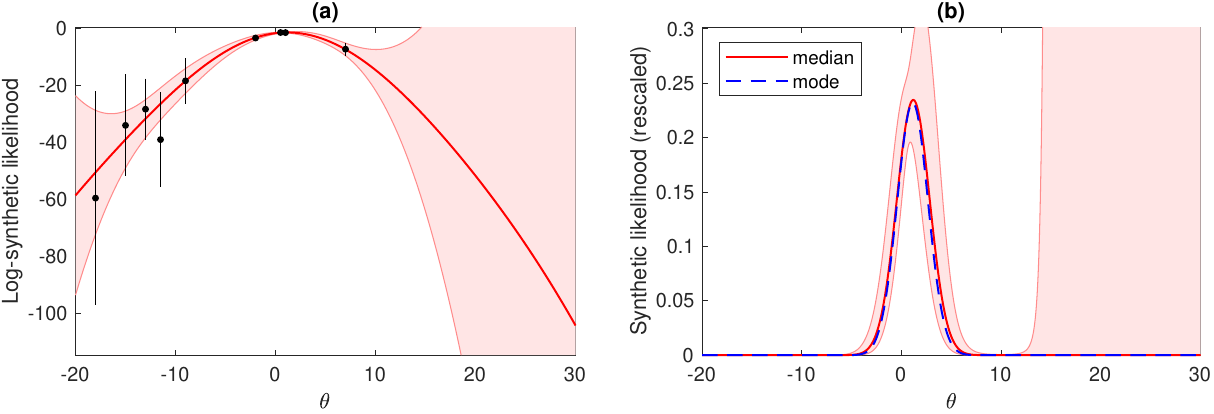}
\caption{As in Figure \ref{fig:loglik_gp1} but here a GP prior with a special mean function $m_0(\theta) = \beta_1 + \beta_2\theta + \beta_3\theta^2$ as described in Section \ref{subsec:gp_model} is used instead of a zero-mean GP prior.} \label{fig:loglik_gp2}
\end{figure}

\begin{figure}[hbt] 
\centering
\includegraphics[width=0.8\textwidth]{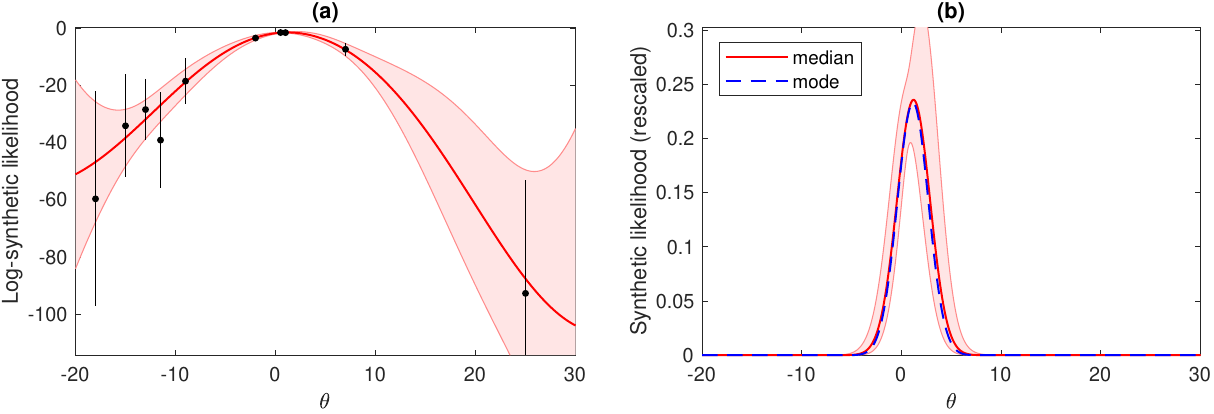}
\caption{As in Figure \ref{fig:loglik_gp1} but here an additional evaluation at $\theta=25$ is used for GP fitting.} \label{fig:loglik_gp3}
\end{figure}

\section{Additional discussion on modelling assumptions and implementation} \label{appsec:implementation_details}

We make some remarks on the Gaussian noise assumption (\ref{eq:noisemodel}) in Section \ref{appsec:gpremarks}. In Section \ref{appsec:gpmodellingdetails} we then discuss how potential practical difficulties with GP fitting are handled in our \gpmh{} algorithm. 

\subsection{Remarks on the Gaussian noise assumption} \label{appsec:gpremarks}

An unbiased, Gaussian distributed estimate of the \emph{log}-likelihood function is ideal for \gpmh{}. 
Such approximately Gaussian estimate is available in the likelihood-free generalised Bayesian updating case of Section \ref{subsec:bacterial_model}. 
The SL estimate of Section \ref{subsec:lfi} is in general not unbiased even if the Gaussianity assumption holds. 
The numerical results in \citet{Jarvenpaa2019_sl} however suggest that the resulting approximation error is small for typical $N$ and that the Gaussian noise assumption is sensible. 

An unbiased estimator of the (approximate) likelihood is available in some applications such as in the standard ABC case of Section \ref{subsec:abc}. 
Also, an unbiased estimate for the likelihood of the parameters of a state-space model with intractable transition density can be obtained via bootstrap particle filtering. 
The ABC estimate is considered by \citet{Wilkinson2014} and the state-space model case by \citet{Drovandi2018} in the context of GP surrogate modelling but without clear justifications. 

Suppose $\mean(p(\Btheta)) = \pi_{\text{lik}}(\Btheta) \eqdef \pi(\Bxobs\cond\Btheta)$ where the expectation is wrt.~the noise in the evaluation. 
Let $y(\Btheta) \eqdef \log p(\Btheta)$. In general, we have
\begin{align*}
    \mean(y(\Btheta)) = \mean(\log p(\Btheta)) \neq \log \mean(p(\Btheta)) = \log \pi_{\text{lik}}(\Btheta),
\end{align*}
that is, an unbiased estimator of the likelihood function cannot be transformed to an unbiased estimate of the log-likelihood by just taking the logarithm. Similar observations, but in the opposite direction and in the context of pseudo-marginal \mh{}, are made in \citet[Section~4.2]{Bardenet2017} and \citet[Section~2]{Llorente2021}.

Suppose further that $p(\Btheta)$ follows log-Normal distribution with the mean $\pi_{\text{lik}}(\Btheta)$ and variance $\sigma^2(\Btheta) \geq 0$. Then we have
\begin{align*}
    y(\Btheta) \sim \Normal\left(\log\left(\frac{\pi_{\text{lik}}(\Btheta)^2}{\sqrt{\pi_{\text{lik}}(\Btheta)^2 + \sigma^2(\Btheta)}}\right), \log\left(1 + \frac{\sigma^2(\Btheta)}{\pi_{\text{lik}}(\Btheta)^2}\right)\right)
\end{align*}
by the basic properties of the log-Normal distribution. It follows that
\begin{align*}
    \mean(y(\Btheta)) = \log\pi_{\text{lik}}(\Btheta) + \log\left(\frac{\pi_{\text{lik}}(\Btheta)}{\sqrt{\pi_{\text{lik}}(\Btheta)^2 + \sigma^2(\Btheta)}}\right)
    \leq \log\pi_{\text{lik}}(\Btheta).
\end{align*}
Hence, if $\sigma^2(\Btheta) \ll \pi_{\text{lik}}(\Btheta)^2$ in this case, then $y(\Btheta)$ might work as a reasonable approximation for an unbiased estimate of $\log\pi_{\text{lik}}(\Btheta)$. 

\subsection{Implementation details on handling log-likelihood evaluations} \label{appsec:gpmodellingdetails}

As mentioned in the main text, the log-likelihood function can behave irregularly in some boundary regions of the parameter space. 
This is usually not problematic for standard \mh{} (unless one tries to initialise \mh{} from such a region) because any proposal that results an infeasible log-likelihood evaluation is simply rejected. However, handling such log-likelihood values in GP-based methods requires care because including individual values with substantially different magnitudes to $\ddata_t$ typically leads to poor GP fits or numerical issues. In B(O)LFI methods one would additionally need to somehow ensure that the global optimum of the acquisition function does not lie on such problematic parameter regions where the GP fit cannot be trusted. 
In the following we discuss how this practical problem is handled in our \gpmh{} implementation.

We treat a log-likelihood evaluation $y_j$ at $\Btheta_j\in\Theta$ as \emph{invalid} in our algorithms if any of the following holds: $y_j$ is a complex number or ``NaN'' (not-a-number), $|y_j|>10^5$ or $\sigma_n(\Btheta_j)>10^3$. 
Invalid $(y_j,\Btheta_j)$ is never included to $\ddata_t$ and hence not used for GP fitting. 

Invalid evaluations can result at different stages of \gpmh{}. First, if $2t_{\text{init}}$ tries do not produce $t_{\text{init}}$ valid initial evaluations, the algorithm is terminated as of having too poor initialisation. 
%
If an invalid evaluation is obtained at the proposed point, the proposal is rejected and the algorithm continues as normal (neither $\ddata_t$ nor GP is updated). If the invalid evaluation is obtained at the current point then this means that the algorithm has likely proceeded to a point which should ideally have been rejected in an earlier iteration but were not based on the GP model. Because it may take long before the algorithm would manage to exit the problematic region, in this case the algorithm is terminated. 
%
\epoe{} often evaluates at a point which is neither the current nor the proposed point. If an invalid evaluation is encountered in this case, it is neglected and a new evaluation is obtained using \naive{} instead so that the algorithm can proceed as in the \naive{}/\epoer{} case. Unless terminated, the algorithm switches back to \epoe{} after this exceptional step. 

The above heuristic procedure often allows the algorithm to recover if a problematic likelihood evaluation is obtained. A user can rerun the algorithm using a better initialisation if it was terminated. 
Note that technically any parameter can produce an invalid evaluation under the Gaussian noise assumption (\ref{eq:noisemodel}). Apart from some unusual or pathological situations, this is however expected to happen so rarely in the highest density region that the potential bias caused by always neglecting a proposal that produced an invalid evaluation is not taken into account. 
A practical difficulty is that the range of the log-likelihood function in the highest density region is rarely known in advance and depends, among other things, on the scaling of the data space, possible model misspecification and whether potential additive constant terms in the log-likelihood formula (which cancel out in the \mh{} accept/reject test but still affect the GP modelling) are neglected. Our definition of the invalid log-likelihood evaluation is based on our numerical experiments and some simple analyses but may require adjustments in some other settings due to these reasons.

\section{Additional details on noisy synthetic densities} \label{appsec:toymodel_details}

We here summarise the details of the three 6D toy log-densities originally presented in \citet{Jarvenpaa2019_sl} (where also their 2D versions were used for illustration and are shown as Figure D.2 of their supplementary material) and used in Section \ref{subsec:toymodels} of this article. 
These log-densities, which we denote as $f_{\textnormal{6D}}$, are constructed so that $f_{\textnormal{6D}}(\Btheta) = f_{\textnormal{2D}}(\Btheta_{1:2}) + f_{\textnormal{2D}}(\Btheta_{3:4}) + f_{\textnormal{2D}}(\Btheta_{5:6})$. 
The 2D log densities $f_{\textnormal{2D}}$ are then defined so that the `Simple' log-density results when $f_{\textnormal{2D}}(\Btheta)=-\Btheta\T \BS^{-1}_{\rho}\Btheta/2$ where $\rho=0.25$, 
the `Banana' results when $f_{\textnormal{2D}}(\Btheta)=-[\theta_1, \theta_2 + \theta_1^2 + 1] \BS^{-1}_{\rho}[\theta_1, \theta_2 + \theta_1^2 + 1]\T/2$ where $\rho=0.9$ and, finally, 
the `Multimodal' log-density is obtained using $f_{\textnormal{2D}}(\Btheta)=-[\theta_1, \theta_2^2-2] \BS^{-1}_{\rho}[\theta_1, \theta_2^2-2]\T/2$ where $\rho=0.5$. 
Above we have defined $\BS_{\rho}\in\reals^{2\times2}$ so that $(S_{\rho})_{11}=(S_{\rho})_{22}=1$ and $(S_{\rho})_{12}=(S_{\rho})_{21}=\rho$. 
The 2D structure of these models is used to aid computing the ground-truth posterior but is not taken into account in the GP modelling. 

The above log-densities are additionally modified by specifying bounds for their six parameters. 
In practice this is achieved by using the above log-densities in the place of the target log-likelihood function for the \gpmh{} algorithm and by setting the uniform priors $\Unif([-16,16]^6), \, \Unif(\prod_{i=1}^3([-6,6]\!\times\![-20,2]))$ and $\Unif([-6,6]^6)$ for Simple, Banana and Multimodal densities, respectively. 
%
We use the following initial points in our experiments: $\Btheta^{(0)}=-8\onevector$ for Simple and $\Btheta^{(0)}=-3\onevector$ for both Banana and Multimodal. The initial covariance matrix of the Gaussian proposal is $\BSigma_{0}=\Id$ for all three test cases.

\section{Additional results and experiments} \label{appsec:experiment_details}

\subsection{Noisy synthetic densities} \label{appsec:synth_add_res} 

We here complement Section \ref{subsec:toymodels} with additional results. 
First, Figure \ref{fig:res6d_iter} demonstrates how the quality of the marginal posterior approximation of \gpmh{} and \mhblfi{} develops as a function of iteration $i$, that is, as more approximate \mh{} samples are obtained. 
We observe that $10^5$ iterations are enough for Banana and Multimodal while $10^4$ iterations is already sufficient for the Simple log-density. Since the results by \mhblfi{}, shown in the bottom row of Figure \ref{fig:res6d_iter}, are based on a separate MH sampling with chain length $10^5$, its convergence is not directly affected by $i$. The \mhblfi{} results during the initial iterations are nevertheless poor ($\TV\approx1$) because the number of collected log-likelihood evaluations is obviously small initially and the resulting GP approximation hence inaccurate. 
Both methods eventually produce approximations with similar quality with each $\epsi$. 

\begin{figure}[hbtp]
\centering
\includegraphics[width=0.97\textwidth]{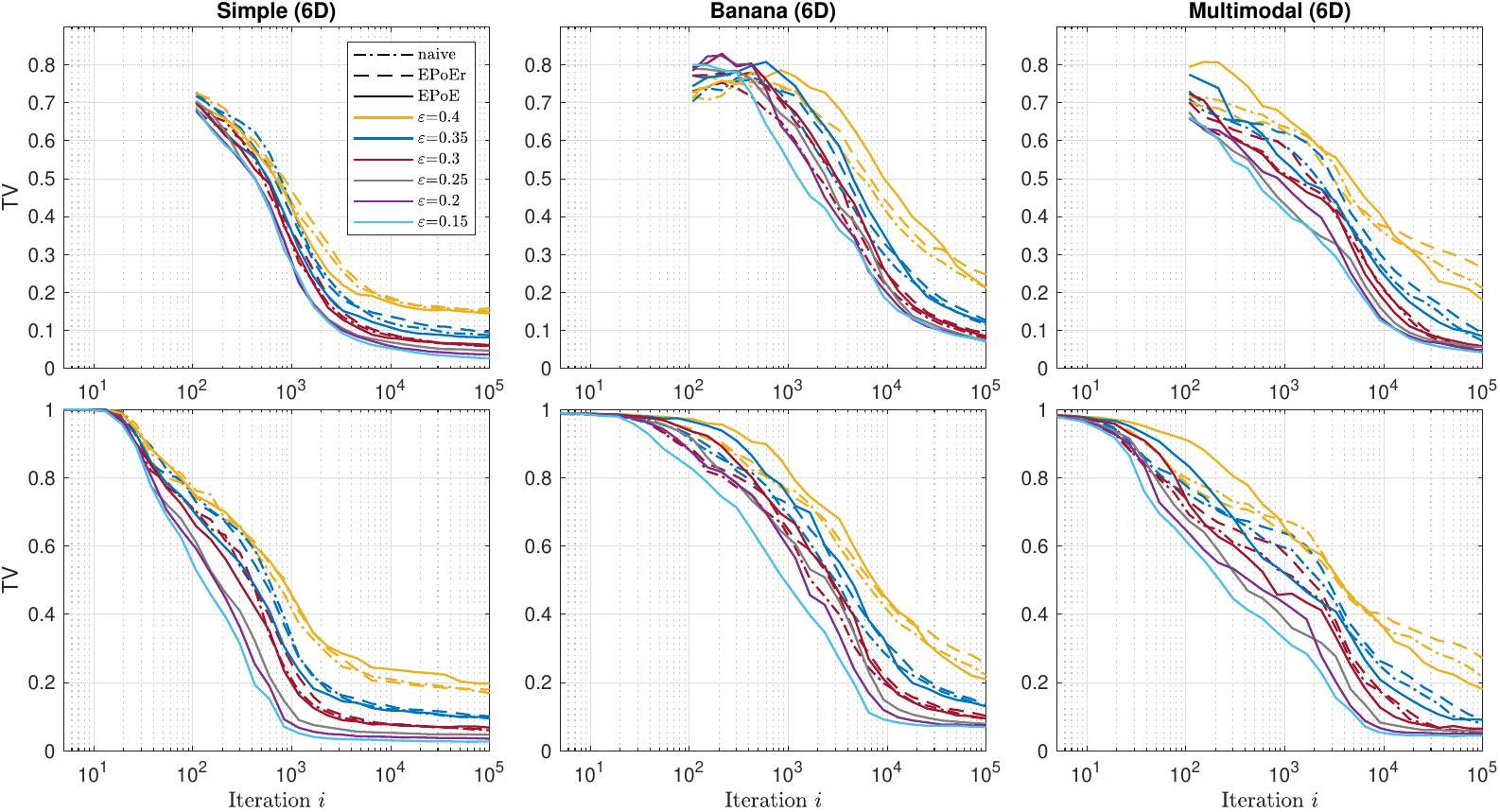}
\caption{Accuracy of the marginal posterior approximation as a function of iteration $i$ of Algorithm \ref{alg:gpmh}. Each line shows the median result over the $100$ repeats. \emph{Top row} shows \gpmh{} and the \emph{bottom row} the corresponding results by \mhblfi{}. The results by \gpmh{} at the early iterations ($i<10^2$) are not shown because sampling error is necessarily large there.} \label{fig:res6d_iter}
\end{figure}

Figure \ref{fig:res6d_cdfevals} shows that most of the log-density evaluations of the experiments of Section \ref{subsec:toymodels} occur already in the early stage of the algorithm which is neglected as burn-in. This is also the case with our other experiments and explains why \gpmh{} and \mhblfi{} produce similar results. The \epoe{} results with $\epsi=0.25, 0.2$ and $0.15$ are not shown as they were very similar to those with $\epsi=0.3$.

\begin{figure}[hbtp] 
\centering
\includegraphics[width=0.97\textwidth]{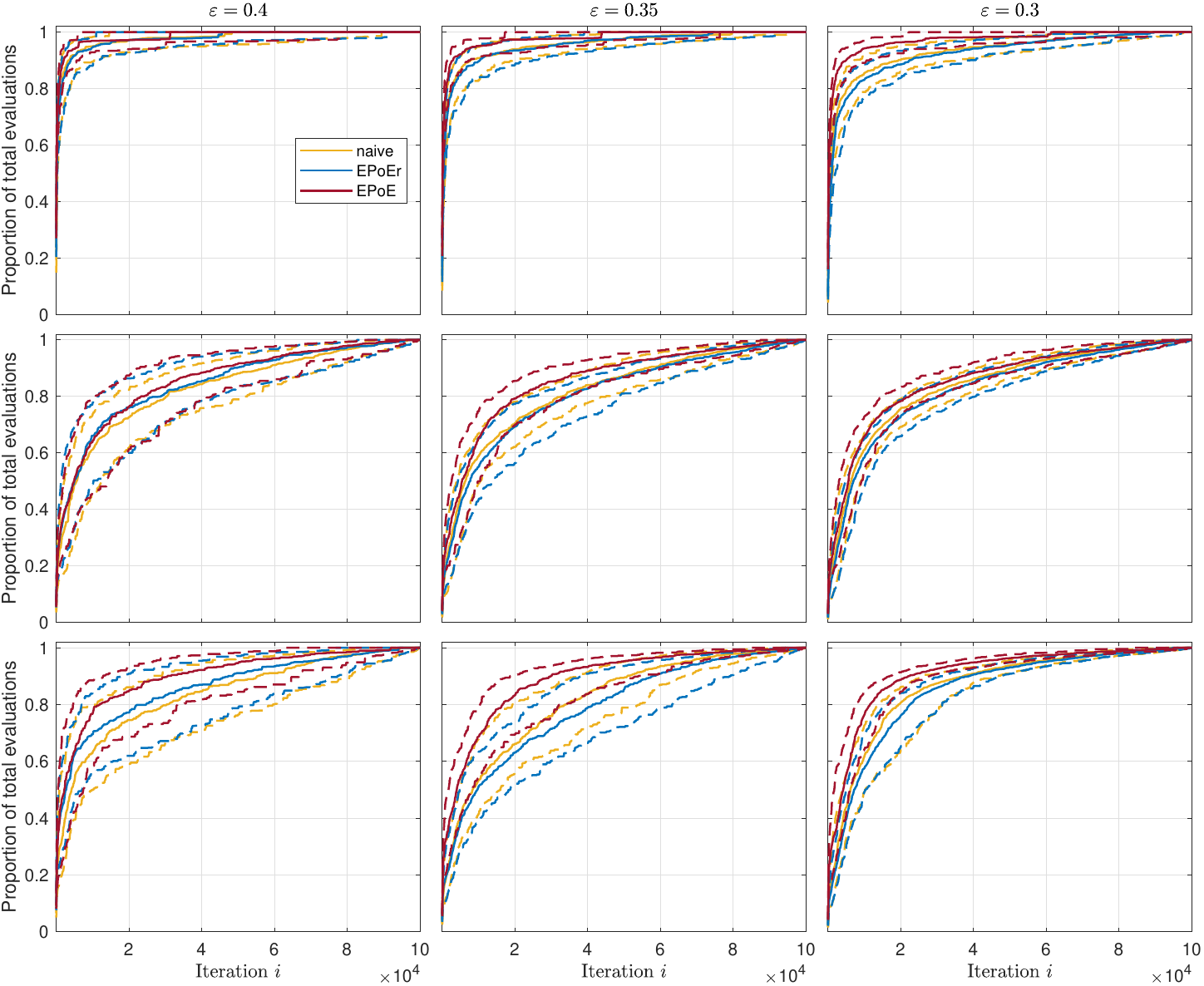}
\caption{Proportion of the total log-density evaluations collected as a function of \gpmh{} iteration $i$. \emph{Top row}: Simple (6D), \emph{middle row}: Banana (6D), \emph{bottom row}: Multimodal (6D) test density. The solid lines show the median and the dashed lines the $75\%$ quantile computed over the $100$ repeated runs.} \label{fig:res6d_cdfevals}
\end{figure}

Figure \ref{fig:res6d_iter_highernoise} demonstrates the quality of the marginal posterior approximation as in Section \ref{subsec:toymodels} but when the noise levels have been doubled to $\sigma_n=4$ for Simple and $\sigma_n=2$ for Banana and Multimodal. All methods still produce reasonable results but more evaluations are of course needed. Our threshold for the maximum number of evaluations $10^3$ is always met in Banana and Multimodal cases using \epoer{} and \naive{} with $\epsi=0.3$ or \epoe{} with $\epsi=0.15$. \epoe{} is clearly more sample-efficient than \epoer{} and \naive{}. \epoe{} produces similar or slightly better sample-efficiency than \BLFI{} with IMIQR. 

\begin{figure}[hbtp] 
\centering
\includegraphics[width=0.97\textwidth]{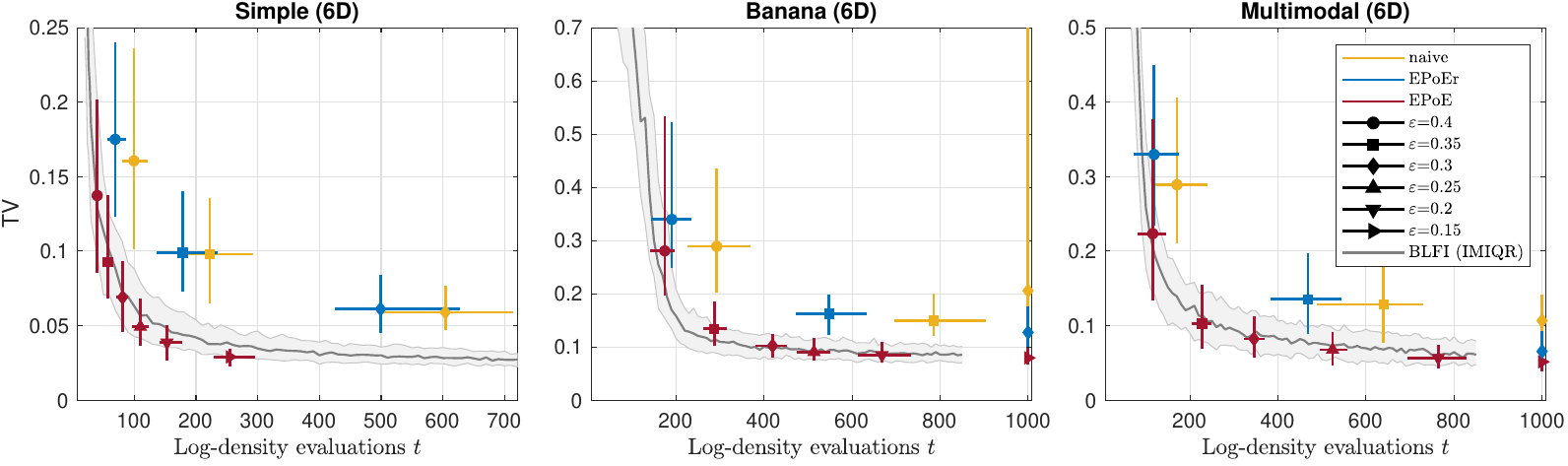}
\caption{Accuracy of the marginal posterior approximation as a function of the required log-density evaluations at the final iteration $i_{\text{MH}}=10^5$. These results are as in Figure \ref{fig:res6d_eval} except that the noise levels were doubled and only the \gpmh{} results are shown.} \label{fig:res6d_iter_highernoise}
\end{figure}

\subsection{Ricker model} \label{appsec:ricker_thetaricker}

We briefly consider the (scaled) Ricker model used before e.g.~by \citet{Wood2010,Gutmann2016,Price2018} to analyse LFI methods. 
In this model the population evolves as
\begin{equation*}
N_{t+1} = r N_{t} \exp(-N_{t} + \epsi_t), 
\end{equation*}
for $t=1,\ldots,T$. The process noise and observation models are as in Section \ref{subsubsec:thetaricker}. Our experimental set-up is the same as for the theta-Ricker model except for the following differences: 
The priors are $\log(r)\sim\Unif([3,5]), \phi\sim\Unif([4,20]),\sigma_{\epsi}\sim\Unif([0,0.8])$ and the ``true'' parameter is $\Btheta_{\text{true}}=(3.8,10,0.3)$. 
We set $T=50$, $t_{\text{init}}=10$, $i_{\text{MH}} = 10^5$, $\Btheta^{(0)}=(3.4,  8.0,  0.15)$ and $\BSigma_0=\diag(0.1,  1.0,  0.1)^2$.

Figure \ref{fig:res_ricker_eval} shows the results in a similar fashion as before. We observe that less log-SL evaluations are needed as for theta-Ricker. 
\BLFI{} with IMIQR produces slightly better sample-efficiency and accuracy than the \gpmh{} methods in this experiment. 
Figure \ref{fig:ricker_post_example} shows a typical example of the estimated posterior obtained using $\epsi=0.2$ and \epoe{}. 
We can see that both the marginals and the correlation structure is estimated well.

\begin{figure}[hbtp]
\centering
\includegraphics[width=0.75\textwidth]{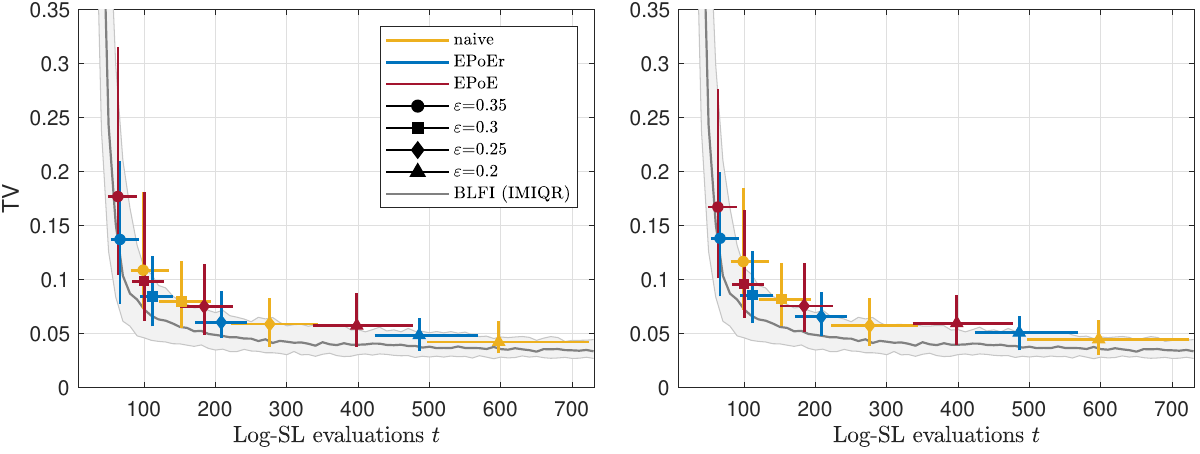}
\caption{Marginal posterior approximation accuracy as in Figure \ref{fig:res_thetaricker_eval} but for Ricker experiment and at $i_{\text{MH}}=10^5$ iterations. \emph{Left plot}: \gpmh{}, \emph{right plot}: \mhblfi{}.} \label{fig:res_ricker_eval}
\end{figure}

\begin{figure}[hbt] 
\centering
\includegraphics[width=0.7\textwidth]{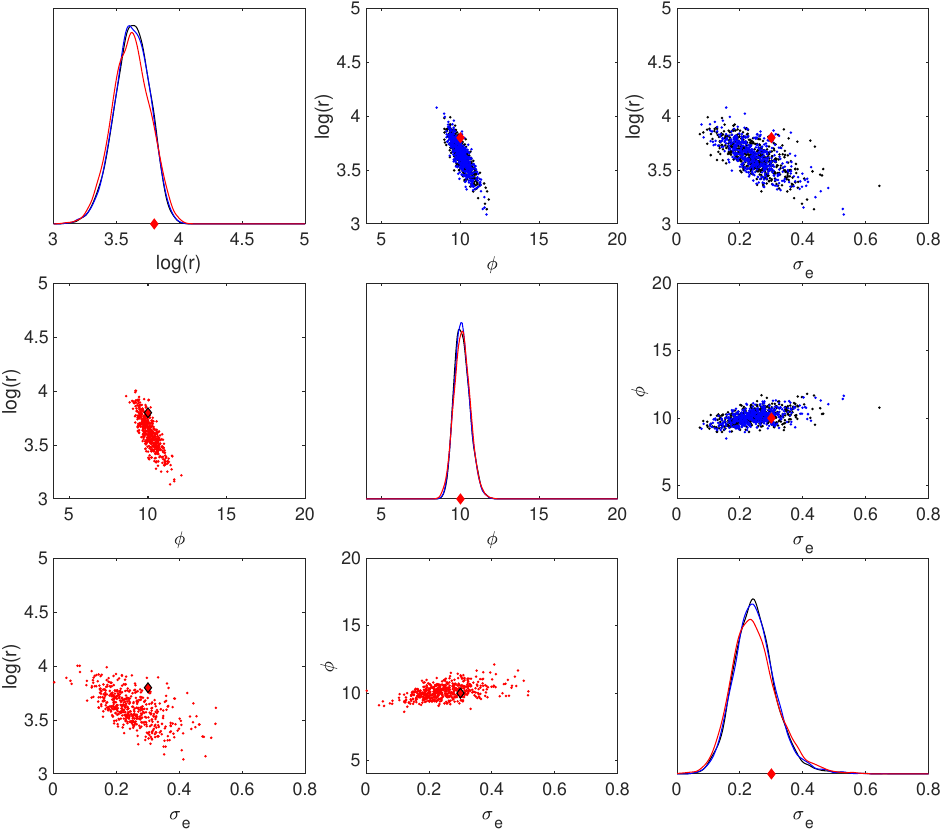}
\caption{Comparison of the ground-truth posterior (red dots/line) and a typical example of estimated posterior (black and blue dots/line) in the case of Ricker experiment. See the caption of Figure \ref{fig:thetaricker_post_example} for more detailed description.} \label{fig:ricker_post_example}
\end{figure}

\subsection{Theta-Ricker model}

We complement the theta-Ricker experiments of Section \ref{subsubsec:thetaricker} with some additional results. 
Figure \ref{fig:thetaricker_eval_example} demonstrates that \naive{}, \epoer{} and \epoe{} all tend to produce fairly similar evaluation locations. The evaluations by \epoe{} are however slightly more evenly distributed and diverse as those by the other two methods. 

\begin{figure}[hbtp] 
\centering
\includegraphics[width=0.85\textwidth]{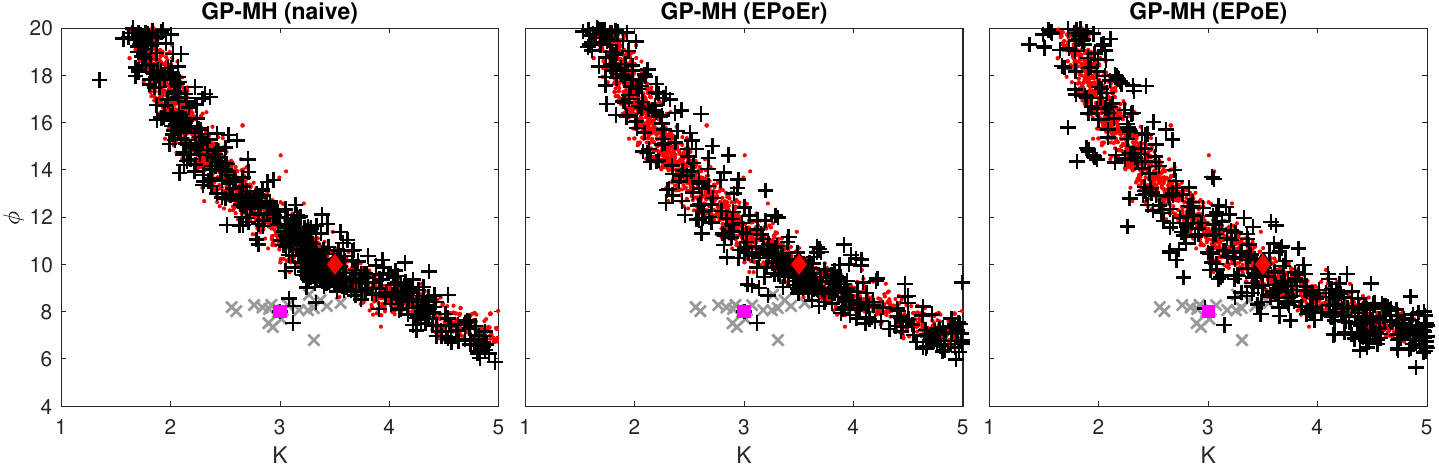}
\caption{Typical realisations of the log-SL evaluation locations projected to $(K,\phi)$-space in the theta-Ricker experiment. See the caption of Figure \ref{fig:simple_eval_example} for other details.} \label{fig:thetaricker_eval_example}
\end{figure}

Figure \ref{fig:res_thetaricker_iter} shows how the quality of the posterior approximation develops as a function of iteration $i$ of Algorithm \ref{alg:gpmh}. 
Finally, Figure \ref{fig:res_thetaticker_cdfevals} demonstrates, similarly to Figure \ref{fig:res6d_cdfevals}, the proportion of the collected log-SL evaluations. 

\begin{figure}[hbtp] 
\centering
\includegraphics[width=0.75\textwidth]{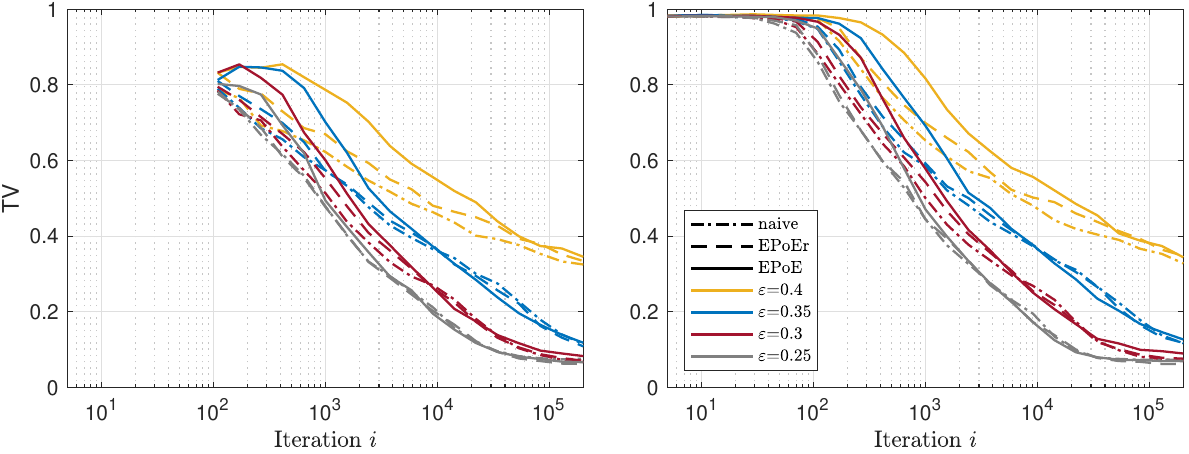}
\caption{Accuracy of the marginal posterior approximation in the theta-Ricker experiment as a function of iteration $i$ of Algorithm \ref{alg:gpmh}. \emph{Left plot}: \gpmh{}, \emph{right plot}: \mhblfi{}.} \label{fig:res_thetaricker_iter}
\end{figure}

\begin{figure}[hbtp] 
\centering
\includegraphics[width=0.97\textwidth]{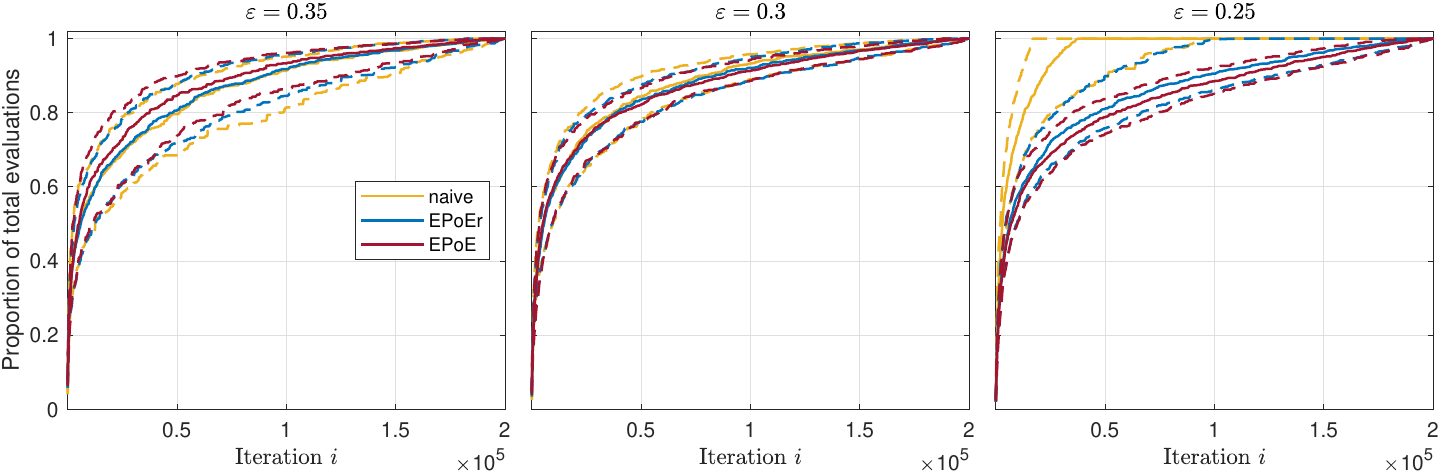}
\caption{Proportion of the total log-SL evaluations collected as a function of \gpmh{} iteration $i$ in the theta-Ricker experiment. The solid lines show the median and the dashed lines the $75\%$ quantile computed over the $100$ repeated runs.} \label{fig:res_thetaticker_cdfevals}
\end{figure}

\bibliography{references_v2}

\end{document}